\documentclass[a4paper,11pt,oneside,reqno]{amsart}   
\usepackage{a4wide}
\usepackage[top=1in, bottom=1.1in, left=1in, right=1in]{geometry}

\usepackage[foot]{amsaddr}
\usepackage{hyperref}
\usepackage{amssymb,amsmath,amscd,amsfonts,bbm,mathtools}
\usepackage{graphicx}
\graphicspath{{pic/} {newpic/} {newpic2/}}
\usepackage[boxed,vlined,longend]{algorithm2e}
\usepackage{algpseudocode}
\usepackage[usenames,dvipsnames]{xcolor}
\definecolor{violet}{rgb}{0.6,0.4,0.8}
\usepackage{enumerate,multicol,longtable,array}
\usepackage{wrapfig}
\usepackage[width=.75\textwidth]{caption}
\usepackage{kantlipsum}
\allowdisplaybreaks

\newcommand{\calD}{\mathcal{D}}
\newcommand{\calE}{\mathcal{E}}

\newcommand{\calG}{\mathcal{G}}

\newcommand{\calM}{\mathcal{M}}

\newcommand{\calP}{\mathcal{P}}
\newcommand{\calX}{\mathcal{X}}

\newcommand{\calZ}{\mathcal{Z}}

\newcommand{\frakm}{\mathfrak{m}}
\newcommand{\frakn}{\mathfrak{n}}

\newcommand{\calW}{\mathcal{W}}

\newcommand{\rmd}{\mathrm{d}}

\newcommand{\N}{\mathbb{N}}
\newcommand{\R}{\mathbb{R}}

\newcommand{\bbP}{\mathbb{P}}
\newcommand{\ep}{\varepsilon}

\def\l{\lambda}

\numberwithin{equation}{section}
\makeatletter
\providecommand*{\cupdot}{%
  \mathbin{%
    \mathpalette\@cupdot{}%
  }%
}
\newcommand*{\@cupdot}[2]{%
  \ooalign{%
    $\m@th#1\cup$\cr
    \sbox0{$#1\cup$}%
    \dimen@=\ht0 %
    \sbox0{$\m@th#1\cdot$}%
    \advance\dimen@ by -\ht0 %
    \dimen@=.5\dimen@
    \hidewidth\raise\dimen@\box0\hidewidth
  }%
}
%%***********************************

\newtheorem{proposition}{Proposition}

%***********************************
 
\begin{document}
%\begin{center}
%\huge\textbf{ A stochastic individual-based model \\for immunotherapy of cancer} \par \bigskip\bigskip
%\normalsize
%\textsf{ 
%	Martina Baar\textsuperscript{*}\textsuperscript{1}, 
%	Loren Coquille\textsuperscript{*}\textsuperscript{1},
%	Hannah Mayer\textsuperscript{*}\textsuperscript{1}, \\
%	Michael H\"olzel\textsuperscript{2}, 
%	Meri Rogava \textsuperscript{3},
%    Thomas T\"uting\textsuperscript{3} \\
%    and Anton Bovier \textsuperscript{1}
%} 
%\end{center}
%\bigskip
%\footnotesize
%\noindent\textsf{ 
%	\textsuperscript{1}Institute for Applied Mathematics, 	Bonn University \\
%	\textsuperscript{2}Institute for Clinical Chemistry and Clinical Pharmacology, University Hospital, Bonn University\\
%	\textsuperscript{3}Laboratory of Experimental Dermatology, Department of Dermatology and Allergy, University Hospital,	Bonn University
% }
%\bigskip
%
%    \textsuperscript{*} Equal contribution
% \today

\title[A stochastic individual-based model for immunotherapy of cancer]{A stochastic individual-based model\\ for immunotherapy of cancer}

\author{Martina Baar$^{*}$}
\author{Loren Coquille$^{*}$}
\author{Hannah Mayer$^*$}
%\address[MB, LC, HM, AB]{Institute for Applied Mathematics, Bonn University, 
%	Endenicher Allee 60, 53115 Bonn, Germany.}

\author{Michael H\"olzel}
%\address{MH: Institute for Clinical Chemistry and Clinical Pharmacology, University Hospital, Bonn University, 
%	Sigmund-Freud Strasse 25, 53127 Bonn, Germany.}

\author{Meri Rogava}
\author{Thomas T\"uting}
%\address{MR, TT: Laboratory of Experimental Dermatology, Department of Dermatology and Allergy, University Hospital, Bonn University, Sigmund-Freud-Strasse 25, 53127 Bonn, Germany.}

\author{Anton Bovier}

%\authorrunning{Baar, Coquille, Mayer, H\"olzel, Rogava, T\"uting, Bovier}
%

%
%%\date{Received: date / Accepted: date}
\date{Version of \today}

\begin{abstract} 
We propose an extension of a standard stochastic
individual-based model in population dynamics which broadens the range
of biological applications. Our primary motivation is modelling of 
immunotherapy of malignant tumours.  In this context  the different actors, T-cells, cytokines or cancer cells, are modelled as single particles  (individuals) in the stochastic system.
The main expansions of the model are distinguishing cancer cells by
phenotype and genotype,  including environment-dependent phenotypic plasticity that does not affect the genotype,  taking into account the effects of therapy and  introducing a competition term which lowers the reproduction rate of an
individual in addition to the usual term that increases
its death rate. 
 We illustrate the
new setup by using it to model various phenomena arising in immunotherapy. 
Our aim is twofold: on the one hand, we show that the interplay of
genetic mutations and phenotypic switches on different timescales as
well as the occurrence of metastability phenomena raise new
mathematical challenges.  
On the other hand, we argue why understanding purely stochastic events
(which cannot be obtained with deterministic models) may help to
understand the resistance of tumours to therapeutic approaches and
may have non-trivial consequences on tumour treatment protocols. This is supported through numerical simulations.
\end{abstract}

\keywords{adaptive dynamics, stochastic individual-based models,  cancer, im\-muno\-the\-rapy, \mbox{T-cells}, phenotypic plasticity, mutation.}
\subjclass[2000]{60K35, 92D25, 60J85.}

\thanks{
	\\We acknowledge financial support from the German Research Foundation (DFG) 
		through the
		\emph{Hausdorff Center for  Mathematics}, the Cluster of Excellence \emph{ImmunoSensation}, 
		the Priority Programme SPP1590 \emph{Probabilistic Structures in Evolution}, and the 
		Collaborative Research Centre 1060 \emph{The Mathematics of Emergent Effects}.
		We thank Boris Prochnau for programming. \\	
		\\
		\noindent MB, LC, HM, AB: Institute for Applied Mathematics, Bonn University\\
			\url{mbaar@uni-bonn.de},
			\url{lcoquill@uni-bonn.de},
			\url{hannah.mayer@uni-bonn.de},
			\url{bovier@uni-bonn.de}\\
		MH: Institute for Clinical Chemistry and Clinical Pharmacology, University Hospital, Bonn University\\
		\url{michael.hoelzel@ukb.uni-bonn.de} \\
			MR: Department of Dermatology, University Hospital, Magdeburg University\\
		\url{meri.rogava@ukb.uni-bonn.de}\\
		TT: 
		Department of Dermatology, University Hospital, Magdeburg University\\
		Laboratory of Experimental Dermatology,  University Hospital,	Bonn University\\
		\url{thomas.tueting@ukb.uni-bonn.de} \\
		\\$^*$ Equal contributions.
	}

\maketitle

\newpage
\setcounter{tocdepth}{2}
\tableofcontents

\section{Introduction}

Immunotherapy of cancer received a lot of attention in the medical as well 
as the mathematical modeling communities during the last decades \cite{Nowell:sf,Kuznetsov:1994fk,Eftimie:2011uq,hanahan2,GilVerGat,Holzel:2013ys}. Many different therapeutic approaches were 
developed and tested experimentally.
As for the classical therapies such as chemo- and radiotherapy,  \emph{resistance} is an important issue also for immunotherapy:
although a therapy leads to
an initial phase of remission, very often a relapse occurs. 
The main driving forces for resistance
are considered to be the genotypic and phenotypic heterogeneity of
tumors, which may be enhanced during therapy, see \cite{Holzel:2013ys,MarAlmPol,GilVerGat} and 
references therein. A tumor is a complex tissue which evolves in mutual influence with its environment~\cite{CorBis}. 

In this article, we consider the example of melanoma (tumors
associated to skin cancer) under T-cell therapy. Our work is
motivated by the experiments of Landsberg et al. \cite{Landsberg:2012vn}, which investigate
melanoma in mice under  \emph
{adoptive cell transfer} (ACT) therapy. This
therapeutic approach involves the injection of T-cells  which recognize a melanocyte-specific antigen and are able to kill differentiated types of melanoma cells. The therapy
induces an inflammation and the melanoma cells react to this environmental change by switching their
phenotype, i.e.\ by passing from a differentiated phenotype to
a dedifferentiated one (special markers on the cell surface disappear).
The T-cells recognize the cancerous cells through the markers which are
 downregulated in the dedifferentiated types. Thus, they are not capable of killing these cancer cells, 
 and a relapse is often observed. 
The phenotype switch is enhanced,
if pro-inflammatory cytokines, called TNF-$\alpha $ (Tumor Necrosis Factor), are
present. A second reason for the appearance of a relapse is that the T-cells become 
exhausted and are not working efficiently anymore. This problem was addressed by re-stimulation of the T-cells, but this 
led only to a delay in the occurrence of the  relapse. Of course, other immune cells and cytokines are also present. However,  according to the careful control experiments, their influence can be neglected in the context of the phenomena considered here. 

Cell division is not
required for switching, and switching is reversible. This means that the melanoma cells can recover their
initial (differentiated) phenotype \cite{Landsberg:2012vn}.  The switch is thus a purely phenotypic change which is not induced by a mutation.   
The state of the
tumor is a mixture of differentiated and
dedifferentiated cells. 
One possibility to avoid a relapse is to inject two
 types of T-cells (one specific to
differentiated cells as above, and the other
specific to dedifferentiated cells) as suggested also in \cite{Landsberg:2012vn}.  

In this paper, we propose a quantitative mathematical model that can reproduce the phenomena observed  in the experiments of \cite{Landsberg:2012vn}, 
and which allows to 
simulate different therapy protocols, including some where several types of T-cells are used. 
The model we propose is an extension of the \emph{individual-based} stochastic models for adaptive dynamics that were introduced in  
Metz et al.\ \cite{metz-geritz-al-96} and developed and analyzed by many authors in recent years (see e.g. \cite{BolPac1, 
BolPac2, DieLaw,champagnat-ferriere-al-01,Cha2006,ChaFerMel,BovWan2013,ChaJabMel,CHLM}) to the setting of tumor 
growth under immunotherapy. Such models are also referred to as \emph{agent-based} models or \emph{particle systems}. Note that we use the term individual for T-cells, cytokines or cancer cells, in particular not for patients or mice.

These stochastic models describe the evolution of interacting cell populations, in which the relevant events for each individual (e.g.\ birth and death) occur randomly.  

It is well known that in the limit of large cell-populations, these models are approximated by deterministic kinetic rate models, 
which are widely used in the modeling of cell populations. 
However, these approximations are inaccurate and fail to account for important phenomena, if the numbers of some 
sub-populations become small. In such situations, random fluctuation may become highly significant and completely alter the 
long-term behavior of the system.
For example,  in a phase of remission during therapy, the cancer and the T-cell populations drop to a
low level and may die out due to fluctuations.

A number of (mostly deterministic) models have been proposed that describe the development of a tumor
under treatment, focusing on different aspects. For example, a deterministic model for ACT therapy is presented  
in \cite{Eftimie:2011uq}. Stochastic approaches were
used to understand certain aspects of tumor development, for
example rate models \cite{Gupta} or multi-type branching
processes; see the book by Durrett \cite{Durrett15} or  \cite{Boz,AntKrap,Durrett13}.  To our knowledge, however, it is a novel 
feature of our models to describe the coevolution of immune- and tumor cells 
taking into account both interactions and phenotypic plasticity. 
Our models can help understanding the interplay of therapy and resistance, in particular in the case of immunotherapy,
and may be used to predict successful therapy protocols.
The main expansions of our models are the following: 

\begin{itemize}
\item Two types of transitions are allowed: genotypic mutations and phenotypic switches. The characteristic timescale for a mutation to occur can be significantly longer than a timescale for epigenetic transitions. 
\item Phenotypic changes may be affected by the environment which is not modeled deterministically as in \cite{ChaJabMel} but as particles undergoing the random dynamics as well.
\item A predator-prey mechanism is included (modeling the interaction of cancer cells and immune cells). The defense strategies of the prey are not modeled by different interaction rates as in \cite{CHLM} but by actually modeling the escape strategy, namely switching.
\item A birth-reducing competition term is included which takes account of the fact that competition between individuals may also affect their reproduction behavior.
\end{itemize}

The paper is structured as follows.
In Section \ref{sec-model} we define the model and state the convergence towards a quadratic system of ODEs in the large population limit.
In Section \ref{sec-relapse}, we first present an example which qualitatively models the therapy carried out in
Landsberg et al.\ \cite{Landsberg:2012vn} (Subsection \ref{sec-1tcell}). We point out a phenomenon of relapse caused by random fluctuations:
in the phase of remission, the typical number of T-cells is small, and random fluctuations can cause their extinction, allowing for a growth of the tumor. 
As a second example (Subsection \ref{sec-2tcells}) we study the T-cell therapy with two types of
T-cells. In this case an ever richer class of possible behavior
occurs: either the tumor is cured, or one type of T-cells dies out, or both types of T-cells do, or all 
populations survive for a given time. 
Subsection \ref{sec-bio-param} presents our choice of physiologically reasonable parameters, and the reproduction of experimental observations together with predictions.
In Section \ref{sec-mutant} we consider the case of rare mutations.
Without therapy (Subsection \ref{sec-PES}), we first show how to extract an effective Markov chain on the space of
genotypes, whose transition rates are given by the asymptotic behavior
of a faster Markov process on the space of phenotypes. In particular, we define a notion of invasion fitness in this setting, and describe how we can obtain a generalization of the Polymorphic Evolution Sequence introduced in \cite{ChaMel2011}. 
Second (Subsection \ref{sec-brc}), we study the interplay of mutation and therapy. We consider birth-reducing competition
between tumor cells and show that the
appearance of a mutant genotype may be enhanced under
treatment.

\newpage

\section{The model}\label{sec-model}

We introduce a general model, which contains three types of actors:
\begin {itemize}
\item \textit {Cancer cells:} each cell is characterized by a genotype and a
phenotype. These cells can divide (with or without mutation), die (due to age, competition or therapy)  and switch their phenotype.
We assume that the switch is inherited by the descendants of the switched cells.
\item \textit {T-cells:} each cell can  divide, die and produce
cytokines.
\item \textit {Cytokines:} each messenger can vanish and influence the switching of cancer cells.
\end {itemize}

The trait space, $\calX$, is a finite set of the form
\begin{align}
&\calX=\calG\times\calP\cupdot\calZ\cupdot\calW 
=\left\{g_1,\ldots,g_{|\calG|}\right\}\times\left\{p_1,\ldots,p_{|\calP|}\right\}\nonumber \\ 
&\hspace{0.7cm}\cupdot \left\{z_1,\ldots,z_{|\calZ|}\right\}
\cupdot \left\{w_1,\ldots,w_{|\calW|}\right\} 
\end{align}
where $(g,p)\in \calG \times \calP$ denotes a cancer cell with genotype $g$
and phenotype $p$, $z\in\calZ$ a T-cell of type $z$ and $w\in\calW$
a cytokine of type $w$. We write $|\cdot|$ for the number of elements of a set and $\cupdot$ for 
disjoint unions of sets.

\subsection{General notations and parameters}\label{notations}

A \emph{population} at time $t\in \R_+$ is represented by  the measure 
\begin{equation}
\nu_t^K=\frac 1K \sum_{x\in \calX} \nu_t(x) \delta_x,
\end{equation}
where $\nu_t(x)$ is the number of individuals of type $x$ at time $t$ and $\delta_x$ denotes the Dirac measure 
at $x$. Here, $K$ is a parameter  that allows to scale the population size and is usually called \emph{carrying-capacity}
		of the environment. The dynamics of the population is described by a continuous time Markov process, $(\nu_t^K)_{t\geq0}$, with the following rates:
%by several independent exponential waiting times attached to each individual and  depending on the population and the parameters as indicated below: 
%\vspace{-0.5em}
%\begin{enumerate}
%\item

\vspace{0.2cm}
\noindent  Each cancer cell of type $(g,p)$ is characterized by
\begin{itemize}
	\item natural birth and death rates:  $b(p)\in\R_+$ and $d(p)\in\R_+$. 
	\item competition kernels: $c(p,\tilde p)K^{-1}\!\in\!\R_+$ and  \mbox{$c_b(p,\tilde p)K^{-1}\!\in\!\R_+$}, where the first term 			increases the death rate and the second term, called birth-reducing
	competition, lowers the birth rate of a cancer cell of phenotype $p$ in
	presence of a cancer cell of phenotype $\tilde p$.   
	If the total birth rate is already at a level 0, then $c_b(p,\tilde p)K^{-1}\in \R_+$ acts as an additional death rate.
	\item therapy kernel: $t(z,p)K^{-1}\in\R_+$ additional death rate of
	a cancer cell of phenotype $p$ due to the presence of a T-cell of
	type $z$. In addition,  $\ell^{\text{\tiny kill}}_w(z,p)\in\N_0$ cytokines of type $w$ are deterministically produced at each  killing event.
	\item switch kernels: $s^g(p,\tilde p)\in\R_+$ and $s_w^g(p,\tilde p)K^{-1}\in\R_+$ denote the natural and 
	cytokine-induced switch kernels from a cancer cell of type
	$(g,p)$ to one of type $( g,\tilde p)$. 
	\item mutation probability and law: $\mu_g\in[0,1]$  denotes the probability that a birth event of a cancer cell of 			genotype $g$ is a mutation.
	$m((g,p),(\tilde g,\tilde p))\in[0,1]$ encodes the
	probability that, whenever a mutation occurs, a cancer cell of
	type $(g,p)$ gives birth to a cancer cell of type $(\tilde g,\tilde
	p)$. By definition $m((g,p),( g,p))=0$  and $\sum_{\tilde g, \tilde p} m((g,p),(\tilde g,\tilde p))=1$.
\end{itemize}
%\item
Each T-cell of type $z$  is characterized by
\begin{itemize}
	\item natural birth and death rates: $b(z) \in\R_+$  and $d(z) \in\R_+$. %of a T-cell of type $z$,
	\item reproduction kernel: $b(z,p)K^{-1} \in\R_+$ denotes the rate of reproduction of a T-cell with trait $z$
	in presence of a cancer cell of phenotype $p$. In addition, $\ell^{\text{\tiny prod}}_w(z,p) \in\N$ cytokines of type $w$ are deterministically produced at each reproduction event.
\end{itemize}
%\item 
Each cytokine of type $w$ is characterized by 
\begin{itemize}
	\item  natural death rate: $d(w) \in\R_+$.
	\item  the molecules are produced when a cancer cell dies due to therapy or a T-cell reproduces.
\end{itemize}

Note that the relation between $\calG$ and $\calP$ is encoded in the switch kernels. They specify which phenotypes are expressed by a given genotype and influence the proportions of the different phenotypes in a (dynamic) environment. 

Figure \ref{fig-rates} provides a graphical representation of the transitions for a population with trait space $\calX= \{x=(g,p),y=(g,p')\}\cupdot \{ z_x\}\cupdot \{w\}$, which constitute our model for the ACT therapy described in Landsberg et al.\ \cite{Landsberg:2012vn}, see Subsection \ref{sec-1tcell}.

\begin{figure}[h!]
	\centering
	\begin{minipage}{.6\textwidth} 
		\includegraphics[width=\textwidth]{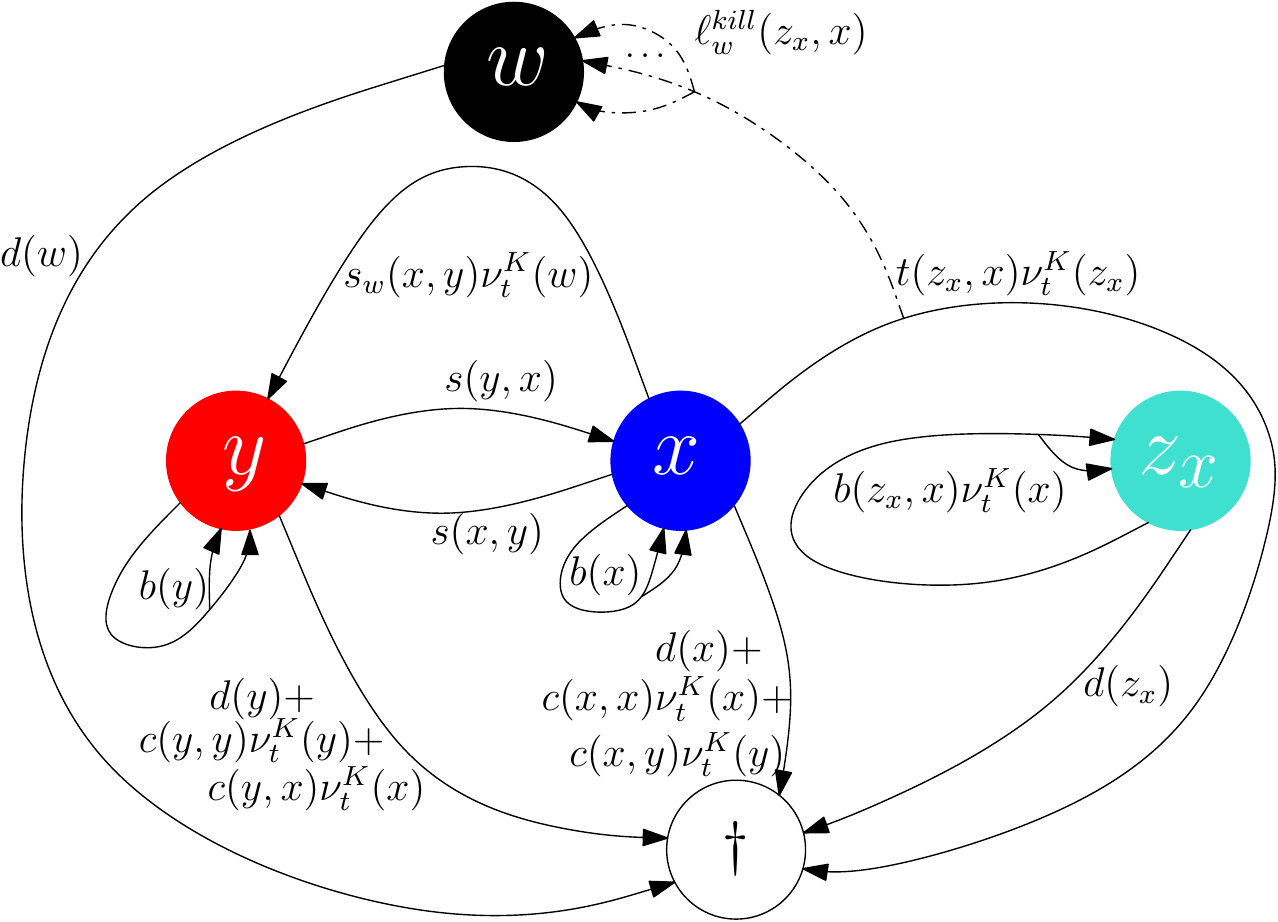} 
	\end{minipage}
	\caption{\small {Dynamics of the process (without mutations) modelling the experiments described in \cite{Landsberg:2012vn}. Here, $x$ denotes differentiated melanoma cells, $y$ dedifferentiated melanoma cells, $z_x$ T-cells and $w$ TNF-$\alpha$. At each arrow the rate for occurrence of the corresponding event is indicated (e.g.\ birth is illustrated with two arrowheads and death with an arrow directed to $\dagger$).}\label{fig-rates}}
\end{figure}

\subsection{The dynamics}

Let  $\mathcal M^K(\calX)\equiv\left\{\tfrac 1K \sum_{i=1}^n \delta_{x_i}: n\in\N, x_1,\ldots, x_n\in\calX\right\}$, i.e. the set of finite point measures on $\calX$ rescaled by $K$.
Then, for each $K\in\N$, the dynamics of the $\mathcal M^K(\calX)$-valued Markov process,
${(\nu^K_t)}_{t\geq0}$, describing the evolution of 
the population at each time $t$, can be summarized as follows: 

At time $t=0$ the population is characterized by a given measure $\nu^K_0 \in \mathcal M^K(\calX)$.
Each individual present at time $t$ has several exponential clocks with intensities depending on its trait $x\in\mathcal X$  and on the current state of the system. We use the shorthand notation $\nu_t^K(p)\!=\!\sum_{g\in\calG}\nu_t^K(g,p)$ 
and  $\lfloor\cdot \rfloor_\pm$ to denote the positive/negative part of the argument.
\vspace{0.2cm}
\begin{enumerate}
	\item
	Each present cancer cell of trait $(g,p)\in\calG\times\calP$ has \\[0.5em]
	-\: a \emph{clonal reproduction} clock with rate  
	\quad $ (1-\mu_g)\,\left\lfloor b(p)-\sum_{\tilde p\in\calP}c_b(p,\tilde p)\nu^K_t{(\tilde p)}\right\rfloor_+$.\\[0.5em]
	Whenever this one rings, an additional cancer cell of the same trait
	$(g,p)$ appears.\\[0.5em]
	-\: a \emph{mutant reproduction} clock with rate \quad
	$\mu_g\,\left\lfloor b(p)-\sum_{\tilde p\in\calP}c_b(p,\tilde  p)\nu^K_t{(\tilde p)}\right\rfloor_+$.\\[0.5em]
	Whenever  this one rings, a cancer cell of trait $(\tilde
	g,\tilde p)$ appears according to the kernel $m((g,p),(\tilde g,\tilde p))$. \\[0.5em]
	-\: a \emph{natural mortality} clock with rate $ d(p)+\sum_{\tilde p\in\calP} c(\tilde p,p) \nu^K_t{ (\tilde p)}
	+\left\lfloor b(p)-\sum_{\tilde p\in\calP}c_b(p,\tilde p)\nu^K_t{(\tilde p)}\right\rfloor_-$.\\[0.5em]
	Whenever  this one rings, the cancer cell disappears.\\[0.5em]
	-\: a \emph{therapy mortality} clock of rate
	\quad$\sum_{z\in\calZ}t(z,p) \nu^K_t(z)$.
	\\[0.5em]Whenever  this one  rings, the cancer cell disappears and  $\ell^{\text{\tiny kill}}_w(z,p)$ cytokines of type $w$ appear according to the weights $t(z,p) \nu^K_t(z)$.\\[0.5em]
	-\: a \emph{natural and cytokine-induced switch} clock with rate 
	$\sum_{\tilde p\in\calP} (s^g(p,\tilde p) + \sum_{w\in\calW}s^g_w(p,\tilde p) \nu^K_t(w))$. \\[0.5em]
	Whenever  this one rings, this cancer cell  disappears and a new cancer cell of trait
	$(g,\tilde p)$ appears according to the weights $s^g(p,\tilde p) + \sum_{w\in\calW}s^g_w(p,\tilde p) \nu^K_t(w)$.\\
	\item Each present T-cell of trait $z\in\calZ$ has \\[0.5em]
	\begin{tabular}{ll}
		-\;  a \emph{natural birth} clock with rate &$b(z) $\\[0.5em]
		-\;  a \emph{natural mortality} clock with rate \qquad   \qquad&$d(z)$\\[0.5em]
		-\; a \emph{reproduction} clock with rate&$\sum_{p\in\calP} b(z,p)\nu^K_t{(p)}.$
	\end{tabular}\\[0.5em]
	Whenever the reproduction clock  of a T-cell rings, a additional T-cell with the same trait
	$z$ and $\ell^{\text{prod}}_w(z,p)$ cytokines of type $w$ appear according to the weights $ b(z,p)\nu^K_t{(p)}$. \\
	\item Each present cytokine has \\[0.5em]
	\begin{tabular}{ll}
		\;-\; a \emph{mortality} clock
		with rate \qquad \qquad&$d(w).$
	\end{tabular}\\[0.5em]
	Moreover, an amount of $\ell^{\text{kill}}_w(z,p)$ particles (of trait $w$) are
	produced every time a T-cell of trait $z$ kills a cancer cell of
	phenotype $p$, and a number of $\ell^{\text{prod}}_w(z,p)$ particles are produced
	every time
	a T-cell of trait $z$ is produced in the presence of a cancer cell of phenotype $p$.\\
\end{enumerate}
\vspace{-0.5cm}
The measure-valued process $(\nu_t^K)_{t\geq0}$ is a Markov process whose law is
characterized by its infinitesimal generator $L^K$ which captures the
dynamics described above (cf. \cite{EthKur} Chapter 11 and \cite{FouMel2004}).
 The generator acts on bounded measurable functions $\phi$ from $\calM^K$ into $\R$, for all $\eta\in\calM^K$ by 
\vspace{-0.5cm}
\begin{align}\label{BRC}\nonumber
\left(L^K\phi\right)(\eta)
\:&=\:
\sum_{(g,p)\in\calG\times \calP}\left(\phi\left(\eta+\tfrac{\delta_{(g,p)}}K\right)-\phi(\eta)\right)(1-\mu_g)\Bigg\lfloor\,b(p)-\sum_{\tilde p\in\calP}c_b(p,\tilde p)
\eta(\tilde p)\Bigg\rfloor_+K\eta(g,p)\\[0.2em]\nonumber
&+ \sum_{(g,p)\in\calG\times \calP}\left(\phi\left(\eta-\tfrac{\delta_{(g,p)}}K\right)-\phi(\eta)\right)\\\nonumber
& \hspace{2cm}\times \Biggl(d(p)+\sum_{\tilde p\in\calP} c(p,\tilde p)\eta (\tilde p) 
+\Bigg\lfloor b(p)-\sum_{\tilde p\in\calP}c_b(p,\tilde p)\eta(\tilde p)\Bigg\rfloor_- \Biggr)K\eta(g,p)\\[0.2em]\nonumber
&+ \sum_{(g,p)\in\calG\times \calP}\sum_{z\in\calZ}
\left(\phi\left(\eta-\tfrac{\delta_{(g,p)}}K+{\textstyle\sum_{w\in\calW}}\:\ell^{\text{kill}}_w(z,p)\tfrac{\delta_w}K\right)-\phi(\eta)\right)
t(z,p)\eta(z) K\eta (g,p)\\[0.2em]\nonumber
&+ \sum_{(g,p)\in\calG\times \calP}\:\sum_{\tilde p\in\calP}
\left(\phi\left(\eta+\tfrac{\delta_{(g,\tilde p)}}K-\tfrac{\delta_{(g,p)}}K\right)-\phi(\eta)\right)\\\nonumber 
&\hspace{2.7cm} \times \left(\vphantom{\sum}s^g(p,\tilde
p)+\textstyle{\sum_{w\in\calW}}
s^g_w(p,\tilde p)\eta( w)\vphantom{\int}\right)K\eta
(g,p)\\[0.2em]\nonumber
&+\;\sum_{z\in\calZ}\;\sum_{p\in\calP}\left(\phi\left(\eta+\tfrac{\delta_{z}}K+{\textstyle\sum_{w\in\calW}} \ell^{\text{prod}}_w(z,p)\tfrac{\delta^{}_w}K\right)-\phi(\eta)\right)\Bigg(b(z,p)\eta(p)\Bigg)K\eta(z)\\\nonumber 
&+\;\sum_{z\in\calZ}\left(\phi\left(\eta+\tfrac{\delta_{z}}K\right)-\phi(\eta)\right)b(z)K\eta(z)
+\;\sum_{z\in\calZ}\left(\phi\left(\eta-\tfrac{\delta_{z}}K\right)-\phi(\eta)\right)d(z)K\eta(z)\\\nonumber
%&+\left(\phi(\eta+\tfrac{\delta_{w}}K)-\phi(\eta)\right)b(w)K\eta(w)\\\nonumber
&+\;{\sum_{w\in\calW}}\left(\phi\left(\eta-\tfrac{\delta_{w}}K\right)-\phi(\eta)\right)d(w)K\eta(w)\\\nonumber
&+ \sum_{(\tilde g, \tilde p)\in\calG\times \calP} \sum_{(g,p)\in\calG\times \calP}
\left(\phi\left(\eta+\tfrac{\delta_{(\tilde g,
		\tilde  p)}}K\right)-\phi(\eta)\right)\\
 & \hspace{2cm}
\times \mu_g m((g,p),(\tilde g,\tilde p))\Bigg\lfloor b(p)-\sum_{ p'\in\calP}c_b(p, p')\eta( p')\Bigg\rfloor_+ K\eta(g,p).
\end{align}

All the figures of this article containing simulations of the above model were generated
by a computer program written by Boris Prochnau. 
The pseudo-code can be found in the
Appendix~\ref{appendix}.

\subsection{Large population approximation}

There is a natural way to jointly construct the processes $(\nu_t^K)_{t\geq0}$ for different values of $K$, see \cite{EthKur}.
For this joint realization, the sequence of rescaled processes $((\nu_t^K)_{t\geq0})_K$ converges almost surely as $K$ tends to infinity to 
the solution of a quadratic system of ODEs, as stated in the following proposition, which can be seen as a law of large numbers.
The deterministic system, which provides (partial) information on the stochastic system, consists of a logistic part, a predator-prey relation between T-cells and cancer cells, a mutation and a switch part.

%$\calM^K$ is seen here as a subset of $\calM_F(\calX)$, the set of finite
%non-negative measures on $\calX$, endowed with the weak topology. 

Note that the population can be represented
as a vector $V_K(t):=(\nu^K_t(x))_{x\in \calX}$ of dimension  $d=
|\calG|\cdot |\calP|+|\calZ|+{|\calW|}$. 
%We choose to present the population as a measure in the definition of the model, since this is more convenient for large trait spaces. \\

\begin{proposition}\label{det-limit}
	Suppose that the initial conditions converge almost surely to a deterministic
	limit, i.e.\ $\lim_{K\to\infty} V_K(0)=v(0)$. Then, for each $T\in \R_+$ the sequence of rescaled processes $\left(V_K(t)\right)_{0\leq t\leq T}$ 
	converges almost surely as
	$K\to\infty$ 
	to the $d$-dimensional deterministic process which is the
	unique solution to the following quadratic dynamical system: 
	
\noindent For all $(g,p) \in \calG\times\calP$,

\begin{align}\label{SD1}
%\text{For all} \, (g,p)\in \calG\times\calP, \hspace{1cm}  \nonumber \\
\frac{\rmd \mathfrak n_{(g,p)}}{\rmd
  t}\nonumber\:=&\;
\frakn_{(g,p)}\:\Bigg((1-\mu_g)\,\Bigg\lfloor b(p)-\sum_{(\tilde g, \tilde p)\in\calG\times\calP}c_b(p,\tilde p)\frakn_{(\tilde g,\tilde p)}\Bigg\rfloor_+\\\nonumber&\hspace{1.5cm} -\Bigg\lfloor b(p)-\sum_{(\tilde g, \tilde p)\in\calG\times\calP}c_b(p,\tilde p)\frakn_{(\tilde g,\tilde p)}\Bigg\rfloor_-
-d(p) -\sum_{(\tilde g, \tilde p)\in\calG\times\calP}  c(p,\tilde p) \frakn_{ (\tilde g,\tilde p)}\nonumber\\ &
\hspace{1.5cm}- \sum_{z\in\calZ} t(z,p) \frakn_z -\sum_{\tilde p\in\calP} \left(s^g(p,\tilde p)
+{\sum_{w\in\calW}}
s_w^g(p,\tilde p) \frakn_w\right)\Bigg) \nonumber\\\nonumber&
+\sum_{\tilde p\in\calP}  \;\frakn_{(g,\tilde p)}\:
\Bigg(s^g(\tilde p,p)+
{\sum_{w\in\calW}}s^g_w(\tilde p,p) \frakn_w\Bigg)\nonumber\\
&+\sum_{(\tilde g, \tilde p)\in\calG\times\calP}\frakn_{(\tilde g,\tilde p)}
\Bigg(\mu_{\tilde g} \,m((\tilde g,\tilde p),( g,
p))\Bigg\lfloor b(\tilde p)-\sum_{ (g',p')\in\calG\times\calP}c_b(\tilde p, p') \frakn_{( g', p')}\vphantom{\int}
\Bigg\rfloor_+\Bigg),\nonumber
\end{align}
for all  $\quad z\in \calZ$, 
\begin{align}
\frac{\rmd \frakn_z}{\rmd t}\:=&\;
\frakn_{z} \:\Bigg(b(z)-d(z)+ \sum_{(g,p)\in\calG\times \calP}b(z,p) \frakn_{(g,p)}\Bigg), 
\end{align}
for all  $w\in \calW$,
\begin{align}
\frac{\rmd \frakn_w}{\rmd t}\nonumber\:=&\;
 -\frakn_w d(w) + \sum_{(g,p)\in\calG\times \calP}\frakn_{(g,p)}
\sum_{z\in\calZ} \Big(\ell^{\textup{kill}}_w(z,p) \,
    t(z,p)+\ell^{\textup{prod}}_w(z,p)\,b(z,p)\Big) \frakn_z .
\end{align}
	More precisely, $\mathbb P\left(\lim_{K\to\infty}\sup_{0\leq t \leq T}|V_K(t)-\frakn(t)|=0\right)=1$, where $\frakn(t)$ denotes the solution to Equations \eqref{SD1} with initial condition $v(0)$.
\end{proposition}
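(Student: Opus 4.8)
The plan is to follow the now-standard martingale approach for individual-based models of adaptive dynamics, as developed in Fournier--M\'el\'eard \cite{FouMel2004} and Champagnat \cite{champagnat-ferriere-al-01}, adapting it to our multi-type setting with switches, cytokines and birth-reducing competition. First I would note that, since $\calX$ is finite, the process $V_K(t)=(\nu_t^K(x))_{x\in\calX}$ lives in a finite-dimensional space, which simplifies matters considerably: there is no need to work in the space of measures with a topology of weak convergence, and tightness arguments reduce to estimates on real-valued semimartingales. The first step is to establish a moment bound: one shows that $\sup_K \E[\sup_{0\le t\le T}\langle \nu_t^K,\mathbbm 1\rangle^q]<\infty$ for some $q>1$ (say $q=2$ or $q=3$), uniformly in $K$. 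This follows by a Gronwall argument applied to the expectation of the total mass, using that all birth rates are bounded (the $\lfloor\cdot\rfloor_+$ truncation keeps reproduction rates $\le b(p)$, and switches and therapy conserve or decrease cancer-cell mass, while cytokine production per event is bounded by $\max_{z,p,w}(\ell^{\text{kill}}_w(z,p)\vee\ell^{\text{prod}}_w(z,p))$); a standard stopping-time/localization argument upgrades the bound on $\E[\langle\nu_t^K,\mathbbm 1\rangle^q]$ to a bound on the supremum via Doob's inequality applied to the associated martingale part.

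The second step is the semimartingale decomposition. For each coordinate $x\in\calX$ write, from the generator $L^K$ in \eqref{BRC},
\begin{equation*}
V_K(t)(x)=V_K(0)(x)+\int_0^t b_x\bigl(V_K(s)\bigr)\,\rmd s+M_K^x(t),
\end{equation*}
where $b_x(\cdot)$ is the drift vector whose components are exactly the right-hand sides of \eqref{SD1} and the subsequent two displays (one checks this by applying $L^K$ to the coordinate functions $\phi(\eta)=\eta(x)$ and collecting terms; the $K$'s cancel because each jump has size $1/K$ and occurs at rate proportional to $K$), and $M_K^x$ is a c\`adl\`ag square-integrable martingale. Its predictable quadratic variation is $\langle M_K^x\rangle_t=\tfrac1K\int_0^t \sigma_x^2(V_K(s))\,\rmd s$ for a locally bounded function $\sigma_x^2$ collecting the squared jump sizes times rates; the prefactor $1/K$ is the crucial point. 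Using the step-one moment bound, $\E[\langle M_K^x\rangle_T]\le C_T/K\to0$, so by Doob's inequality $\E[\sup_{t\le T}|M_K^x(t)|^2]\to0$, and in fact, by a Borel--Cantelli argument along the joint (coupled) realization of the processes for all $K$ --- exactly the construction referenced before the Proposition --- one gets $\sup_{t\le T}|M_K^x(t)|\to0$ almost surely.

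The third step is to pass to the limit. The drift functions $b_x$ are locally Lipschitz on $\R_+^d$ (the maps $v\mapsto\lfloor b(p)-\sum c_b(p,\tilde p)v_{\tilde p}\rfloor_\pm$ are Lipschitz since $x\mapsto\lfloor x\rfloor_\pm$ is $1$-Lipschitz, and all other terms are polynomial, hence locally Lipschitz), so the ODE system \eqref{SD1} has a unique solution $\frakn(t)$ which, by the same total-mass Gronwall bound at the deterministic level, does not blow up in finite time and stays in $\R_+^d$ (each coordinate's equation has the form $\rmd\frakn_x/\rmd t\ge -\frakn_x\cdot(\text{bounded})$, keeping it nonnegative). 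Then write $D_K(t):=|V_K(t)-\frakn(t)|$ and estimate, on the event $\{\sup_{t\le T}\langle\nu_t^K,\mathbbm 1\rangle\le A\}$ for a large constant $A$ (which has probability tending to $1$, and on whose complement one controls things by the moment bound),
\begin{equation*}
D_K(t)\le D_K(0)+\sum_{x\in\calX}\sup_{s\le T}|M_K^x(s)|+L_A\int_0^t D_K(s)\,\rmd s,
\end{equation*}
where $L_A$ is the Lipschitz constant of $(b_x)$ on the ball of radius $A$. Gronwall's lemma gives $\sup_{t\le T}D_K(t)\le\bigl(D_K(0)+\sum_x\sup_{s\le T}|M_K^x(s)|\bigr)e^{L_AT}$, and since $D_K(0)\to0$ by hypothesis and the martingale terms vanish a.s.\ by step two, we conclude $\sup_{t\le T}D_K(t)\to0$ a.s.

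I expect the main obstacle to be purely bookkeeping rather than conceptual: one must verify carefully that applying $L^K$ to the coordinate functions reproduces \emph{exactly} the somewhat lengthy right-hand sides of \eqref{SD1}, in particular that the cytokine-production terms $\ell^{\text{kill}}_w(z,p)\,t(z,p)+\ell^{\text{prod}}_w(z,p)\,b(z,p)$ and the birth-reducing competition terms with both positive and negative parts $\lfloor\cdot\rfloor_\pm$ assemble correctly; and one must check that the truncations $\lfloor\cdot\rfloor_\pm$, which are the one genuinely new feature compared with \cite{FouMel2004}, do not spoil the Lipschitz estimate --- they do not, as noted above. A secondary point requiring a little care is that cytokines (trait $w$) are produced in \emph{batches} of size $\ell_w/K$ rather than singly, so the jump sizes in $\langle M_K^w\rangle$ carry an extra bounded factor $\ell_w^2$; this is harmless since $\calX$, hence all the $\ell$'s, are finite. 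No compactness/tightness machinery beyond the finite-dimensional Gronwall argument is actually needed here because the state space $\R_+^d$ is finite-dimensional and the coupled construction gives almost-sure control directly.
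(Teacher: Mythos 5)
Your overall strategy (semimartingale decomposition of the coordinate processes, the $1/K$ factor in the predictable quadratic variation, Lipschitz continuity of the drift including the $\lfloor\cdot\rfloor_\pm$ truncations, and a Gronwall estimate) is sound and is the standard Fournier--M\'el\'eard-type argument; it is, however, a genuinely different route from the paper, whose proof consists of a single citation of Theorem 2.1 in Chapter 11 of Ethier--Kurtz: the process $V_K$ is a density-dependent family (jumps of size $\ell/K$ at rates $K\beta_\ell(V_K)$ with bounded $\ell$'s and locally Lipschitz $\beta_\ell$), so the law of large numbers applies off the shelf. Your hands-on version buys self-containedness and makes explicit where the truncations and the batch production of cytokines enter; the paper's route buys brevity and, importantly, delivers the almost-sure statement directly.

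That last point is where your write-up has a genuine gap. Doob's inequality gives $\E[\sup_{t\le T}|M_K^x(t)|^2]\le C_T/K$, which yields convergence in probability of the martingale part, but your proposed upgrade to almost-sure convergence ``by a Borel--Cantelli argument'' does not go through as stated: $\sum_K 1/K=\infty$, so the Chebyshev bounds are not summable over the full sequence $K\in\N$ (they would only give a.s. convergence along subsequences such as $K_n=n^2$). The almost-sure statement in the proposition really rests on the joint construction alluded to before the proposition, i.e.\ the Kurtz random-time-change representation $V_K(t)=V_K(0)+\sum_\ell \tfrac{\ell}{K}\,Y_\ell\bigl(K\int_0^t\beta_\ell(V_K(s))\,\rmd s\bigr)$ with the same driving unit-rate Poisson processes $Y_\ell$ for all $K$; there the functional strong law $\sup_{u\le U}|K^{-1}Y_\ell(Ku)-u|\to0$ a.s.\ replaces your martingale step, and the Gronwall argument then closes exactly as you wrote it. This is precisely the content and the proof of the Ethier--Kurtz theorem invoked by the paper, so either cite it for the a.s.\ part or replace your step two by the Poisson-representation argument; with that repair the rest of your proof (moment bounds, drift identification, Lipschitz/Gronwall step) is correct.
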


\begin{proof}
This result follows from 
Theorem 2.1 in Chapter 11 of \cite{EthKur}. 
\end{proof}

It is an important feature of  stochastic models opposed to deterministic ones that populations can die out. There are two main reasons for the extinction of a population for finite~$K$: 
first, the trajectory of the population size is transient and passes typically through a low minimum. In this case, random fluctuations can lead to extinction before the population reaches its equilibrium.
Second, fluctuations around a finite equilibrium cause extinction of a population after a long enough time.
The second case happens at much longer times scales than the first one. 
In both cases, the value of $K$ plays a crucial role, since it determines the amplitude of the fluctuations and thus the probability of extinction. 

The relevant mutations in the setup of cancer therapy are driver mutations and appear only rarely. In this case, more precisely, when  the mutation probabilities, $\mu_g\equiv\mu_g^K$, tend to zero as $K\to\infty$, mutations are invisible in the deterministic limit. Due to the presence of the switches the analysis of the system is difficult. It is \emph{not} a generalized Lotka-Volterra system of the form $\dot{\frakn}=  \frakn f( \frakn)$, where $f$ is linear in $\frakn$. 
In Section \ref{sec-mutant} we show how to deal mathematically with rare mutations and their interactions with fast phenotypic switches or therapy.

\section{Relapse due to random fluctuations}\label{sec-relapse}
\subsection{Therapy with T-cells with one specificity}\label{sec-1tcell}

In this section we present an example which \textit{qualitatively} models the experiment of Landsberg et al.\ \cite{Landsberg:2012vn}, where melanoma escape ACT therapy by phenotypic plasticity in presence of
TNF-$\alpha$.
Mutations are not considered here, since this was not investigated in the
experiments. 

Let us denote by $ x:=(g,p) $ the 
differentiated cancer cells,
by $ y: =(g,p')$ the dedifferentiated cancer cells, by $z_x$ the T-cells attacking only cells of type $x$, and by $ w $ the TNF-$\alpha $ proteins. 
We start with describing the deterministic system and denote by $\frakn_i$  its solution for trait $i\in \{x,y,z_x,w\}$. 
It is the solution to the following system of four differential equations:
\begin{align}\label{1tcell-system}
\nonumber
\dot \frakn_{x} &=\frakn_{x}
\big(b({x})-d({x})-c({x},{x}) \frakn_{x}\!-c({x},y) \frakn_y\! -s(x,y)-s_w(x,y) \frakn_w\! -t({z_x},x)
\frakn_{z_x}\big) 
+s(y,x) \frakn_y \\\nonumber
\dot \frakn_y &= \frakn_y \big(b(y)-d(y)-c({y,y}) \frakn_y-c({y,x}) \frakn_{x}-s({y,x})\big)+s(x,y)\frakn_x+s_w(x,y) \frakn_w \frakn_{x} \\\nonumber
\dot \frakn_{z_x} \!&= \frakn_{z_x}(b({z_x},x) \frakn_{x}-d({z_x})) \\
\dot \frakn_w &= -\frakn_w d(w) 
+(\ell^{\text{kill}}_w({z_x},x) \,t({z_x},x) 
+\ell^{\text{prod}}_w({z_x},x) \,b({z_x},x) )\frakn_{x}\frakn_{z_x}.
\end{align}
We fix the following parameters:
\begin{equation}\label{1tcell-muster-parameter}
\begin{array}{l@{\hspace{1cm}}l@{\hspace{1cm}}l@{\hspace{1cm}}l}
b({x})=3 			& b(y)=3		& b({z_x},{x})=8	 	&\ell^{\text{kill}}_w({z_x},x)=1\\
d({x})=1 			&	 d(y)=1		& t({z_x},{x})=28		& \ell^{\text{prod}}_w({z_x},x)=0\\
c({x},{x})=0.3 		&	c(y,{x})=0 	& d({z_x})=3			& d(w)=15\\
c({x},y)=0			&	c(y,y)=0.3	&						& s_w(x,y)=4			\\
s(x,y)=0.1			&	s(y,x)=1	& 						&	
\end{array}
\end{equation}
and  initial conditions:
\begin{equation}\label{1tcell-muster-incond}
\begin{array}{l}
 (\frakn_{x},\frakn_{y},\frakn_{z_x},\frakn_{w})(0)=(2,0,0,0.05,0).
\end{array}
\end{equation}
The solution to the deterministic system \eqref{1tcell-system} with parameters \eqref{1tcell-muster-parameter} and initial conditions \eqref{1tcell-muster-incond} can be seen on Figure \ref{det-syst} (A).
There are three fixed points in this example, see the black dots on Figure \ref{fig-fixedpoints} (B):
$P_{0000}$ where all populations sizes are zero, 
%one where only melanoma cells of type $x$ are present, 
$P_{xy00}$ where the T-cells and TNF-$\alpha$ are absent  and both melanoma populations are present and
$P_{xyz_xw}$ where all populations are present.
%  and  $P_{x000}$ where only $x$ melanoma are present. The latter exists due to the fact that the switch rate  from x to y is assumed to be zero. 
$P_{xyz_xw}$ is the only stable fixed point and $P_{xy00}$ is stable in the invariant subspace $\{\frakn_{z_x}=0\}$ (i.e.\ if the T-cell population is zero).
To highlight the qualitative features of the system, we choose parameters such that the minimum of the T-cell population during remission is low, and such that the equilibrium value of melanoma of type $x$ in presence of T-cells is low, whereas equilibrium values of both melanoma types in absence of T-cells are high.

%We can thus  separate the fixed points which correspond to a cure or a remission (green area on Figure  \ref{fig-fixedpoints}) from those which correspond to a relapse (red area).
For initial conditions such that the number of differentiated melanoma cells, $\frakn_{x}(0)$, is large, the number
of injected T-cells, $\frakn_{z_x}(0)$, is small, and the numbers of dedifferentiated melanoma  cells, $
\frakn_y(0)$,  and TNF-$\alpha$ molecules, $\frakn_w(0) $, are small or equal to zero, the \emph{deterministic} system is attracted to $P_{xyz_xw}$: the T-cell population, $\frakn_{z_x}$, increases in presence of its target $x$, TNF-$\alpha$ is secreted, and the
population of differentiated melanoma cells, $\frakn_x$, shrinks due to killing and TNF-$\alpha$ induced switching, whereas the population of dedifferentiated melanoma cells,  $\frakn_y$, grows. 

For the stochastic system, several types of behavior can occur with
certain probabilities: either the trajectory stays close to that of the 
deterministic system and the system reaches a neighborhood of the fixed point $P_{xyz_xw}$ (Fig. \ref{fig-1tc} (A)) or the T-cell population, $\nu^K(z_x)$, dies out and the system reaches a neighborhood of  $P_{xy00}$ (Figure \ref{fig-1tc} (B)). In the latter case  the TNF-$\alpha$ population, $\nu^K(w)$,
becomes extinct shortly after the extinction of the T-cells, $\nu^K(z_x)$, and the population of differentiated melanoma cells, $\nu^K(x)$, can grow again.  Moreover, TNF-$\alpha$ inducing the switch from ${x}$ to $y$ vanishes and we observe
a relapse which consists mainly of differentiated cells.
Depending on the choice of parameters (in particular switching, therapy or cross-competition), a variety of different 
behavior is possible.

\begin{figure}[h!]
	\centering
	\begin{minipage}{.49\textwidth} 
		A \includegraphics[width=\textwidth]{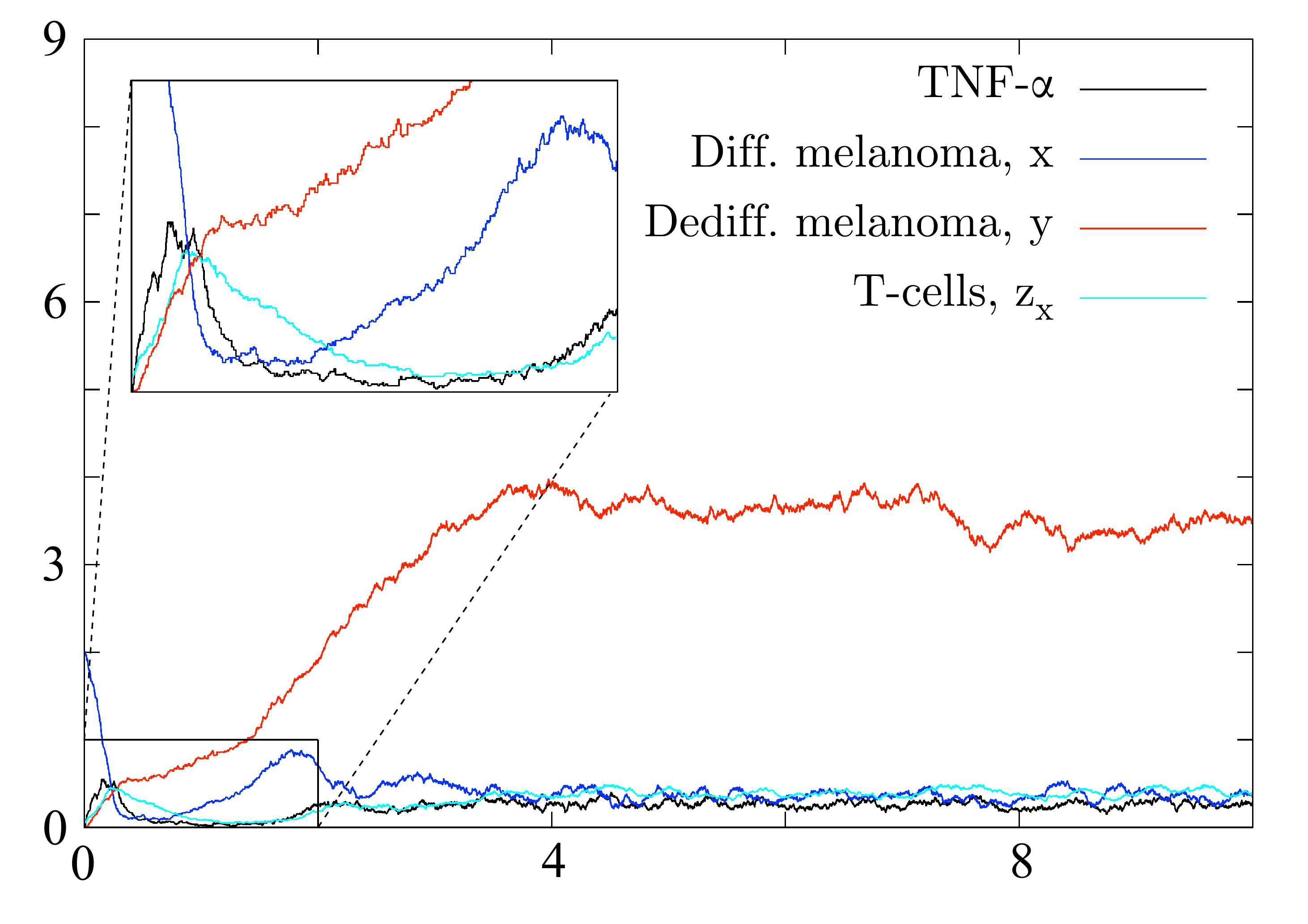}
	\end{minipage}
	\hfill
	\begin{minipage}{.49\textwidth} 
		B \includegraphics[width=\textwidth]{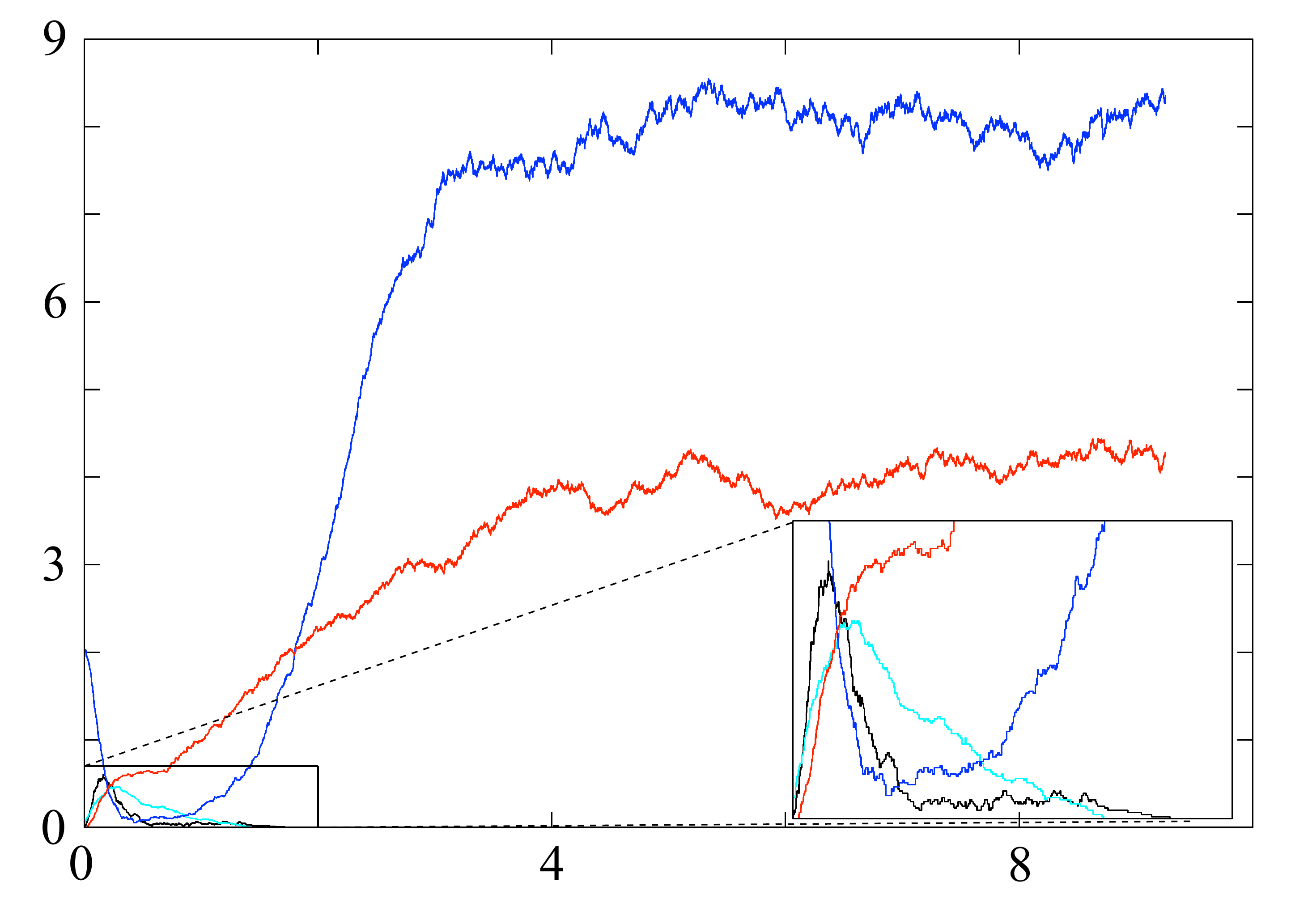}
	\end{minipage}
	\hfill
	\caption{\small 
			Simulations of the evolution of melanoma under T-cell therapy for parameters (\ref{1tcell-muster-parameter}) and initial conditions (\ref{1tcell-muster-incond}).  
			The graphs show the number of individuals divided by 200 versus time. 
			Two scenarios are possible for therapy with T-cells of one specificity: 
			(A) T-cells $z_x$ survive and the system is attracted to $P_{xyz_xw}$, or 
			(B) T-cells $z_x$ die out and the system is attracted to $P_{xy00}$. 
		\label{fig-1tc}}
\end{figure}

\begin{figure}
	\centering
	\begin{minipage}{\textwidth} 
		A \includegraphics[width=0.5\textwidth]{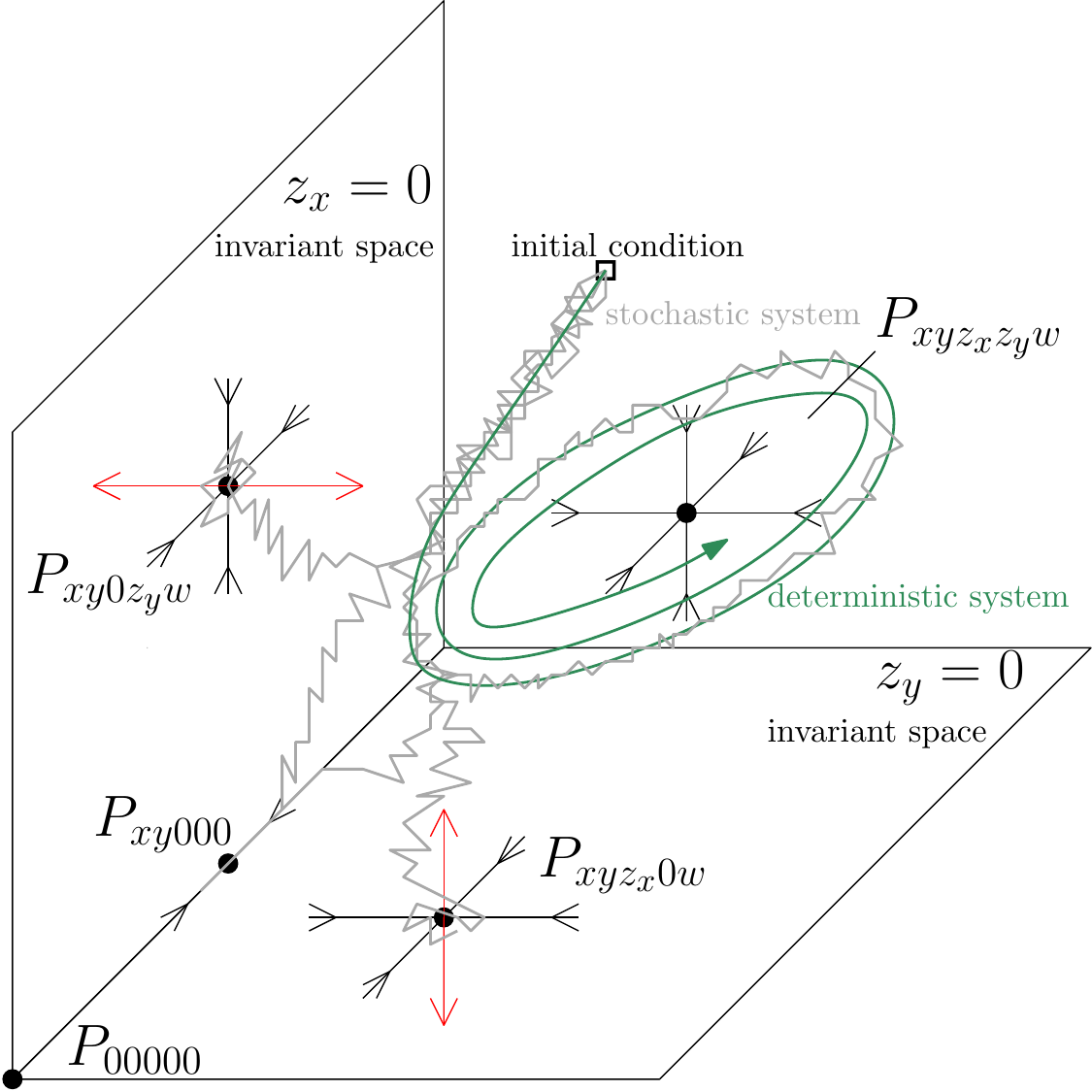}
		\hfill
		B \includegraphics[width=0.4\textwidth]{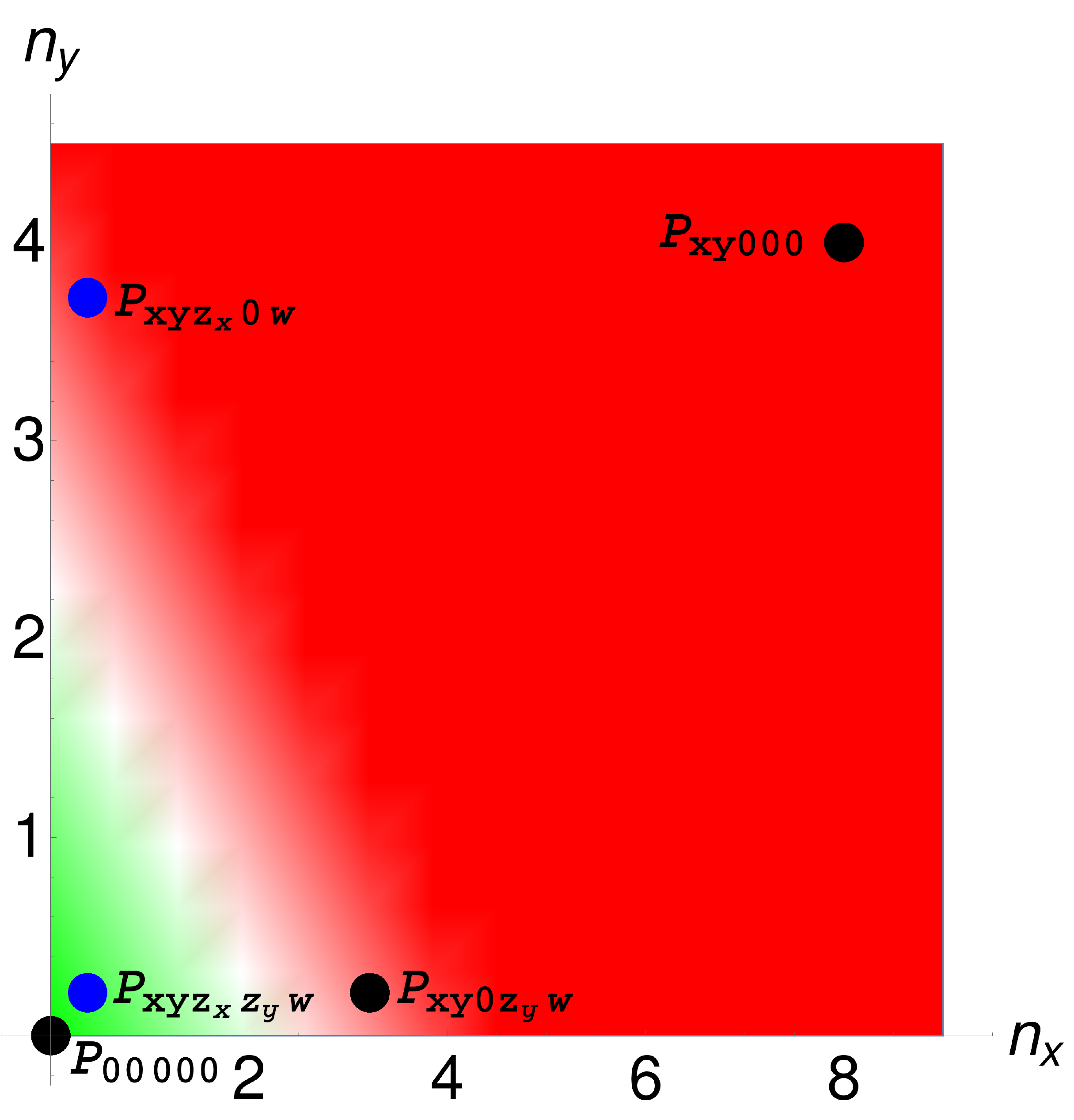}
	\end{minipage}
	\caption{\small {Qualitative description of the role of the fixed points. 
			(A) Sketch of the invariant subspaces, stability of the fixed points, and schematic representation of the dynamics of the deterministic and the stochastic processes. 
			(B) Projection of the fixed points onto $\frakn_x$ and $\frakn_y$. The green area contains the fixed points which correspond to cure or remission and the red area contains those describing relapses.
		}
		\label{fig-fixedpoints}}
\end{figure}

\subsection{Therapy with T-cells of two  specificities}\label{sec-2tcells}

A therapy can only be called successful if the whole tumor is eradicated
or kept small for a long time. A natural idea is thus to inject two types of T-cells in future 
therapies as suggested in \cite{Landsberg:2012vn}. To model this scenario, we just need to add  T-cells attacking the dedifferentiated cells as new actors to the setting described above. We denote them by $z_y$. 
The system contains one more predator-prey term between $y$ and $z_y$:

%\vspace{2cm}

%
\begin{align}\nonumber
\dot \frakn_{x} &=  \frakn_{x} \Big(b(x)-d(x)-c(x,x) \frakn_{x}\!-c({x
	,  y}) \frakn_y\! -s_w(x,y) \frakn_w\!-s(x,y)-t(z_x,x) \frakn_{z_x}\Big) +s({y,x}) \frakn_y \\\nonumber
\dot \frakn_y &=  \frakn_y \Big(b(y)\!-d(y)\!-c({y,y}) \frakn_y\!-c({y, x}) \frakn_{x}\!-s({y,x}) -t(z_y,y) \frakn_{z_y}\Big)+\big(s_w(x,y) \frakn_w\! +s({x, y})\big) \frakn_{x} \\\nonumber
\dot \frakn_{z_x}\! &=  \frakn_{z_x} (b({z_x,x}) \frakn_{x}-d(z_x))\\\nonumber
\dot \frakn_{z_y} \!&=  \frakn_{z_y} (b({{z_y},y}) \frakn_y-d({z_y}))\\\nonumber
\dot \frakn_w &= -\frakn_w d(w)+
(\ell^{\text{kill}}_w({z_x},x) \,t({z_x},x) 
+\ell^{\text{prod}}_w({z_x},x) \,b({z_x},x) )\frakn_{x}\frakn_{z_x}
\\
&\hspace{2cm}+
(\ell^{\text{kill}}_w({z_y},y) \,t({z_y},y) 
+\ell^{\text{prod}}_w({z_y},y) \,b({z_y},y) )\frakn_{y}\frakn_{z_y}\label{2tcells-system}
\end{align}

In addition to parameters \eqref{1tcell-muster-parameter}, we use the following ones: 
\begin{equation}\label{2tcell-muster-parameter-short}
\begin{array}{lll}
t(z_y,{y})=28 &\ell^{\text{kill}}_w(z_y,y)=1&  d(z_y)=3 				\\
b(z_y,{y})=14 &\ell^{\text{prod}}_w(z_y,y)=0	&
\end{array}
\end{equation}

 and initial conditions:
\begin{equation}\label{2tcell-muster-incond}
\begin{array}{ll}
(\frakn_{x},\frakn_{y},\frakn_{z_x},\frakn_{z_y},\frakn_{w})(0)=(2,0,0,0.05,0.2,0).
\end{array}
\end{equation}

The solution to the deterministic system \eqref{2tcells-system} with parameters \eqref{1tcell-muster-parameter} and \eqref{2tcell-muster-parameter-short} and initial conditions \eqref{2tcell-muster-incond} can be seen on Figure \ref{det-syst} (B). Initial conditions with only T-cells of type $z_y$ or without any T-cells can be seen on Figure  \ref{det-syst} (C) and (D). 

\begin{figure}[h!]
	\centering
	A \includegraphics[width=.47\textwidth]{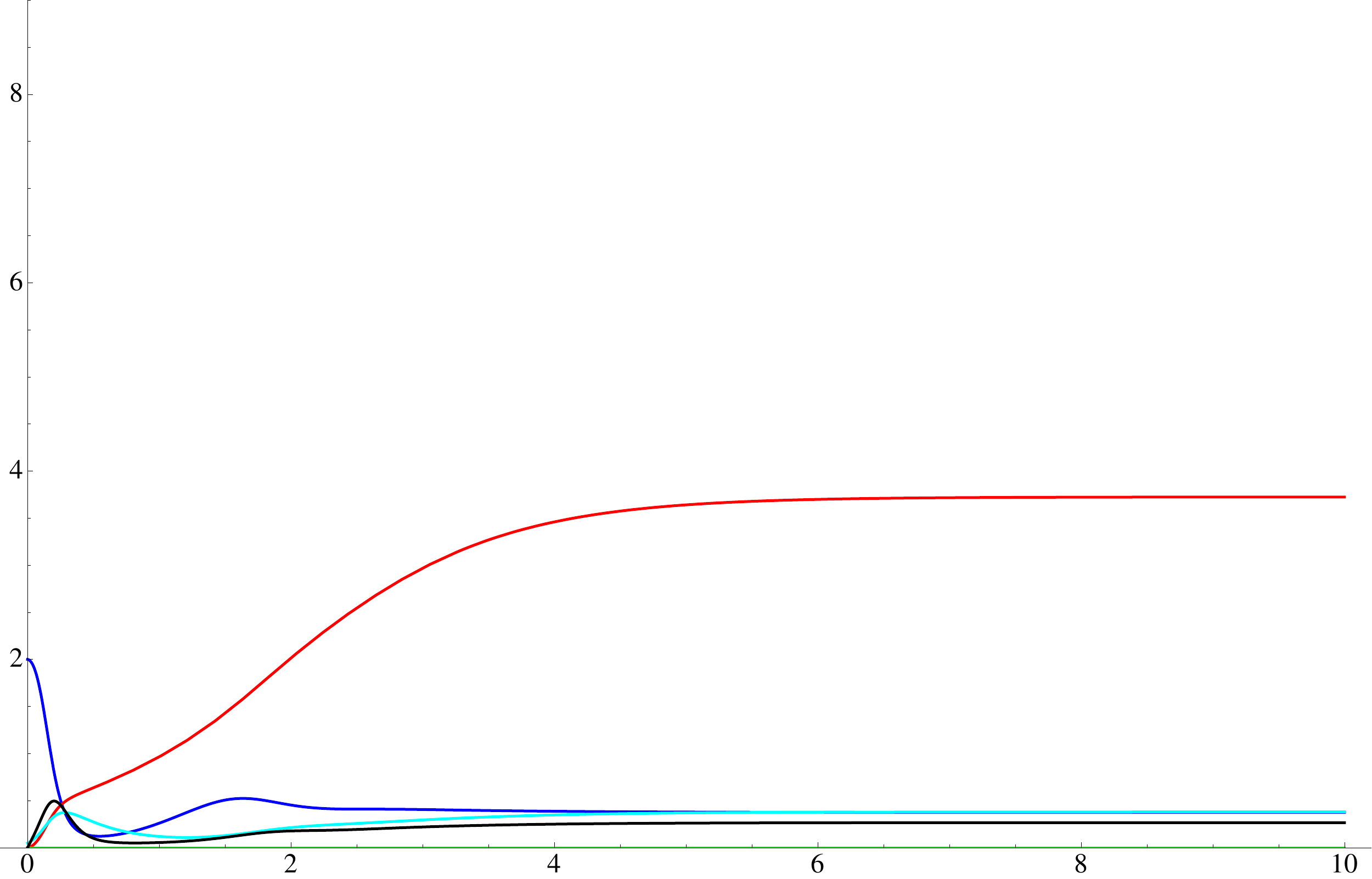}
	B \includegraphics[width=.47\textwidth]{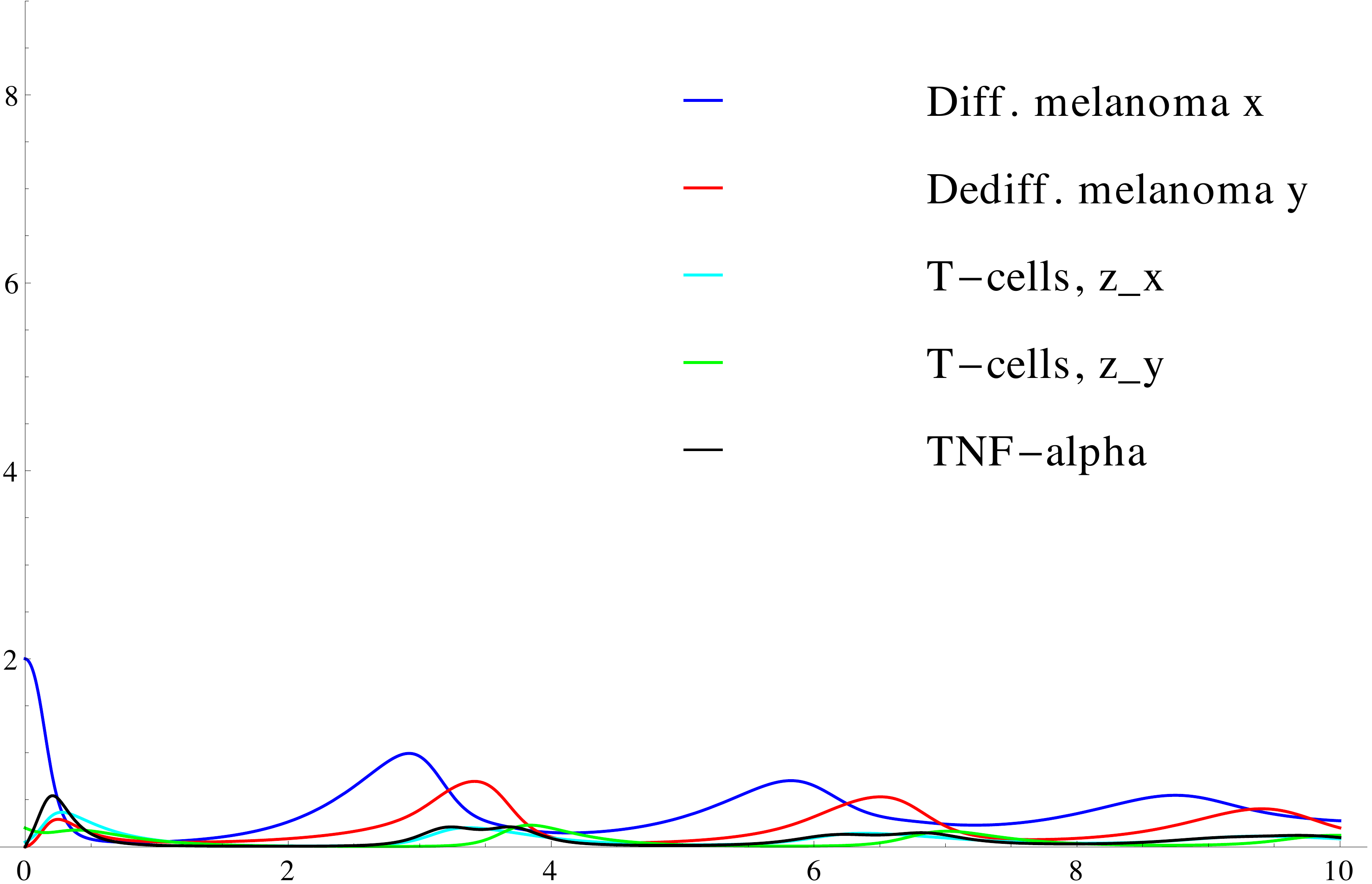}
	C \includegraphics[width=.47\textwidth]{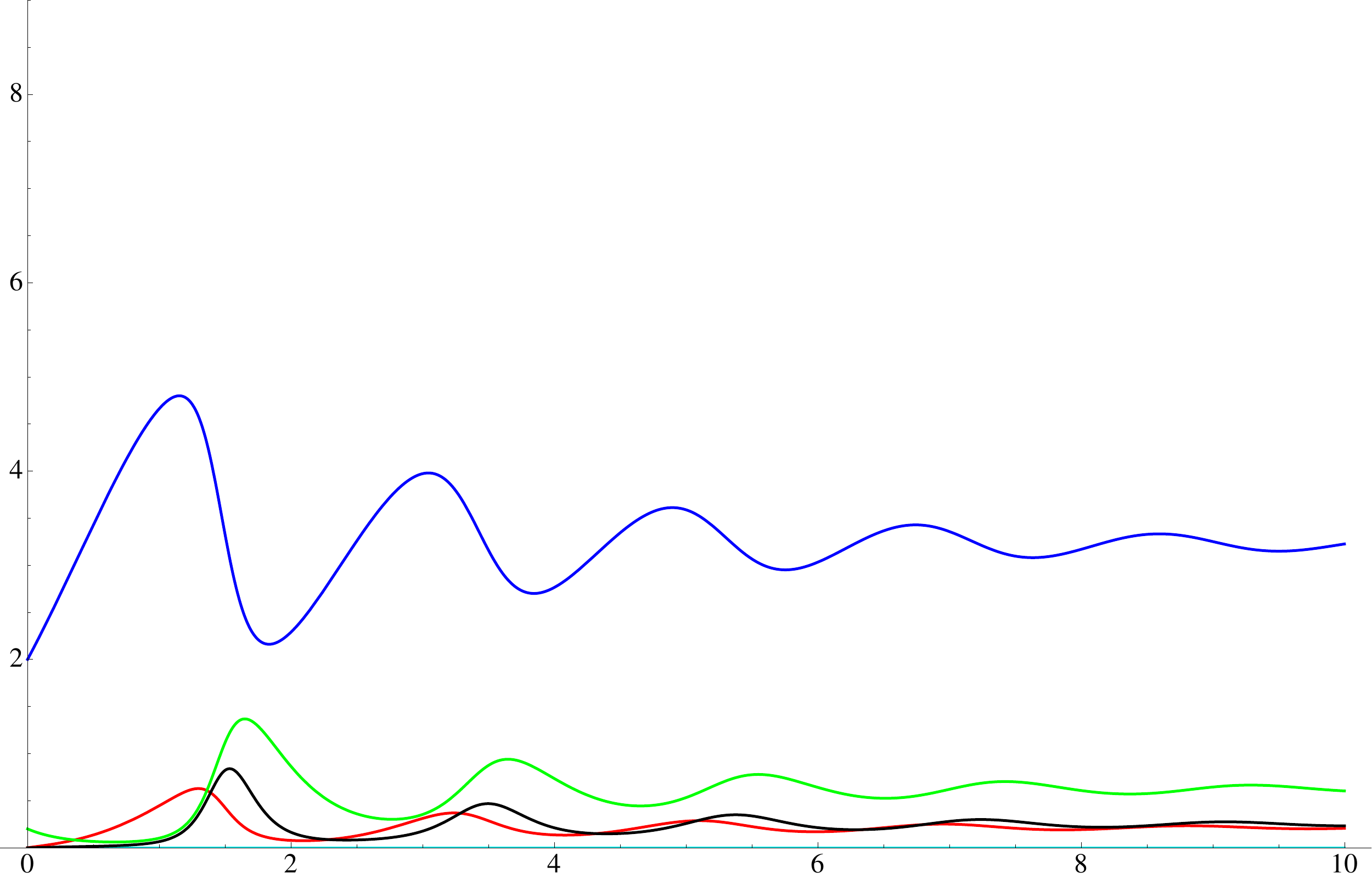}
	D \includegraphics[width=.47\textwidth]{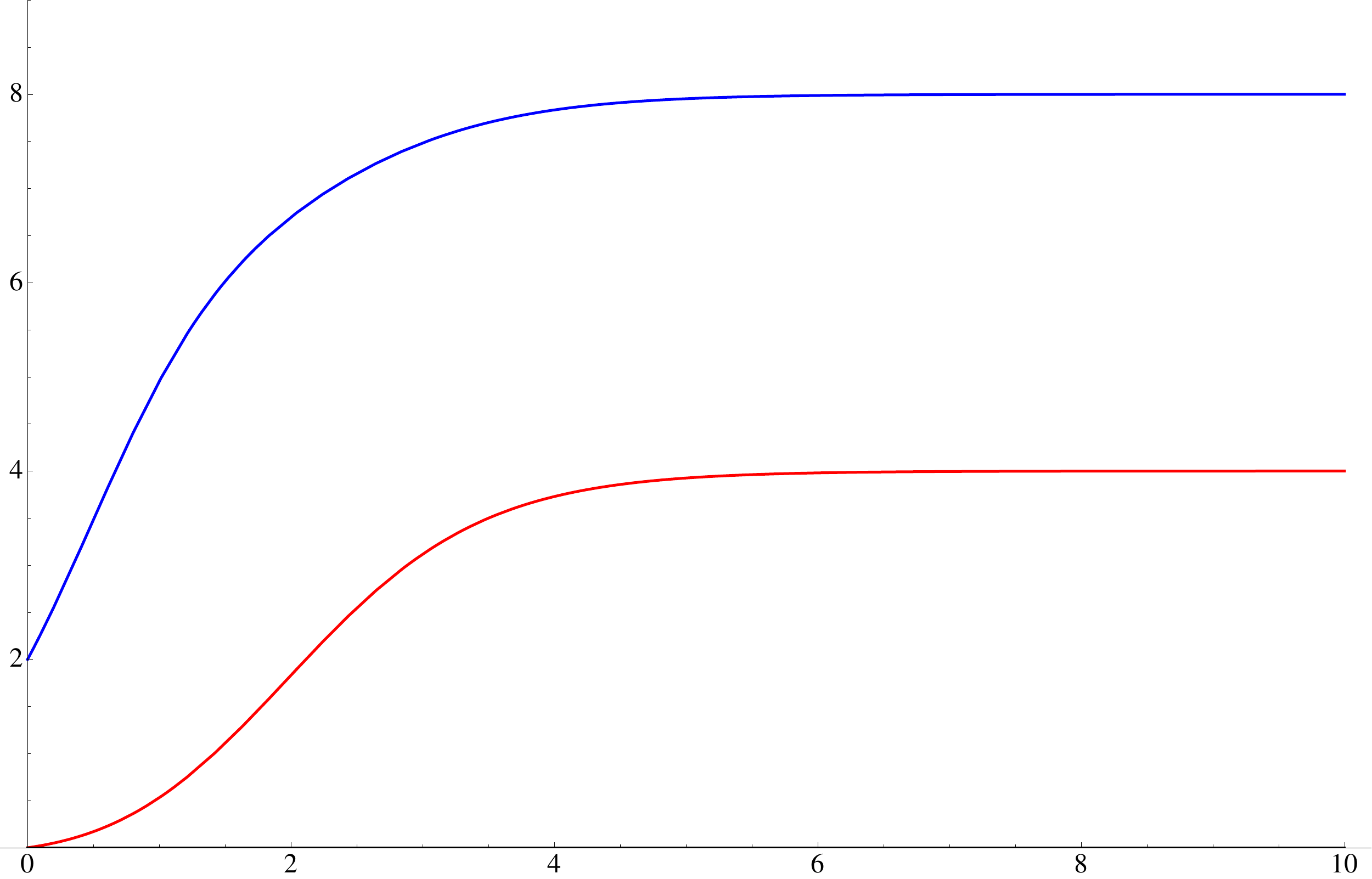}
	\captionof{figure}{\small
		Solutions to the deterministic system \eqref{2tcells-system} with parameters \eqref{1tcell-muster-parameter} and \eqref{2tcell-muster-parameter-short}. With \mbox{$\frakn(0):=(\frakn_{x},\frakn_{y},\frakn_{z_x},\frakn_{z_y},\frakn_{w})(0)$} as the  initial conditions, the system is attracted to the indicated fixed point: \\
		(A) $\frakn(0)=(2,0,0.05,0,0)$, fixed point $P_{xyz_x0w}$,  \\
		(B) $\frakn(0)=(2,0,0.05,0.2,0)$,  fixed point $P_{xyz_xz_yw}$,\\
		(C) $\frakn(0)=(2,0,0,0.2,0)$,  fixed point $P_{xy0z_yw}$,\\
		(D) $\frakn(0)=(2,0,0,0,0)$,  fixed point $P_{xy000}$.
	} \label{det-syst} 
\end{figure}

\begin{figure}[h!]
	\centering
	\begin{minipage}{.49\textwidth} 
		A \includegraphics[width=\textwidth]{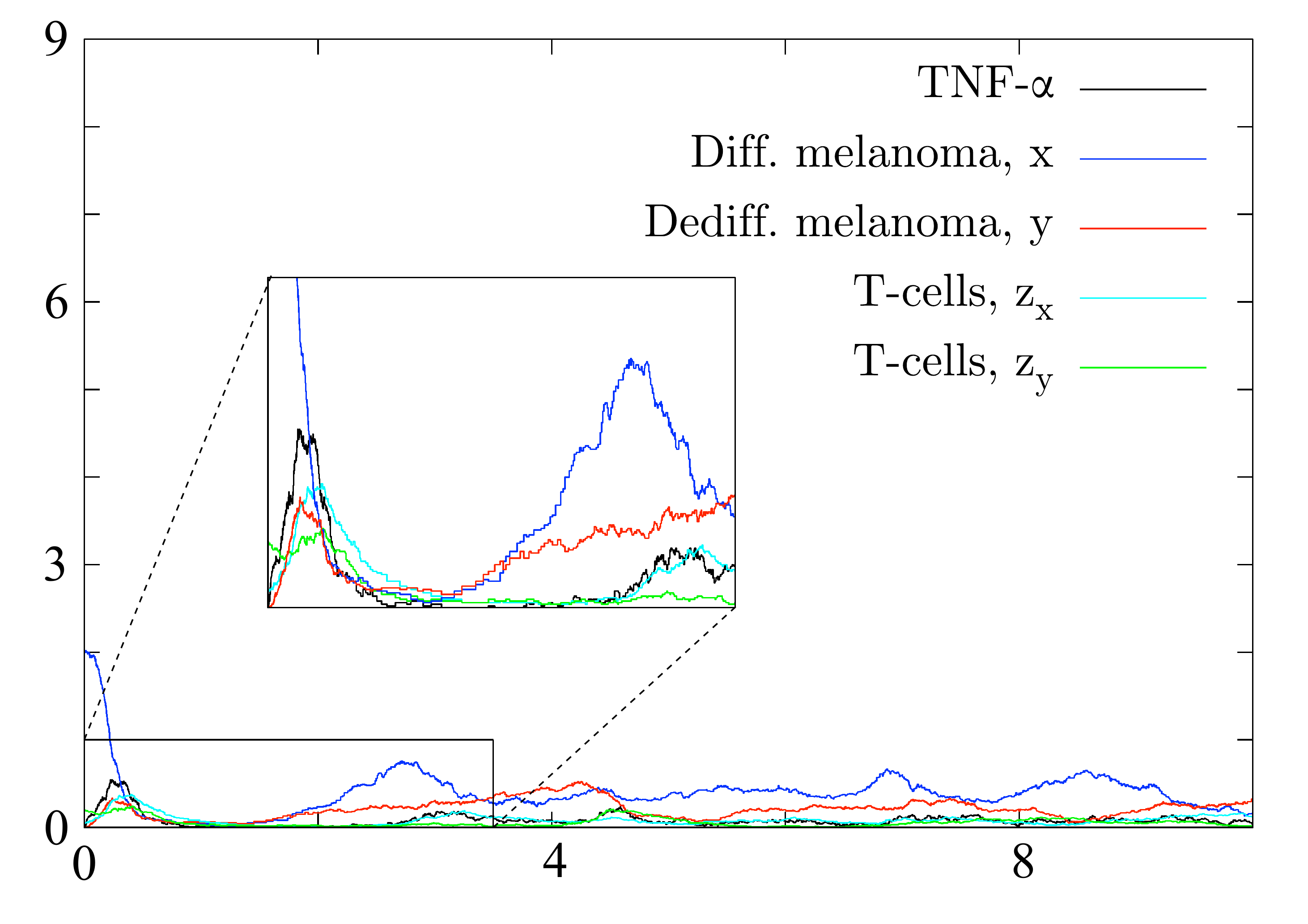}
	\end{minipage}
	\hfill
	\begin{minipage}{.49\textwidth} 
		B \includegraphics[width=\textwidth]{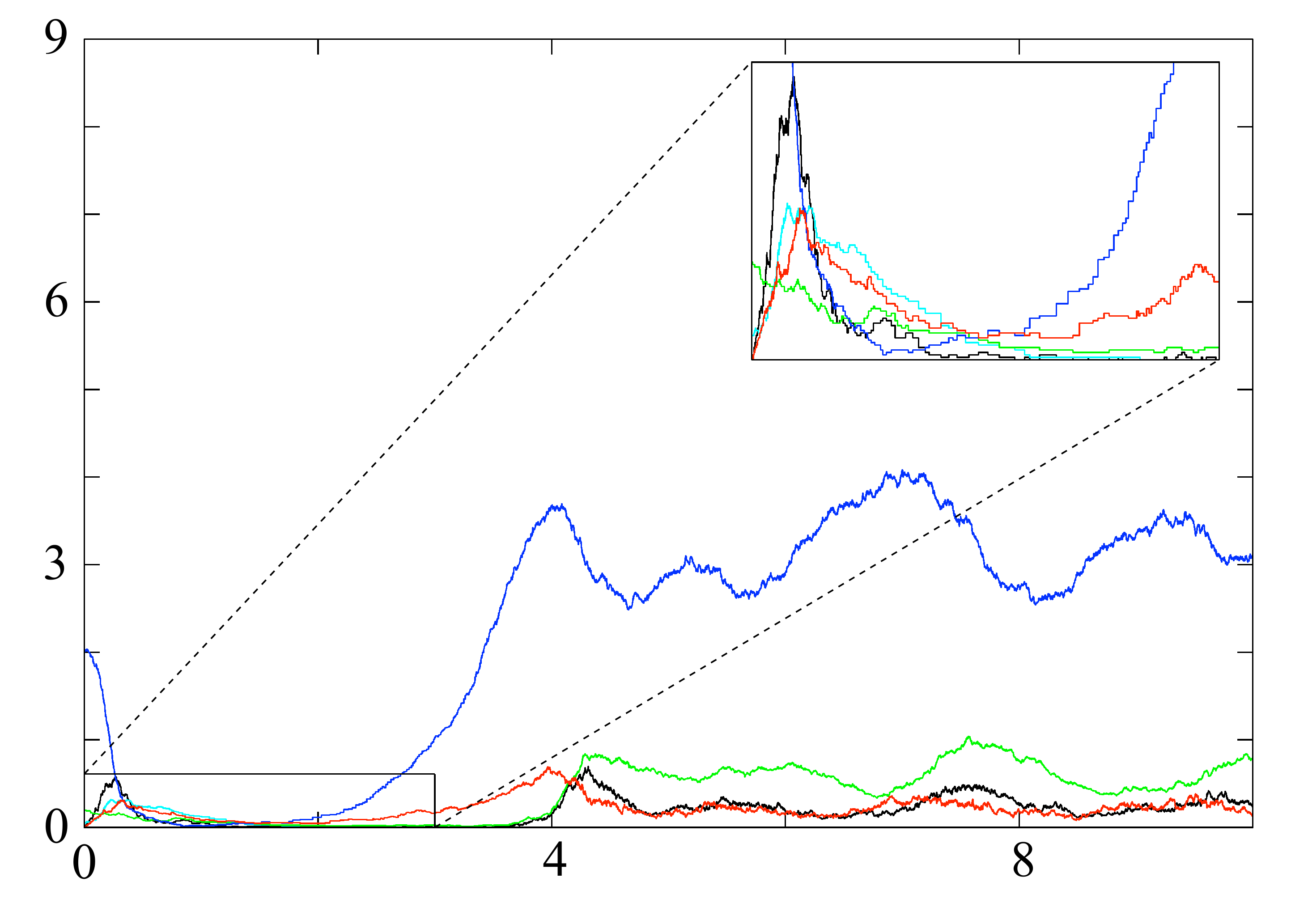}
	\end{minipage}
	\hfill
	\begin{minipage}{.49\textwidth} 
		C \includegraphics[width=\textwidth]{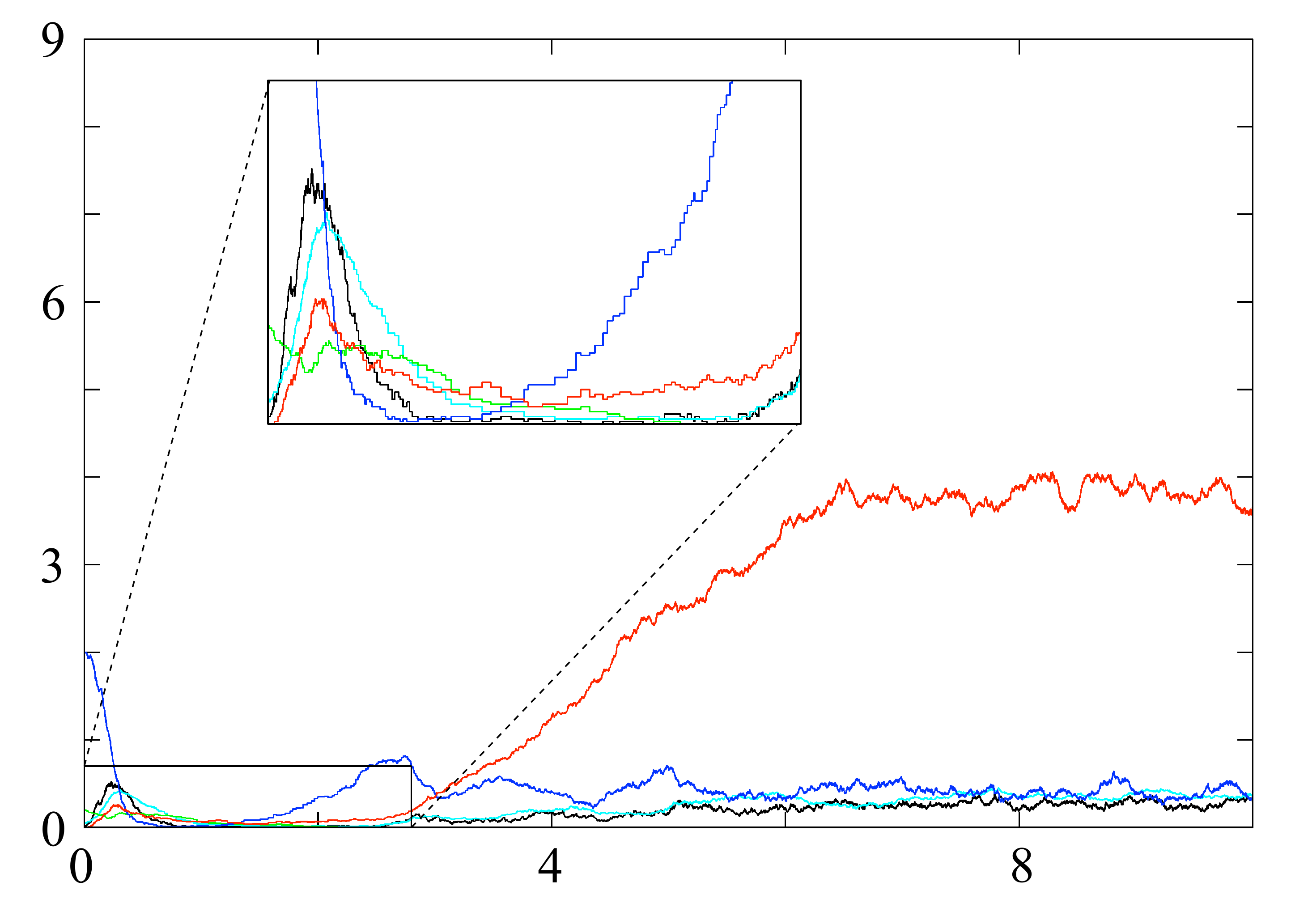}
	\end{minipage}
	\hfill
	\begin{minipage}{.49\textwidth} 
		D \includegraphics[width=\textwidth]{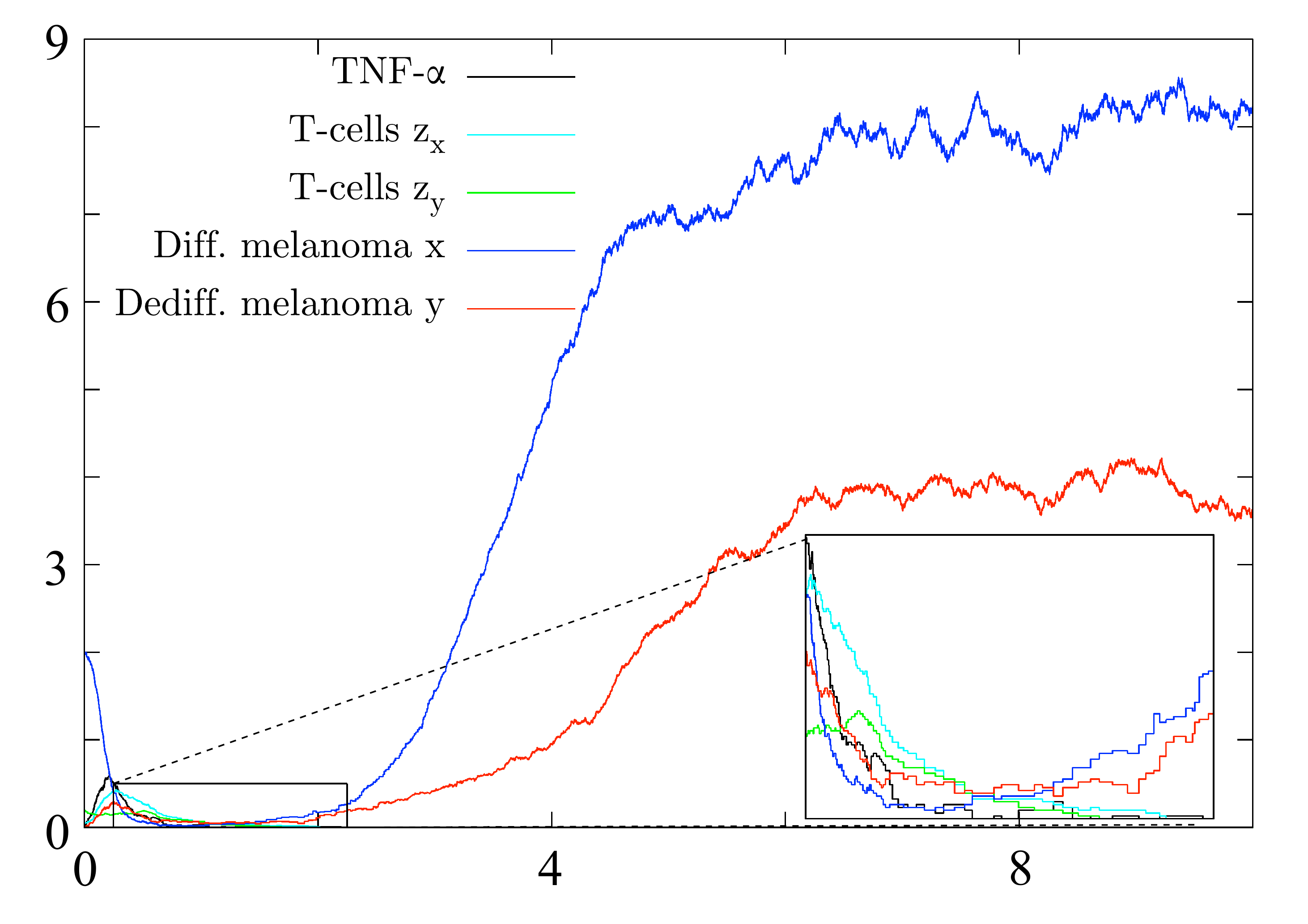}
	\end{minipage}
	\hfill
	\begin{minipage}{.49\textwidth} 
		E \includegraphics[width=\textwidth]{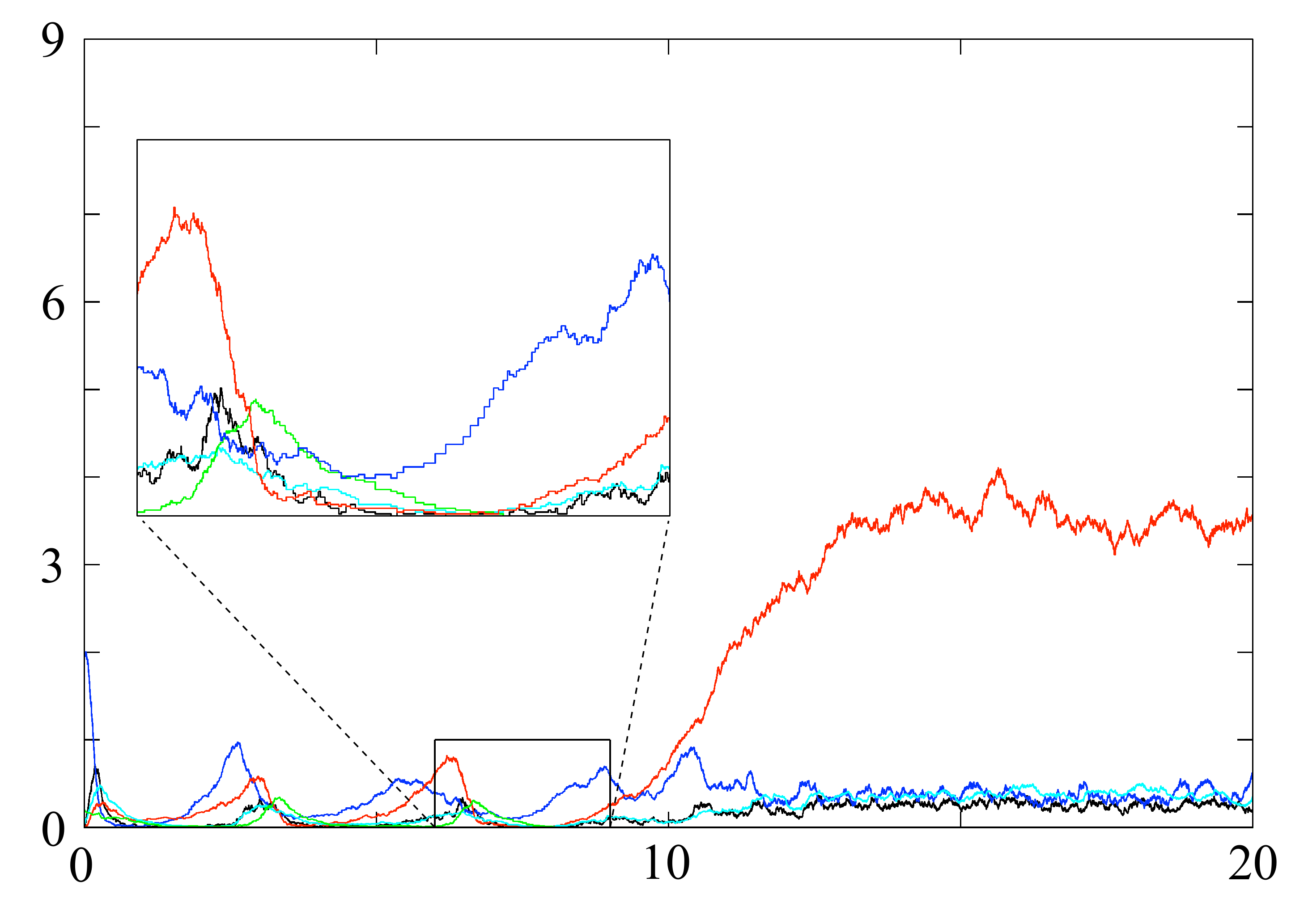}
	\end{minipage}
	\hfill
	\begin{minipage}{.49\textwidth} 
		F \includegraphics[width=\textwidth]{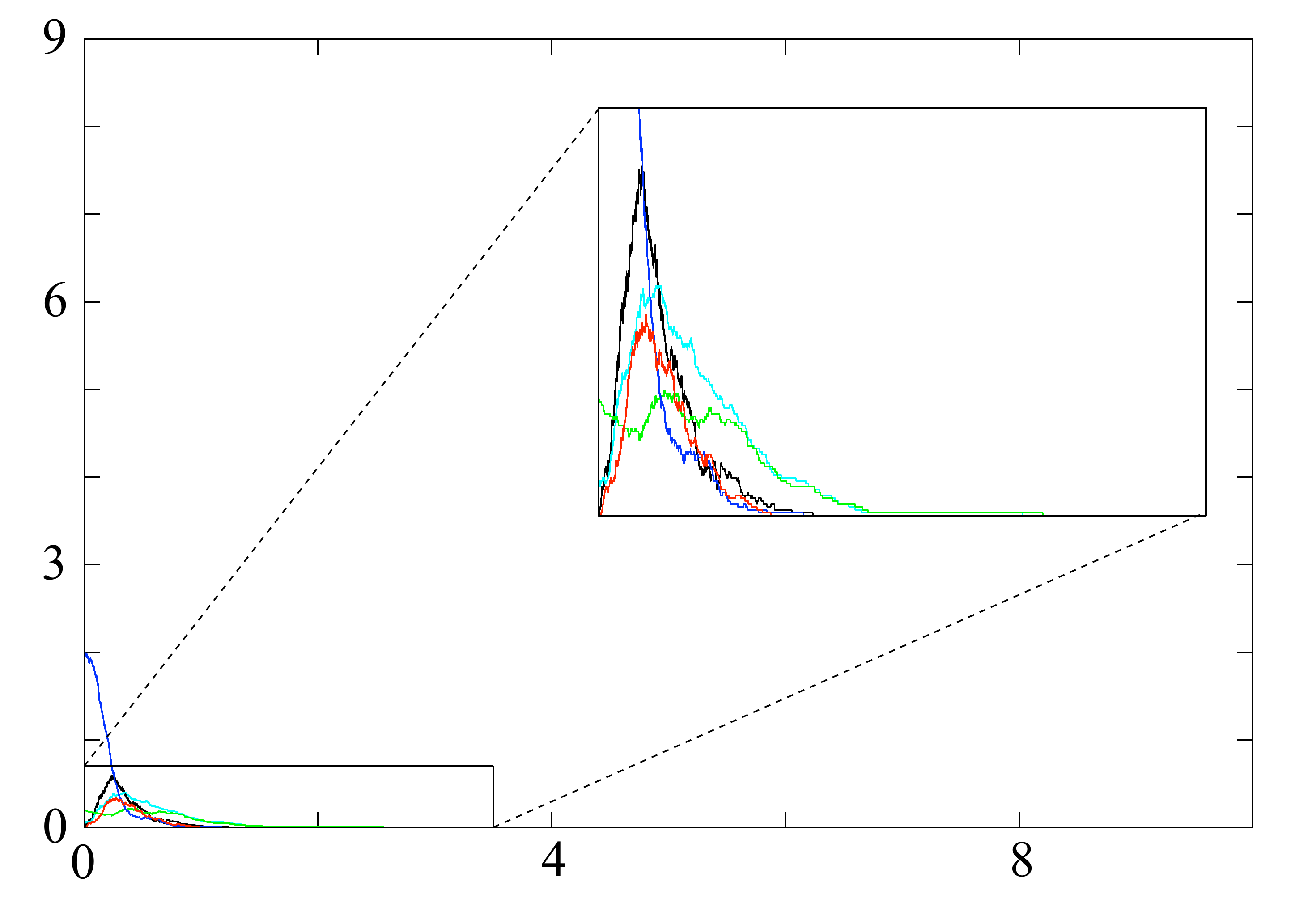}
	\end{minipage}
	\hfill
	\caption{\small  
		Simulations of the stochastic evolution of melanoma under T-cell therapy for parameters \eqref{1tcell-muster-incond} and \eqref{2tcell-muster-parameter-short}.  The graphs show the number of individuals divided by 200 versus time. 
		Possible scenarios for therapy with T-cells of two specificities:
		(A) T-cells $z_x$ and $z_y$ survive and the system stays close to $P_{xyz_xz_yw}$,
		(B) T-cells $z_x$ die out and the system is attracted to $P_{xy0z_yw}$,
		(C) T-cells $z_y$ die out and the system is attracted to $P_{xyz_x0w}$, 
		(D) Both T-cell types $z_x$ and $z_y$ die out and the system is attracted to $P_{xy000}$.
		(E) Transition between cases (A) and (C).
		(F) the tumor is eradicated (corresponding to $P_{00000}$).
		\label{stoch-bif}}
\end{figure}

The introduction of $z_y$ adds two new fixed points (see the blue dots on Figure  \ref{fig-fixedpoints} (B)): $P_{xyz_xz_yw}$ is the new stable fixed point with all non-zero populations, and $P_{xy0z_yw}$ corresponds to the absence of the T-cell population of type $z_x$. The invariant subspaces are now $\{\frakn_{z_x}=0\}$, in which $P_{xy0z_yw}$ is stable, $\{\frakn_{z_y}=0\}$, in which $P_{xyz_x0w}$ is stable and $\{\frakn_{z_x}=0\}\cap\{\frakn_{z_y}=0\}$, in which $P_{xy000}$ is stable. Note that  $P_{xyz_x0w}$, corresponding to $P_{xyz_xw}$ from the last section, is unstable in the enlarged space.

With the same initial conditions as before and $\frakn_{z_y}(0)$ small but positive, the \emph{deterministic} system is  attracted to the stable fixed point $P_{xyz_xz_yw}$: 
the T-cell population, $\frakn_{z_x}$, increases in presence of its target $x$, 
TNF-$\alpha$ is secreted, and the
differentiated melanoma population shrinks due to killing and switching, the population of dedifferentiated melanoma grows, but is regulated and kept at a low level by the T-cells of type $z_y$. Similarly, $\frakn_x$ is regulated by $\frakn_{z_x}$. 

We choose the parameters such that the minima of the two types of
T-cells during remission are low, so that they have a large enough probability to die out in the stochastic system. 
Since at the beginning of therapy no or only very few dedifferentiated melanoma cells are present, the population of T-cells of type $z_y$ starts growing only later. In order to avoid their early extinction a higher initial amount of these T-cells can be injected.
There are now five main different scenarios in the stochastic system (see Figure  \ref{stoch-bif}).
Either the T-cells of type $z_x$  (B), or the T-cells of type $z_y$ (C), or both of them die out (D). 
Also all populations can survive for some time fluctuating 
around their joint equilibrium (A). The fifth scenario is a cure, i.e.\ the extinction of the entire tumor due to the simultaneous attack of the two different T-cell types (F). T-cells and TNF-$\alpha$ vanish since 
they are not produced any more in the absence of their target. Of course, transitions between the different scenarios are also possible, e.g. the system could pass from Case (A) to (B) or (C) and then to (D), see Figure  \ref{stoch-bif} (E). Furthermore, note that setting the switch from $x$ to $y$ to zero introduces an additional scenario: it is then possible that a relapse appears, which consists only of differentiated melanoma cells.

Starting from our choice of initial conditions, the deterministic system converges to $P_{xyz_xz_y}$, but the stochastic system can hit one of the invariant hyperplanes due to fluctuations, and is driven to different possible fixed points, see Figure  \ref{fig-fixedpoints} (A). 
The transitions between the different scenarios can be seen as a metastability phenomenon.

\subsection{Reproduction of experimental observations and predictions}\label{sec-bio-param}
\subsubsection{Comparison of experimental observations and simulations}
The parameters of Section \ref{sec-relapse}
are chosen ad hoc to highlight the influence of randomness and the possible behavior of the system. 
Let us now show that our models are capable to reproduce the experimental data of Landsberg et al. \cite{Landsberg:2012vn} \emph{quantitatively}. The  choice of parameters is explained below (Subsection~\ref{subsec-parameters}).

Figure  \ref{Data} (A) shows the experimental data of \cite{Landsberg:2012vn} whereas Figure  \ref{Data} (B) shows  the results of our simulations. Each curve describes the evolution of the diameter of the tumor over time. In the 
stochastic system two situations can occur: first, the relapse consists 
mainly of differentiated melanoma cells and the tumor reaches its original size again after $90$ days. 
This is the case if the T-cells die out. Second, the 
relapse 
consists mainly of dedifferentiated cells and  the tumor reaches its original size again after roughly $190$ 
days. This is the case if the T-cells survive the phase 
of remission, become active again and kill differentiated cancer cells. In the simulations  the therapy with one type of T-cells pushes the tumor down to a microscopic level for $50$ 
to $60$ 
days, as in the experimental data. %However,  the decrease is fast but at the order of days (not hours). 
The curves marked ACT in the experimental data in Figure \ref{Data} (A) 
are matched by simulation data when the T-cells die out (Differentiated relapse in Figure \ref{Data} (B)). 
%The lines for the experimental data marked by ACT correspond to the cases in the simulations where the active T-cells die out.
In the experiments there might be T-cells which lose their function, e.g.\ due to exhaustion, and cannot kill the differentiated melanoma cells. This effect is to be seen as included in the death rate of T-cells in the model. They can be re-stimulated and become active again which is marked as ACT+Re  in Figure \ref{Data} (A). Although our model does not include re-stimulation, the case of surviving T-cells in the simulations (Dediff.\ relapse in Figure \ref{Data} (B)) can qualitatively be interpreted as the case of ACT+Re.  Note that the scales of the axes are the same in both figures and that the experimental findings are met very well by the simulations. The simulated curves under treatment start at the beginning of the treatment and not at day zero. The detailed pictures showing the evolution of melanoma and T-cell populations during the therapy are given in Figure \ref{Biopictures}.

\begin{figure}[h!]
	\centering
	\begin{minipage}{.49\textwidth} 
		\vspace{-0.4cm}
		A \includegraphics[width=\textwidth]{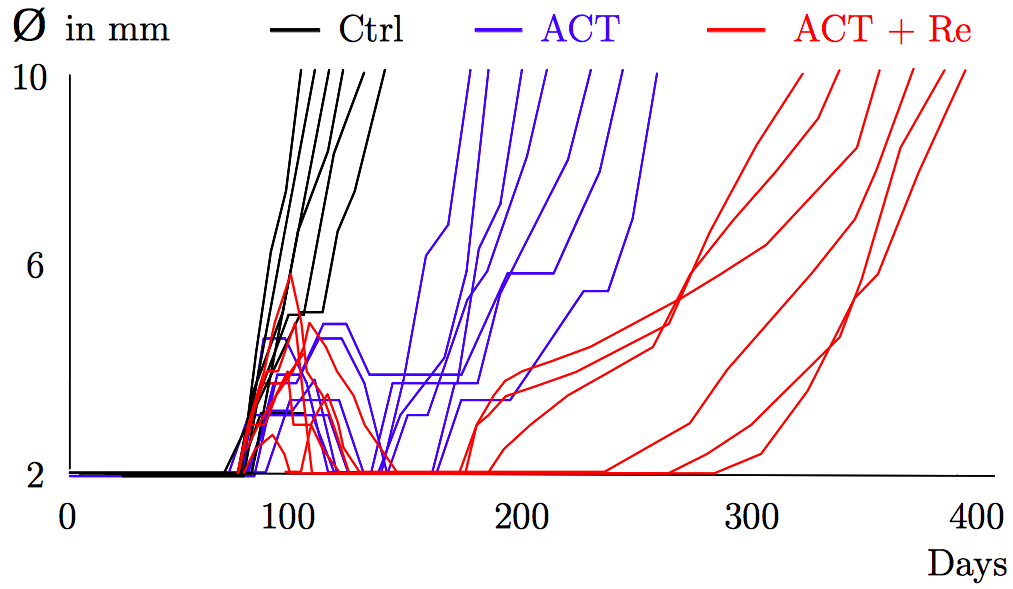}
	\end{minipage}\hfill
	\begin{minipage}{.49\textwidth} 
		B \includegraphics[width=\textwidth]{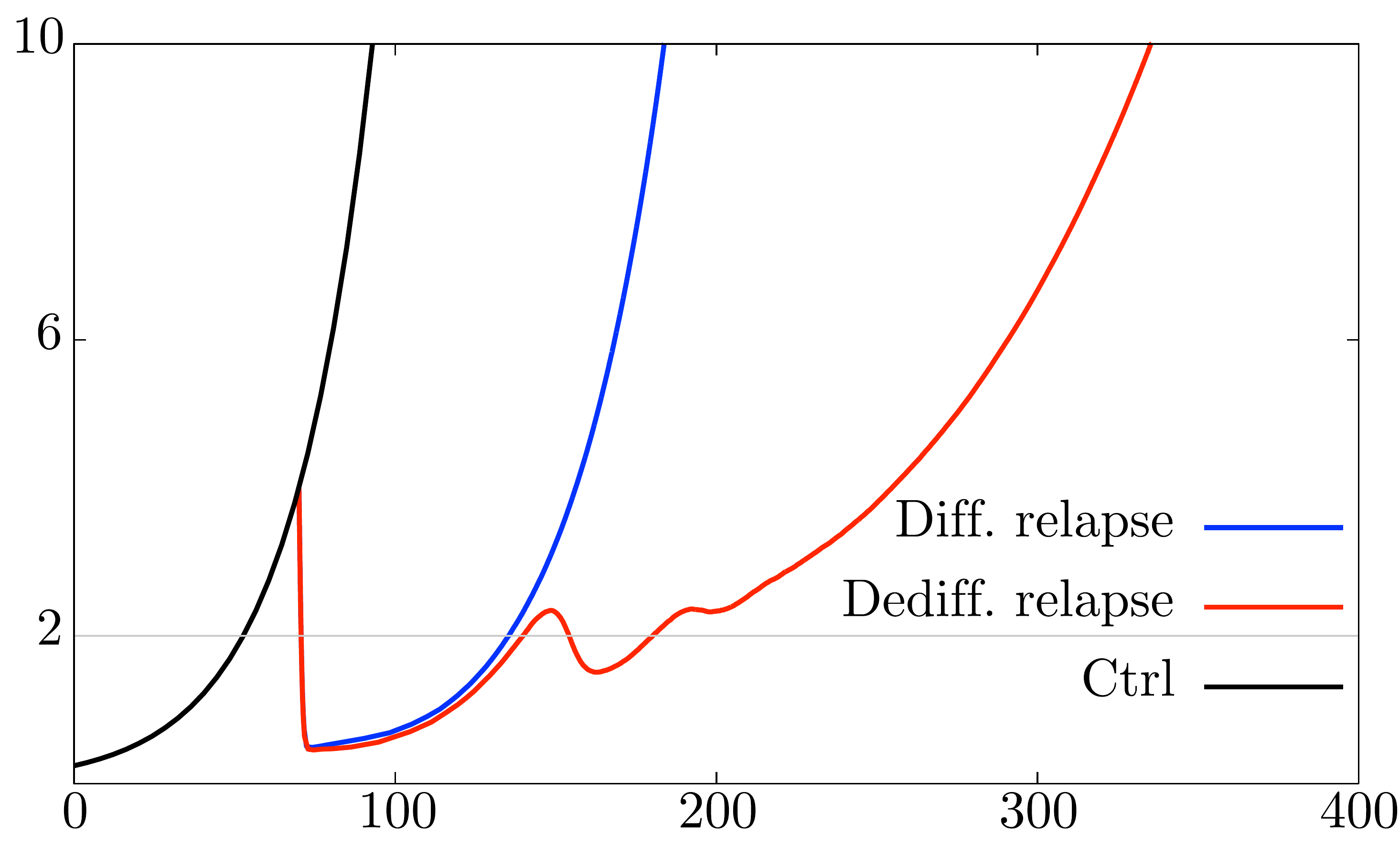}
	\end{minipage}
	\caption{
		\small {
			Comparison of experimental data obtained by Landsberg et al.\ with simulations for biologically reasonable parameters.
			The graphs show the diameter of the tumor measured in millimeters versus time in days after tumor initiation: (A) experimental data, (B) simulated data ($K=10^5$ and $\frakn_{z_x}(0)=0.02$).\label{Data}}}
\end{figure}

\begin{figure}[h!]
	\centering
	A \includegraphics[width=.46\textwidth]{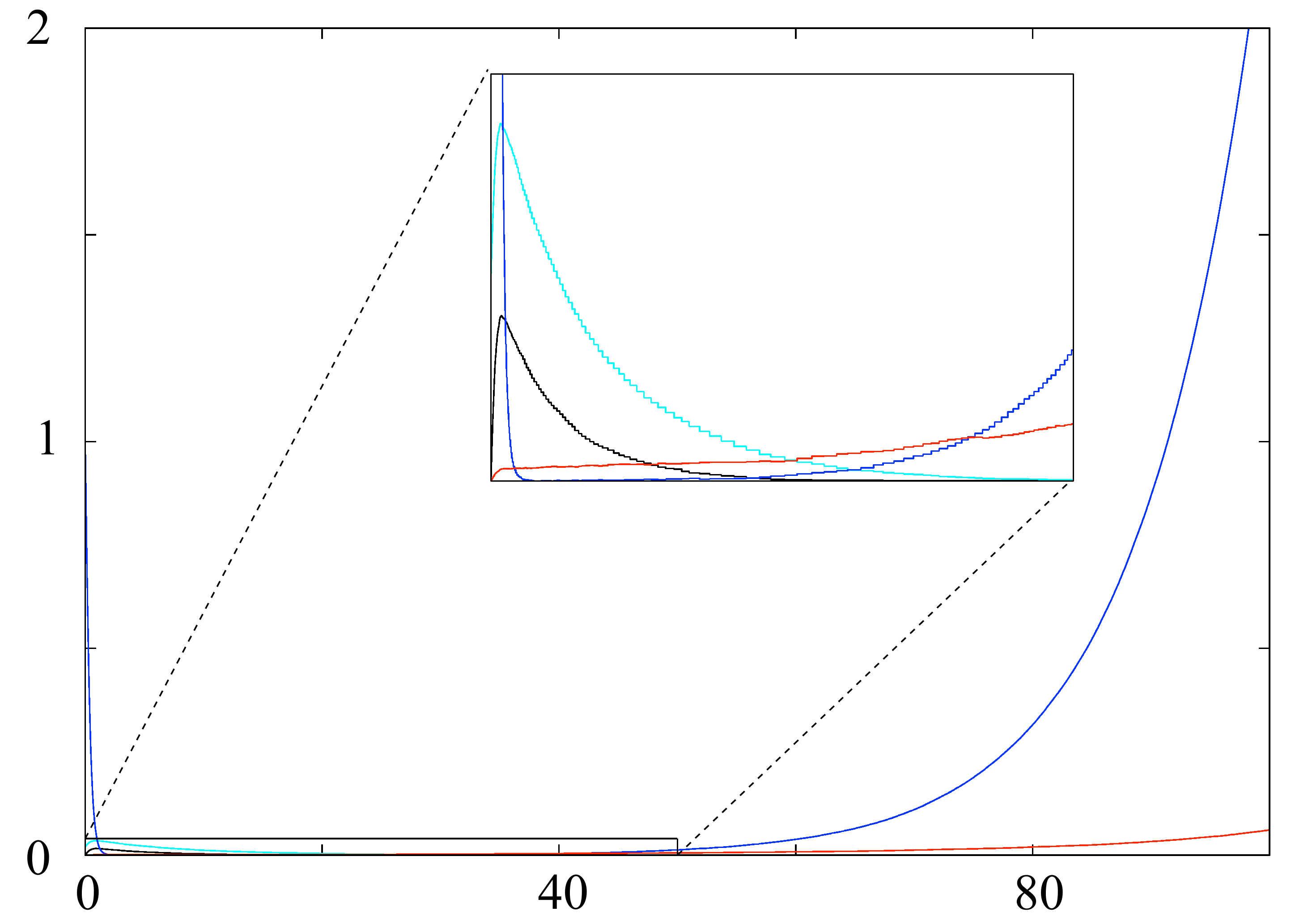}\hfill
	B \includegraphics[width=0.46\textwidth]{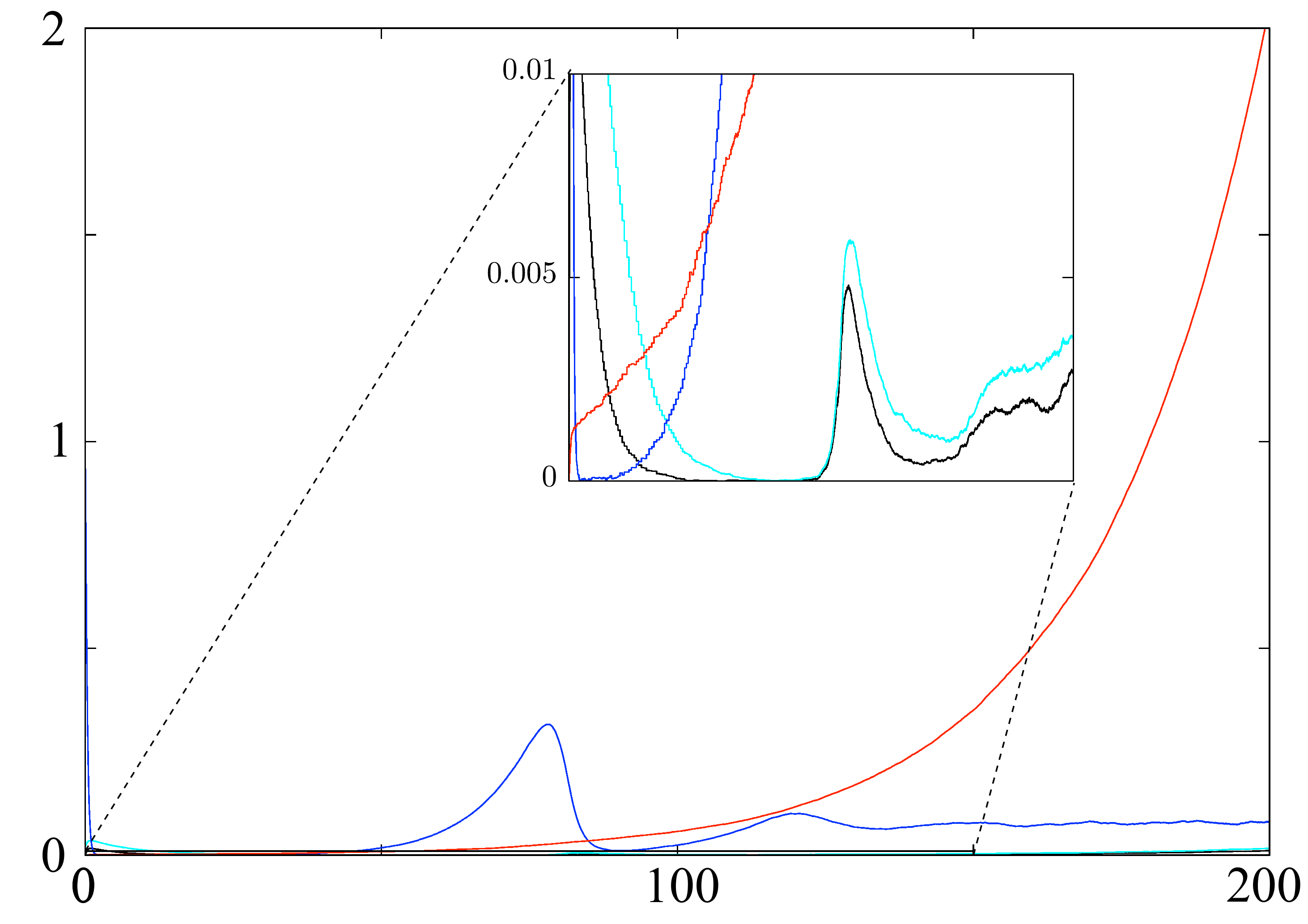}\\
	C \includegraphics[width=.46\textwidth]{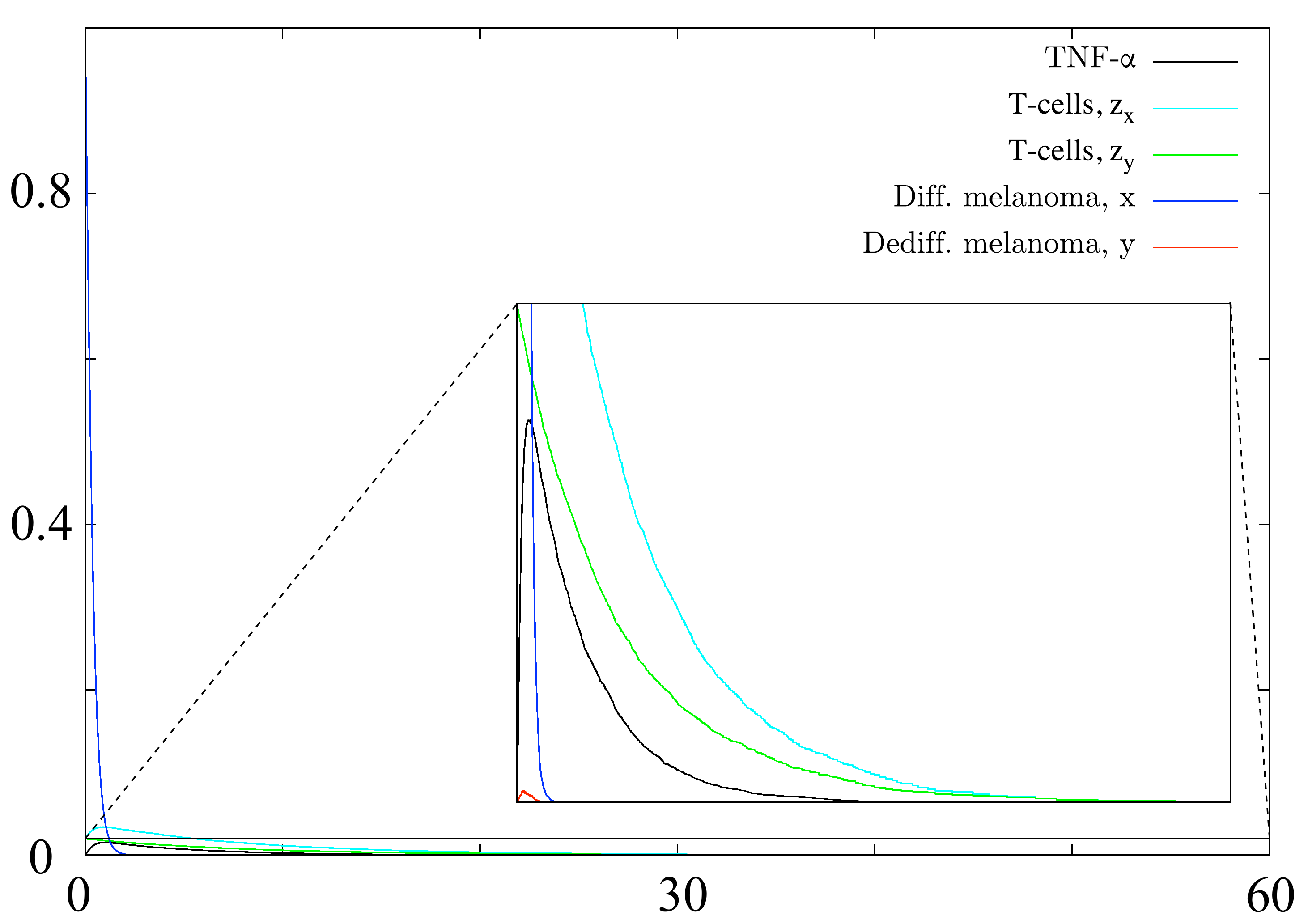}\hfill
	D \includegraphics[width=0.46\textwidth]{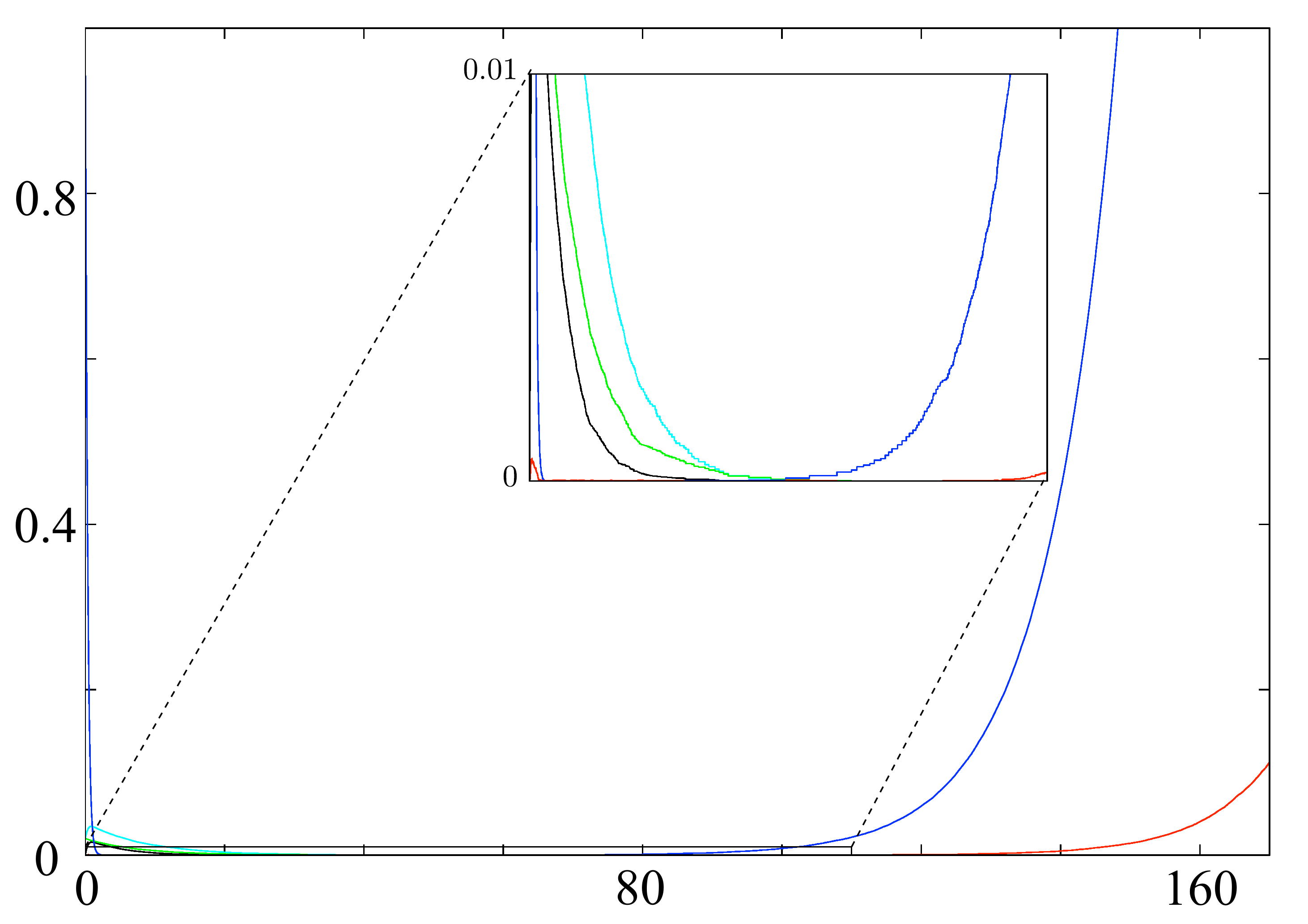}
	\captionof{figure}{\small
		Simulations for biological parameters.
		Therapy with one T-cell type:
		(A) differentiated relapse (T-cells $z_x$ die out),
		(B) dedifferentiated relapse (T-cells $z_x$ survive),
		Therapy with two T-cell types:
		(C) cure,
		(D) differentiated relapse (both T-cell types die out).
	} \label{Biopictures} 
\end{figure}

As there is no data for the case of two T-cells, numerical simulations of such a therapy strategy should be seen as 
predictions. For the new T-cell population (of type ${z_y}$) we choose the same parameters as for the first population (of type ${z_x}$), just the target is 
different. The therapy seems to be very promising: almost all simulations show a cure for these 
parameters, only very few times a relapse occurs. Nevertheless,  
the behavior of the system (e.g.\ the probability to end up in the different scenarios) depends strongly on the 
choice of certain parameters, as pointed out in the last two sections. In order to give a  reliable prediction we 
need data to obtain safer estimates for the most important parameters, which seem to be the switching 
and therapy rates as well as initial values. 

The initial values play an important role for the success of a therapy. In the case of 
therapy with T-cells of one specificity, increasing the initial amount of T-cells has the following effect: 
the melanoma cells are killed faster, the population of differentiated melanoma cells 
reaches a lower minimum and as a consequence the T-cells pass through a lower and broader 
minimum. The probability that the T-cells die out increases,
and a differentiated relapse is  more likely than in the case
of a smaller initial T-cell population. Moreover, the broadening of the minima causes a ``delay'' and both kind of 
relapses (consisting mainly of differentiated or dedifferentiated cells) appear later. 
But since the extinction of T-cells is more likely,  the tumor may reach its original size earlier, see Figure  \ref{fig-initdose}.
For an initial value ten times as large as in Figure \ref{Data} (B) the probability of an eradication of the tumor is still very small.
If the number of T-cells initially is half the number of tumor cells, the probability of a favorable outcome is much higher.
But 
such a high amount of T-cells is unrealistic.

\begin{figure}[h!]
	\centering
	\begin{minipage}{.49\textwidth} 
		C \includegraphics[width=\textwidth]{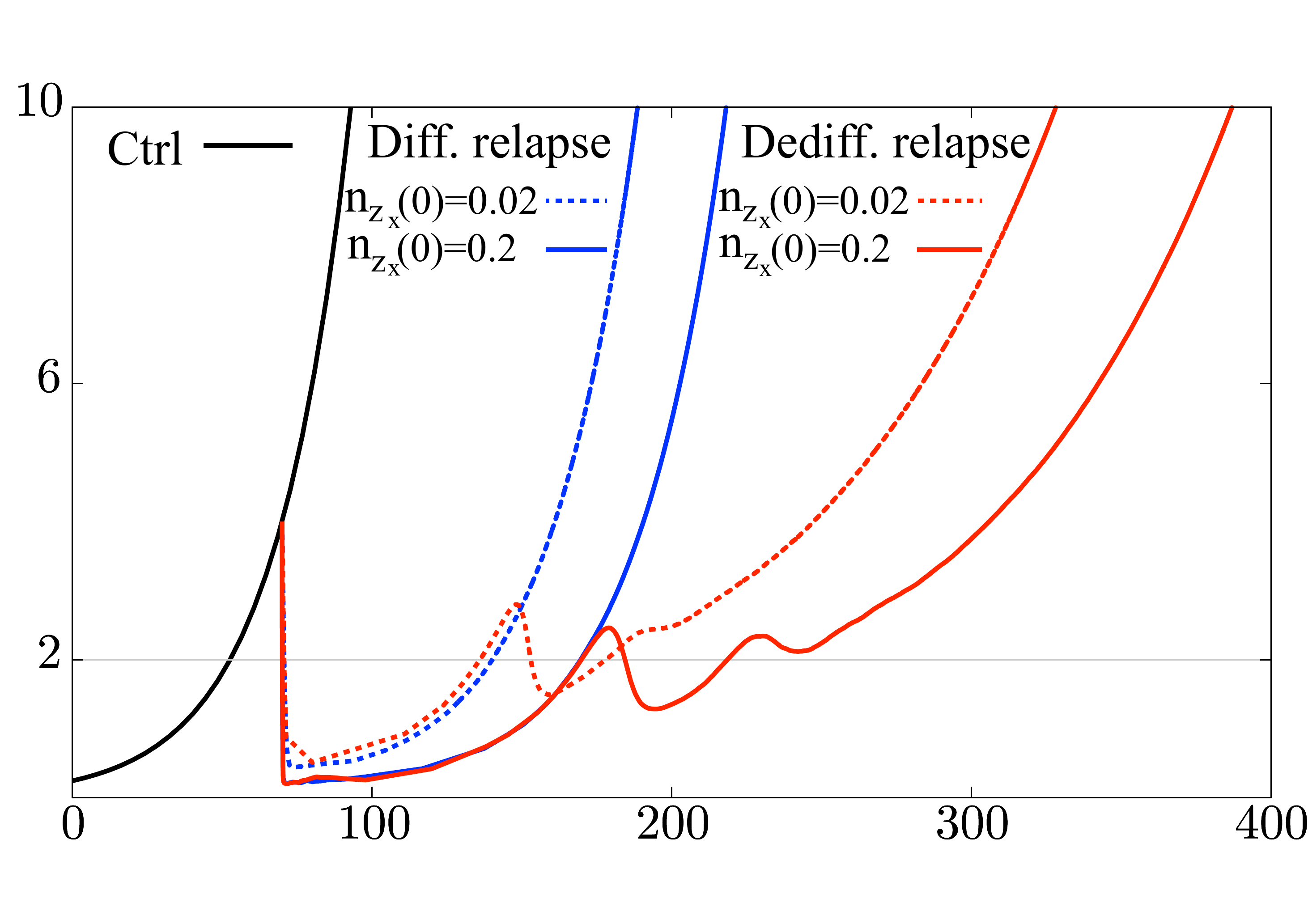} 
	\end{minipage}
	\caption{
		\small {
			Simulations for different initial doses of T-cells: $\frakn_{z_x}(0)=0.2$ and $\frakn_{z_x}(0)=0.02$.\label{fig-initdose}}}
\end{figure}

\subsubsection{Physiologically reasonable parameters}\label{subsec-parameters}

We explain here how we choose the biological parameters. Some parameters can be estimated from the
experimental data. Recall that the subject of \cite{Landsberg:2012vn}
is to investigate the behavior of  melanoma under T-cell therapy in
mice. Without therapy the tumor undergoes only natural birth, death
and switch events.

\begin{itemize}
	\item \textit{Choice of birth and death rates:} We assume that the number of cells in the tumor is described by   
	\begin{equation}\label{estimation-growth}
	N_t\approx N_0\exp (rt),
	\end{equation}
	where $N_t$ denotes the number of cells at time $t$, $N_0$ the initial population size and $r$ the overall 
	growth rate. Note that the estimate of the growth rate is independent  of the initial value. Figure 4 (A) in the main part shows that the tumor needs roughly 50 days
	(without therapy) to grow from 2 mm diameter to 
	10 mm diameter. Since the structure of a melanoma is 3-dimensional this corresponds roughly to $N_{50}=125 
	N_0$ which implies $r=0.1$. Unfortunately, no data that allow to estimate the ratio of birth and death 
	events are provided. As long as mutations are not considered this should not have a big impact and we 
	chose $b=0.12$ and $d=0.02$ for the differentiated as well as the dedifferentiated cells. Landsberg et al. 
	observed that the growth kinetics appear to be the same for both cell types, see 
	Supplementary Figure 11 in \cite{Landsberg:2012vn}. \\

	\item \textit{Choice of the competition:} We assume that the competition 
	has a very little effect here because the tumor grows exponentially in the observed time frame and does 
	not come close to its equilibrium. We choose the competition between melanoma cells of the same type as $c(x,x)=c(y,y)=0.00005$ and between different types of melanoma cells as $c(x,y)=c(y,x)=0.00002$. The values are not set to $0$ since the melanoma can grow only up to a finite 
	size. \\
	
	\item \textit{Choice of the switch parameters: } We can now estimate the
	switching parameters by using the data of Supplementary
	Figure 9e in \cite{Landsberg:2012vn}. In this  experiment where cell division is inhibited, we
	can set $b=0$.  Furthermore, the amount of
	TNF-$\alpha$ is constant and we set here $\frakn_w=2$. Thus, the dynamics of the melanoma populations is described by
	\begin{equation}
	\begin{array}{ll}
	\dot \frakn_{x} &= \frakn_{x} \big(-d(x)-c(x,x) \frakn_{x}-c({x
		,  y}) \frakn_y -2s_w(x,y)-s(x,y)\big)+s({y,x}) \frakn_y \\
	\dot \frakn_y &= \frakn_y \big(-d(y)-c({y,y}) \frakn_y-c({y, x}) \frakn_{x}-s({y,x})\big)
	+\left(2s_w(x,y)  +s({x, y})\right) \frakn_{x} \\
	\end{array}
	\end{equation}
	
	At the beginning of their observations the
	switch is very slow and speeds up after the first 24 hours. We assume that there is a delay until the 
	reaction really starts and thus we choose the proportions at day 1 ($\frakn_{x}=0.81$ and $\frakn_y=0.19$) as initial data and choose switching 
	parameters such that roughly the concentrations at day 2  ($\frakn_{x}=0.45$ and $\frakn_y=0.54$) and 3  ($\frakn_{x}=0.24$ and $\frakn_y=0.72$) are reached as shown in Figure 
	\ref{Switch}. Thereby we obtain
	$s(x,y)=0.0008, s(y,x)=0.065$ and $s_w(x,y)=0.33$. 
	Note that the experiments we refer to provide only in vitro data and it is not clear if the in vivo situation is similar.\\
	
	\begin{figure}[h!]
		\centering
		\includegraphics[width=0.49\textwidth]{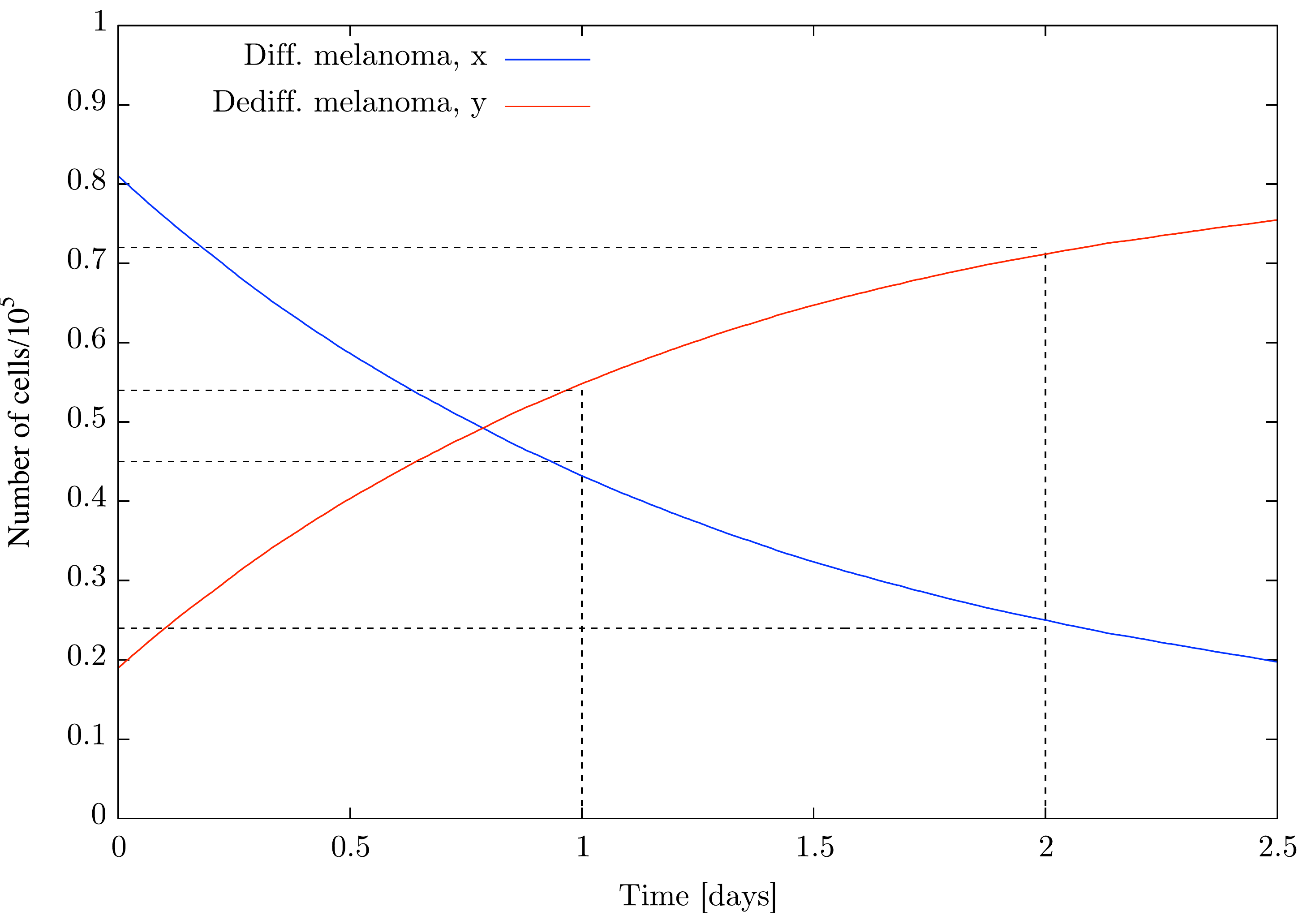}
		\captionof{figure}{\small
			%Left: Parabola describing the total mutation rate in a population under the influence of birth-reducing competition. 
			%Right: 
			Switch in the in vitro experiments for inhibited
			cell division and constant concentration of
			TNF-$\alpha$. Dashed lines indicate experimental data.} \label{Switch} 
	\end{figure}

	\item \textit{Choice of parameters concerning T-cells: }
	It remains to characterize the T-cells. Their natural birth rate is set to $0$ since they are transferred by 
	adoptive cell transfer and not produced by the mice themselves and do not proliferate in absence of 
	targets. We assume that they have a relatively high birth rate 
	depending on the amount of cancer cells present, $b(z_x,x)=b(z_y,y)=2$ and produce one TNF-$\alpha$ molecule when 
	they divide, $\ell^{\text{prod}}_w(z_x,x)=\ell^{\text{prod}}_w(z_y,y)=1$. Furthermore, we assume that  $4.5$  cancer cells can be killed per hour (including indirect mechanisms), 
	$t(z_x,x)=t(z_y,y)=108$.
	The rate of death for the T-cell population is chosen as  $d(z_x)=d(z_y)=0.12$. These parameters are chosen such that 
	the qualitative behavior 
	of the tumor was recovered. We choose the same parameters for the 
	second T-cell type as for the first one because there are no data concerning the second T-cell type. \\
	
	\item \textit{Choice of starting values and the scale $K$: } We
	set $K=10^5$, the initial value for the differentiated melanoma
	cell population to $1$  and to $0$ for the population of dedifferentiated melanoma cells. The ratio of differentiated and dedifferentiated cells is not known for small tumors which do not result from cell transfer of cells of in vitro cell lines.
	The initial value of the T-cell population is set to $0.02$. 
	We assume that the T-cells appear directly in the tumor, i.e.\ the migration phase into the tumor is not modeled.\\
\end{itemize}

To sum up, biological rates (per day) and initial conditions (in 100 000 cells) are:
\begin{equation}\label{1tcell-bio-parameter}
\scriptsize
\begin{array}{l@{\hspace{1cm}}l@{\hspace{1cm}}l@{\hspace{1cm}}l}
b({x})=0.12					&		b(y)=0.12			&	b(z_x,{x})=2			&	\ell^{\text{prod}}_w(z_x,x)=1\\
d({x})=0.02					&		d(y)=0.02			&	t(z_x,{x})=108			&	\ell^{\text{kill}}_w(z_x,x)=0 \\
c({x},{x})=5\cdot 10^{-5}	&	c({x},y)=2\cdot 10^{-5}	&	d(z_x)=0.12				&	d(w)=0.2\\
c(y,{x})=2\cdot 10^{-5}		&	c(y,y)=5\cdot 10^{-5}	&							&	 s_w(x,y)=0.33\\
s(x,y)=0.0008				&		s(y,x)=0.065		&							&\\
\frakn_{{x}}(0)=1			&	\frakn_{y}(0)=0			&	\frakn_{z_x}(0)=0.02	&	K=10^5
\end{array}
\end{equation}

The additional parameters in the case where a second T-cell is used are: 
\begin{equation}\label{2tcell-bio-parameter}
\begin{array}{lll}
t(z_y,{y})=108&\ell^{\text{prod}}_w(z_y,y)=1 & d(z_y)=0.12 \\
b(z_y,{y})=2 & \ell^{\text{kill}}_w(z_y,y)=0				&\frakn_{z_y}(0)=0.02
\end{array}
\end{equation}

\section{Arrival of a mutant}\label{sec-mutant}

 The model we defined in Section \ref{sec-model} can be seen as a generalization of one of the standard models of adaptive dynamics, usually called BPDL, which was introduced in publications of Bolker, Pacala,
 Dieckmann and Law \cite {BolPac1,BolPac2,DieLaw}. In a possibly continuous trait space $\calX\subseteq\R^d$, the BPDL model allows for each individual  to reproduce, with or without mutation, or die due to natural death or to competition (e.g.\ this amounts to only keep parameters $b(x), d(x), c(x,y), m(x,y)$ and $\mu$ in Subsection \ref{notations}).
 %
% No a priori fitness is given. The selection process and the
% proper definition of fitness emerge from the interaction between
% individuals and competition for limited resources:

	 In the limit of large populations ($ K \to \infty $ and $ \mu$ fixed)
	Fournier and
	M\'el\'eard \cite {FouMel2004} proved a law of large numbers: the process
	converges to a system of
	deterministic integro-differential equations. 
	In the  case $\mu\equiv0$ the process converges
	to the solution of a system of coupled logistic equations (of
	Lotka-Volterra type) without mutations.
	The limit of large populations ($ K \to \infty $) followed by the limit of rare mutations ($ \mu \to0 $) on
	the timescale $ t \sim \log (1 / \mu) $ 
	was studied by Bovier and Wang \cite {BovWan2013} and a deterministic jump process
	is obtained as limiting behavior.
	
	 The simultaneous  limits of large populations and rare mutations, where  $ (K, \mu) \to (\infty, 0) $ at a rate such that
	$ 1 / (K \mu) \gg \log K $ and a timescale $ t \sim 1 / (K \mu) $ 
	was studied by Champagnat and M\'el\'eard
	\cite {Cha2006,ChaMel2011}. At this scale the system has time
	to equilibrate between two successive mutations. The long-term behavior of
	the population can be described  as a  Markov jump process along a sequence of 
	equilibria of, in general, polymorphic populations.
	An important (and in some sense generic) special case occurs when the mutant population  fixates while resident 
	population dies out in each step. The corresponding jump process is called  the \emph {Trait Substitution
		Sequence} (TSS) in adaptive dynamics.  Champagnat \cite{Cha2006} derived criteria in the context of individual-based 
	models under which convergence to the  TSS can be proven.
	 The general process is called the \emph {Polymorphic  Evolution Sequence}
	(PES). It is described partly implicitly in \cite{ChaMel2011}, as it involves the identification of attractive fixed points in a sequence of Lotka-Volterra equations that
	are in general not tractable analytically.
	
	 In situations when the population converges to a TSS, one may take a further limit of
	 small mutation steps,  $\sigma $, and look at the time scale $t\sim1/\sigma^2$.  The mutant is then of trait $y=x+\sigma h$ where $h$ is
	 chosen according to a probability kernel on $\calX$.  For birth and death rates that vary  smoothly on $\calX$, this
	 ensures a vanishing evolutionary advantage  of mutants  when $\sigma \to 0$. 
	The TSS converges in this limit to the so-called 
	\emph{Canonical Equation of Adaptive Dynamics} (CEAD), see 
	e.g.\ \cite{ChaMel2011},
	which describes the continuous evolution of a monomorphic population in a fitness landscape.  
	 The convergence of the individual-based model to the CEAD in a single step has recently  be shown  by 
	Baar et al.\  \cite{BBC2015}, i.e.\ the limit  $(K, \mu, \sigma) \to (\infty, 0,0) $ of large populations, rare mutations and small mutation steps are taken simultaneously, provided 
	$ 1/ (K \mu) \gg \log (K)/\sigma  $ and $ 1 / \sqrt {K} \ll \sigma \ll1 $. 
	The time-scale on which this convergence takes place is  $ t \sim 1 / (K \mu \sigma^2) $, and corresponds to the combination of the previous two.
	
	 Costa et al.\ %, Hauzy, Loeuille and M\'el\'eard  
	\cite{CHLM} study an extension of the model with a predator-prey relation. The predator-prey kernel is an  explicit function of parameters describing defense strategies for preys, together with the ability of predators to
	circumvent the defense mechanism. In the limit  $ 1 / (K \mu) \gg \log K $, convergence to a Markov jump process generalizing
	the \emph{Polymorphic Evolution Sequence} is derived, and in the subsequent limit $ \sigma \to0 $, in the case of a monomorphic prey and  predator populations, convergence to an extended version of the CEAD is obtained.
	
	%\vspace{2cm}

\subsection{Rare mutations and fast switches}\label{sec-PES}

In this section, we give  a  generalization of the Polymorphic Evolution Sequence in the case of fast switches in the phenotypic space, without therapy (no predator-prey term). We consider the case of rare mutations in large populations on a timescale such that a population reaches equilibrium before a new mutant appears:
\begin{equation}\label{conditions}
\forall V>0, \qquad \exp(-VK)\ll \mu^K_g \ll \frac{1}{K\log K}, \qquad \text{as } K\to\infty.
\end{equation}

Champagnat and M\'el\'eard's proof of convergence to the PES \cite{ChaMel2011} relies on a precise study of the way a mutant population fixates, which we now describe. A crucial assumption is that the large population limit is a competitive Lotka-Volterra system with a unique stable\footnote{By stable fixed point we mean that the eigenvalues of the Jacobian matrix of the system at the fixed point have all strictly negative real parts.} fixed point $ \bar \frakn$.
The main task is to study the invasion of a mutant  that has just appeared in a
population close to equilibrium. The invasion can be divided into three steps. First, as long as the
mutant population size is smaller than $K \epsilon$ for a fixed small $\epsilon>0$, the resident population
stays close to its equilibrium. Therefore the mutant population  can be approximated by a binary branching process. Second, once the mutant population reaches the level $K \epsilon$, the whole system
is close to the solution of the corresponding deterministic system (this is a consequence of
Proposition~\ref{det-limit}) and reaches an $\epsilon$-neighborhood of $ \bar \frakn$ in finite time. Finally, the subpopulations which have a zero coordinate in $ \bar \frakn$ can be approximated  by sub-critical branching processes. The durations of the first and third steps are proportional to $\log(K)$, whereas that of
the second step is bounded.  The second inequality in \eqref{conditions} guarantees that, with high
probability, the three steps of invasion are completed before a new mutation occurs.

\subsubsection{Invasion fitness}

  Given a population in a stable equilibrium  that populates a certain set
  of traits, say $M\subset \calX$, the invasion fitness $f(x,M)$ is the growth rate of a population consisting of a single individual 
  with trait $x\not\in M$ in the presence of the equilibrium population $ \bar\frakn$ on $M$. 
  In the case of the BPDL model, it is given 
  by 
  \begin{equation}
  f(x,M)= b(x)-d(x)-\sum_{y\in M} c(x,y) \bar\frakn_y.
  \end{equation}
  Positive $f(x,M)$ implies that a mutant appearing with trait $x$ from the equilibrium 
  population on $M$ has a 
  positive probability (uniformly in $K$) to grow to a population of size of order $K$; 
  negative invasion fitness implies that such a mutant 
  population will die out with probability tending to one (as $K\to\infty$) before this happens.
  
  Invasion fitness is a fundamental concept in the analysis of stochastic population
  models. We first generalize it to the case where fast phenotypic switches are present, for pure tumor evolutions, i.e. we  ignore the T-cells and the cytokines in our model.

Let us consider an initial population of genotype $g$ (associated with $\ell$
different phenotypes $ p_1,\ldots, p_\ell$) which is able to
mutate at rate $\mu_g^K$ to another genotype $ g '$, associated with $k$
different phenotypes $ p'_1,\ldots, p'_k$. The assumption \eqref{conditions} ensures that no mutation occurs during the equilibration phase in the phenotypic space.

Consider as initial condition $\frakn(0)=(\frakn_{(g,p_1)}(0),\ldots,\frakn_{(g,p_\ell)}(0))$ a stable fixed point, $\bar\frakn$, of the following system:
\begin{equation}\label{deterministic-system-initial}
\dot \frakn_{(g,p_i)} =\frakn_{(g,p_i)} \left(b_i-d_i-\sum_{j=1}^{\ell}c_{ij}\frakn_{(g,p_j)}-\sum_{j=1}^{\ell}s_{ij}\right)
+\sum_{j=1}^{\ell}s_{ji}\frakn_{(g,p_j)},\quad\quad i=1,\ldots, \ell.
\end{equation}
We write for simplicity $b_i=b(p_i)$, $d_i=d(p_i)$, $c_{ij}=c(p_i,p_j)$, $s_{ij}=s_g(p_i,p_j)$, and later
$b'_i=b(p'_i)$, $d'_i=d(p'_i)$, $c'_{ij}=c(p'_i,p'_j)$, $\tilde c_{ij}=c(p_i,p'_j)$, $s'_{ij}=s_{g'}(p'_i,p'_j)$.

We want to analyze whether a  single mutant appearing with a new genotype $g'$ (and one of its  possible 
phenotypes $p'_i$), has a positive probability to give rise to a population of size of order $K$.  
Using the same arguments as  Champagnat et al. \cite{Cha2006,ChaMel2011}, it is easy to show that as long as the mutant population 
has less than $\epsilon K$ individuals (with $\epsilon\ll 1$), the mutant population  
$(g ', p'_1),\ldots, (g', p'_k) $ is well approximated  by a $k$-type
branching process, where each individual undergoes
binary branching, death, or switch to another  phenotype with the following rates:
\begin {equation}\label{MTBP}
\left.
\begin {array}{l@{\quad\text {with  rate}\quad}l}
p'_i  \to p'_i p'_i  &  b'_i\\
p'_i  \to \varnothing & d'_i + \sum_{l=1}^{\ell}\tilde c_ {il} \bar \frakn_l \\
p'_i  \to p'_j &   s'_ {ij}\\
\end {array}\right\}\quad \text{ for } i,j\in\{1,\ldots,k\}.
\end {equation}
where $(\bar \frakn_1,\ldots,\bar \frakn_\ell)$ denotes the fixed point of \eqref{deterministic-system-initial}. 
We will assume that the switch rates $s'_{ij}$
are the transition rates of an irreducible Markov chain on $\{1,\dots, k\}$. The simplest example is the case where $s'_{ij}>0$, for all $i,j\in\{1,\ldots,k\}$.

Multi-type branching processes have 
been analyzed by  Kesten and Stigum \cite{KesSte1,KesSte2,KesSte3}  and Athreya and Ney \cite{AthNey}. 
% in their generalization to the continuous time setting \cite{AthNey}
Their behavior are classified in terms of the  matrix $A$, given by
\begin{equation}
A=\begin{pmatrix}
f_1		&	s'_{12}	&	\ldots	&	s'_{1k}	\\
s'_{21}	&	f_2		&			&	\vdots	\\
\vdots	&			&	\ddots	&			\\
s'_{k1}	&	\ldots	&			& f_k
\end{pmatrix},
\end{equation}
where  
\begin{align}\label{apparent-fitness}
f_i:&=b'_i-d'_i - \sum_{l=1}^{\ell}\tilde c_ {il} \cdot \bar \frakn_{l}-\sum_{j=1}^{k}s'_{ij},\quad  i=1,\ldots,k. 
\end{align}
Note that $f_i$ \emph{would be} the invasion fitness of phenotype $i$ \emph{if} there was no switch back from the
other phenotypes to $i$ (or if all switched cells would be killed). 
It is well-known that 
the multi-type process is super-critical, if and only if the largest eigenvalue, $\l_1=\l_1(A)$, of the matrix
$A$ is strictly positive, meaning that 
if $\l_1>0$, the mutant population will grow (with rate $\lambda_1$)  to any desired 
population size before dying out. Thus $\l_1(A)$ is the appropriate generalization of the invasion fitness of a genotype:
\begin{equation}
\label{invasion-fitness-case1}
F(g',g):=\l_1(A).
\end{equation}
This notion can easily be generalized to the case when the initial condition is the equilibrium population of several 
coexisting genotypes.
Note that this notion of invasion fitness of course  reduces to  the  standard one of \cite{Cha2006} if there is 
only  one mutant phenotype.
This settles the first step of our analysis, which is the invasion of
the mutant. \\

\subsubsection{Towards a generalized Polymorphic Evolution Sequence}

In fact, one can say more about how the mutant population grows. 
Write $Z^{(i)}_j(t)$ for the number of individuals of phenotype $p_j$
existing at time $t$ for this branching process when the first mutant
is of phenotype $p_i$. 
Then, for $i,j\leq k$, 
\begin{equation}
\mathbb E(Z^{(i)}_j(t))=[M(t)]_{i,j}
\end{equation}
where $M(t)$ is the $k\times k$-matrix 
\begin{equation}
M(t)=\exp(At).
\end{equation}

Assume that the largest eigenvalue $\l_1(A)$ is simple and strictly positive.
Let $v$ and $u$ be the left and right eigenvectors of $A$  associated to $\l_1$, normalized such that 
$u\cdot v=1$ and $u\cdot 1=1$.
The extinction probability vector $q=(q_1,\ldots,q_k)$ where
$q_i=\mathbb {P}(Z^{(i)}(t)=0\text{ for some } t)$ is the unique solution of the
system of equations:
\begin{equation}
d'_{i}+\sum_{l=1}^{\ell}\tilde c_{il}\bar{\frakn}_l+b'_i q_i^2+\sum_{j=1}^{k}s'_{ij} q_j=q_i\left(d'_{i}+\sum_{l=1}^{\ell}\tilde c_{il}\bar{\frakn}_l+b'_i +\sum_{j=1}^{k}s'_{ij} \right),\quad\quad i=1,\ldots,k
\end{equation}
which has in general no analytical solution. Then the following 
limit theorem holds \cite{KesSte1}:
\begin{equation}
\label{limit-thm}
\lim_{t\to\infty}\left(Z_1^{(i)}(t),\ldots,Z_k^{(i)}(t)\right)e^{-\lambda'_1t}=W_i\cdot (v_1,\ldots,v_k) \quad \text{a.s.}
\end{equation}
where $(W_i)_{ i=1,\dots,k}$ is  random vector with non-negative entries such that 
\begin{equation}\label{limit-thm-bis}
\mathbb P(W_i=0)=q_i\quad\text{and}\quad \mathbb E (W_i)=u_i.
\end{equation}
In particular, conditionally on survival,
the phenotypic distribution of the mutant populations converges
almost surely to a deterministic quantity, which  moreover does not depend on
the phenotype of the initial mutant, namely
\begin{equation}
\label{limit-ratio}
\lim_{t\to\infty}\frac{Z^{(i)}_{j'}(t)}{\sum_{j=1}^{k}Z^{(i)}_j(t)}=\frac{v_{j'}}{\sum_{j=1}^{k}v_j},
\quad\forall i,j'=1,\ldots,k.
\end{equation}
For us, this implies the important fact  that the phenotypic structure of the mutant population once it reaches the level
$\ep K>0$ is almost deterministic. 
Then, conditionally
on survival, \eqref{limit-thm} implies that the time  $\tau_{\ep K}$ until the total mutant population reached $\ep K$ 
is of order $\log(K)$. Moreover, the
proportions of the $k$ types of phenotypes converge to 
deterministic quantities given above, 
\begin{equation}
\frac1K (Z_1(\tau_{\ep K}),\ldots,Z_k(\tau_{\ep K}))\to\left(\frac{\ep
	v_1}{\sum_{j=1}^{k}v_j},\ldots,\frac{\ep v_k}{\sum_{j=1}^{k}v_j}\right), \quad \text{in distribution, as }K\to\infty.
\end{equation}
%The introduction of two different timescales for the mutation
%and the switch will allow us to see the system on the mutation time scale as an effective
%Markov chain in the genotypic space, which equilibrates at a
%fast speed in the phenotypic space. This Markov chain constitutes a generalization of the  \emph{Polymorphic Evolution Sequence} of \cite{ChaMel2011}.
Once the mutant population has reached the level $\ep K$, the behavior of the process can be
approximated by the solution of the deterministic system:
\begin{align}\label{deterministic-system-mutant}\nonumber
\dot \frakn_{(g,p_i)} &=\frakn_{(g,p_i)} 
\left(b_i-d_i-\sum_{j=1}^{\ell}s_{ij}
-\sum_{j=1}^{\ell}c_{ij}\frakn_{(g,p_j)}
-\sum_{j=1}^{k}\tilde c_{ij}\frakn_{(g', p'_j)}\right)
+\sum_{j=1}^{\ell}s_{ji}\frakn_{(g,p_j)},\\
\dot \frakn_{(g', p'_i)} &=\frakn_{(g', p'_i)} 
\left(b'_i-d'_i-\sum_{j=1}^{k}s'_{ij}
-\sum_{j=1}^{k}c'_{ij}\frakn_{(g', p'_j)}
-\sum_{j=1}^{\ell}\tilde c_{ij}\frakn_{(g,p_j)}\right)
+\sum_{j=1}^{k}s'_{ji}\frakn_{(g',p'_j)},
\end{align}
(where $i$ runs from 1 to $\ell$ in the first line and from 1 to $k$ in the second one)
with initial conditions in a small 
neighborhood of 
\begin{equation}\label{start-point}
(\frakn_{(g,p_1)}(0),\ldots,\frakn_{(g,p_\ell)}(0),\frakn_{(g',p'_1)}(0),\ldots,\frakn_{(g',p'_k)}(0))=
\left(\bar{\frakn}_1,\ldots,\bar{\frakn}_\ell,
\frac{\ep
	v_1}{\sum_{j=1}^{k}v_j},\ldots,\frac{\ep v_k}{\sum_{j=1}^{k}v_j}\right).
\end{equation}
If the system \eqref{deterministic-system-mutant} is such that a neighborhood of 
\eqref{start-point} is  attracted to the same stable fixed point, we are in the same situation as in 
Champagnat and M\'el\'eard \cite{ChaMel2011} and get the generalization of the Polymorphic Evolution Sequence on the genotypic trait space. 

Observe that in the case when the system of equations \eqref{deterministic-system-mutant}
has multiple attractors, and different points near \eqref{start-point} lie in different basins of attraction, then for finite 
$K$, the choice of attractor the system approaches may be random. 
The characterization of the asymptotic behavior of the system
\eqref{deterministic-system-mutant} is needed to describe the final
state of our stochastic process. This is in general a very difficult
and complex problem, which is not doable analytically and  requires numerical analysis.

Figure  \ref{mutants} shows examples where in a population consisting only of type $(g,p)$ a mutation to genotype $g'$ occurs. $g'$ is associated to two possible phenotypes $p_1$ and $p_2$. 
%In the case (A) the fitness of the genotype is positive, although each phenotype has a negative initial growth rate. 
%This is possible because an outgoing switch is a loss of a particle for a phenotype, but not for the whole genotype.
%In case (B) $p_2$ has a negative initial growth rate but can switch to $p_1$  which has a positive one. 
%The fitness of the genotype $g'$ is positive and equal to the growth rate of $p_1$, since the switch back is set to zero in this example.

%

\begin{figure}[h!]
	\centering
	\begin{minipage}{.49\textwidth} 
		A \includegraphics[width=\textwidth]{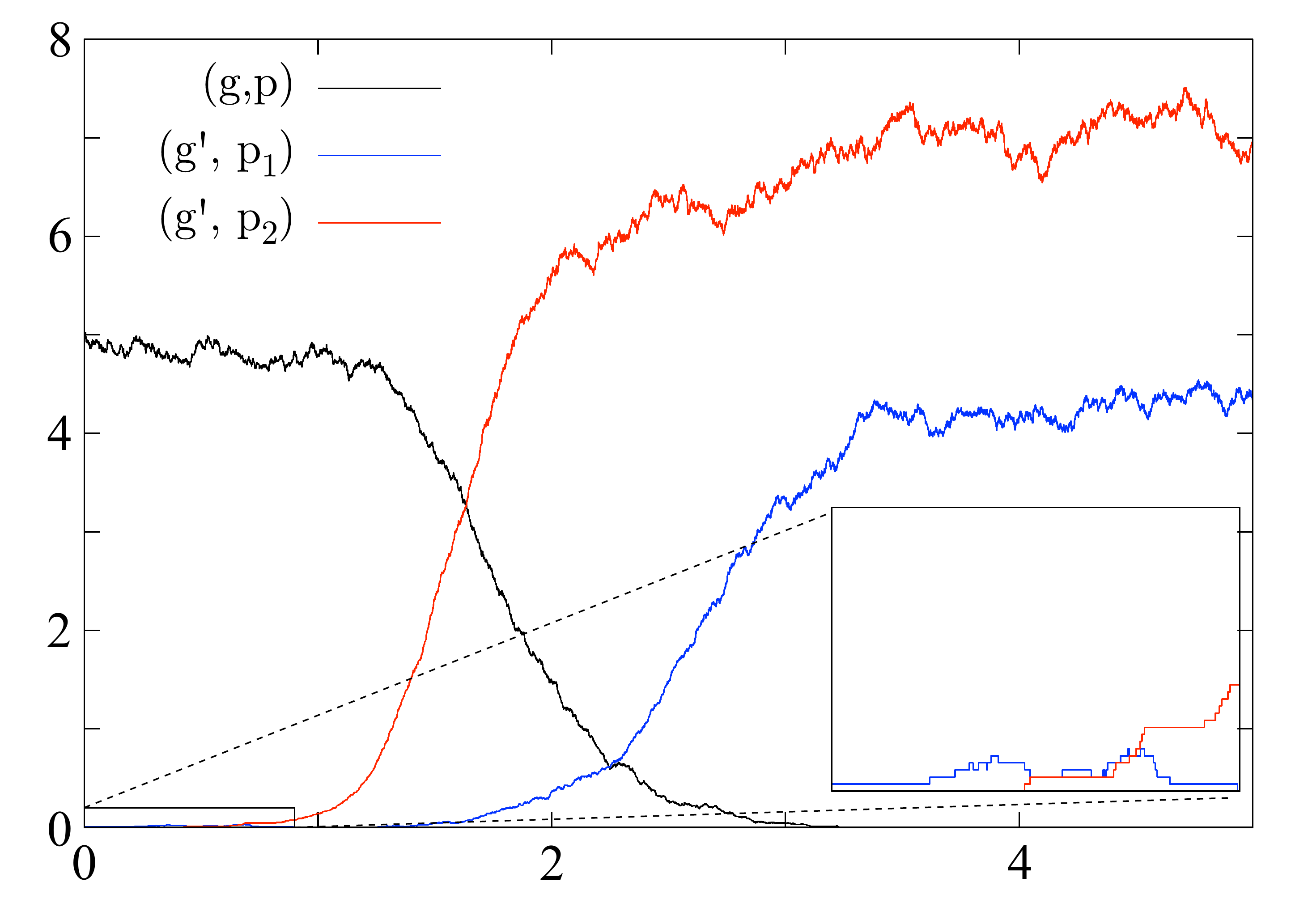}
	\end{minipage}\hfill
	\begin{minipage}{.49\textwidth} 
		B \includegraphics[width=\textwidth]{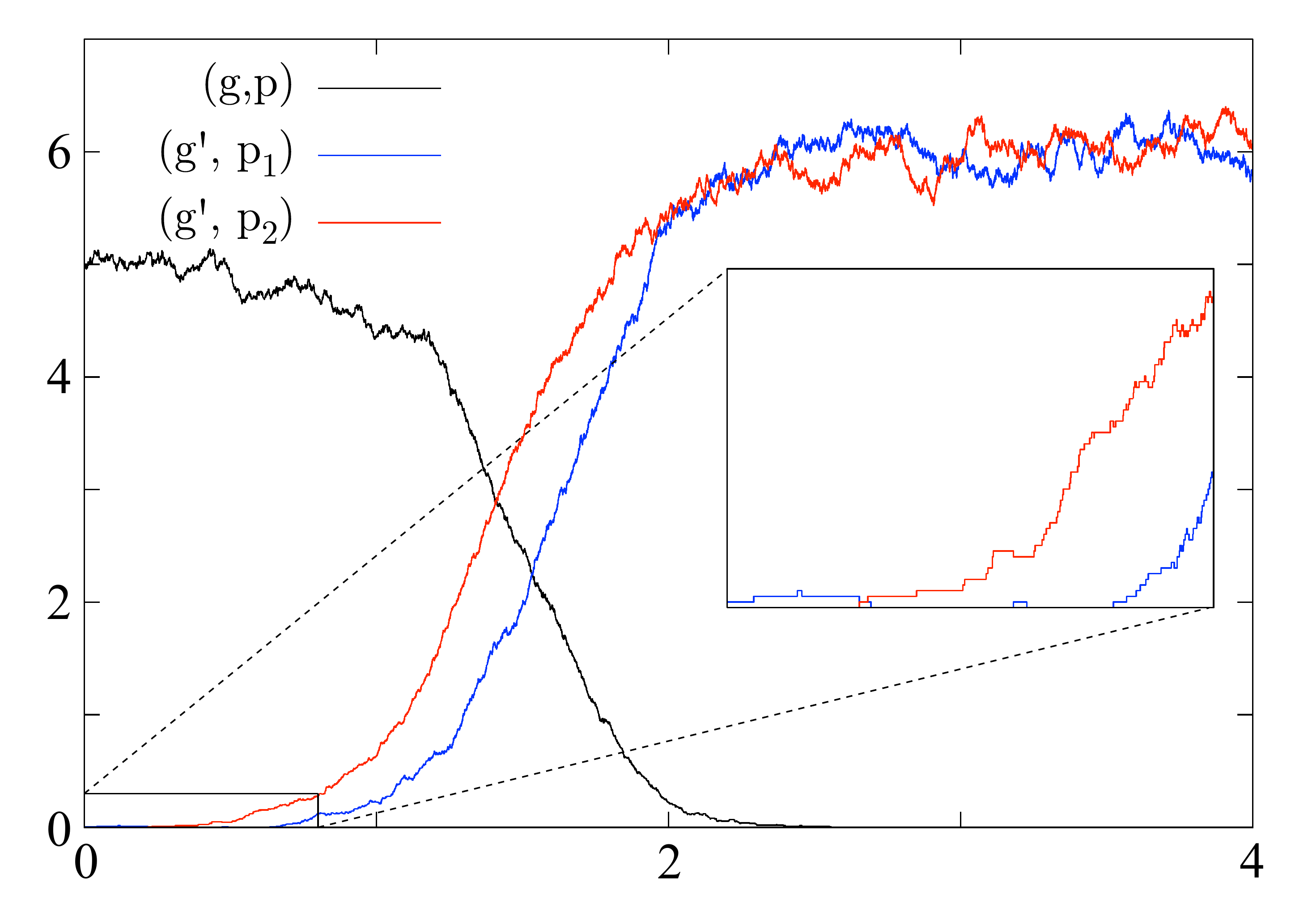}
	\end{minipage}
	\caption{\small { Simulations for rare mutations in combination with fast switching, where the number of individuals 
			divided by $200$ is plotted versus time.
			(A) The mutant phenotype $p_2$ has a negative initial growth rate but can switch to $p_1$  which has a positive one. The fitness of the genotype $g'$ is positive.
			(B) The fitness of the mutant genotype $g'$ is positive, although 
			each phenotype has a 
			negative initial growth rate. This is possible because an outgoing switch is a loss of a particle for a phenotype, but not for the 
			whole genotype. 
			%	
			%	 (A) Positive fitness of $g'$ due to bidirectional switches between $p_1$ and $p_2$ (both \hannah{sub-critical?}) and
			%		(B) switch from $p_2$ (sub-critical) to $p_1$ (supercritical) but not vice versa.
			\label{mutants}} }
\end{figure}

Figure \ref{mutants} (A) is realized with following parameters:
\begin{equation}\label{nobackswitch-parameter}
\begin{array}{l@{\hspace{1cm}}l@{\hspace{1cm}}l@{\hspace{1cm}}l}
b_0=5			&	b_1=6				&	b_2=6						&	s_{12}=0.1\\
d_0=0			&	d_1=0				&	d_2=0						&	s_{21}=2\\
c_{00}=1		&	c_{10}=1			&	c_{20}=1					&	\\
c_{01}=1		&	c_{11}=1			&	c_{21}=0					&\\
c_{02}=0		&	c_{12}=0			&	c_{22}=1					&\\
\frakn_{(g,p)}(0)=5&\frakn_{(g',p'_1)}(0)=0&\frakn_{(g',p'_2)}(0)=1	&	K=200
\end{array}
\end{equation}
In this case, $p'_2$ can switch to $p'_1$ but the back-switch is very weak, and we have $f_2<0$ and $f_1>0$ according to definition \eqref{apparent-fitness}. 
%If the first mutant is of phenotype $p'_1$, the mutant population grows with rate $f_1$, and if the first mutant is of phenotype $p'_2$, which has a negative initial growth rate, it switches with positive probability to the supercritical phenotype $p'_1$. 
The global fitness of the genotype $g'$ is positive and close to $f_1$. The coexistence of the two phenotypes depends on the cross-competition $c_{12}$ and $c_{21}$. 

Figure \ref{mutants} (B)  shows an example of the case discussed above with the following parameters:
\begin{equation}\label{helix-parameter}
\begin{array}{l@{\hspace{1cm}}l@{\hspace{1cm}}l@{\hspace{1cm}}l}
b_0=5				&	b_1=6					&	b_2=6					&	s_{12}=2\\
d_0=0				&	d_1=0					&	d_2=0					&	s_{21}=2\\
c_{00}=1			&	c_{10}=1				&	c_{20}=1				&	\\
c_{01}=1			&	c_{11}=1				&	c_{21}=0				&\\
c_{02}=0			&	c_{12}=0				&	c_{22}=1				&\\
\frakn_{(g,p)}(0)=5	&\frakn_{(g',p'_1)}(0)=1	&	\frakn_{(g',p'_2)}(0)=0	&	K=200
\end{array}
\end{equation}
For these parameters $f_1$ and $f_2$ as defined in \eqref{apparent-fitness} are negative, but, due to the 
cooperation of the two phenotypes, the fitness of the genotype is positive and it 
invades with positive probability as indicated by  the definition \eqref{invasion-fitness-case1}. 
Moreover, both phenotypes appear on a macroscopic level.

\subsection{Interplay of mutation and therapy}\label{sec-brc}

In the previous section we considered the probability of invasion of a mutant when the 
resident population is at an equilibrium given by a stable fixed point.  In the context of therapy, 
there are phases when populations shrink and regrow due to treatment and relapse phenomena. 
In the medical literature, there are frequent allusions to the possibility that such growth cycles 
may induce fixation of a ``super-resistant mutant", see e.g. \cite{FrankRosner12,GilVerGat,Greaves2012}.
It is important to understand whether and under what circumstances such effects may happen. 

Here we show an example where the appearance of a mutant genotype may indeed
be enhanced  under treatment. 
We consider \emph{birth-reducing competition} (BRC) between tumor cells.
In such a case, a large population at equilibrium may encounter fewer births and hence mutations, than a smaller population growing towards its equilibrium size.

Let us discuss in more detail how the birth-reducing competition can have a crucial
effect on the mutation timescale. For the sake of simplicity we consider an example where  the switching rates are 
set to $0$. Consider a melanoma population
$(g,p)$ which is able to mutate to a fitter type of melanoma
$(g',p')$. We allow for one T-cell population attacking the resident melanoma population since this is the simplest scenario where the effect of therapy in this context can be explained. As the presence of TNF-$\alpha$ only influences the switch between
phenotypes, it does not play any role in this example and we can
set the corresponding parameters ($\ell_w$ and $d(w)$) to zero.
If $\mu_g^K\to 0$ as $K\to \infty$
the limiting deterministic system is given by:

%\vspace{1cm}

\begin{align}\label{mut-therapy-system2}
\dot \frakn_{(g,p)} &= \frakn_{{(g,p)}}\left(
b({p})-d({p})-c_b({p},{p}) \frakn_{{(g,p)}}-c_b({p},p')
\frakn_{(g',p')} -t(z,p) \frakn_z\right)\nonumber \\
\dot \frakn_{{(g',p')}} &= \frakn_{{(g',p')}}  \left( b({p'})-d(p')-c_b({{p'},{p'}}) \frakn_{{(g',p')}}-c_b({{p'},p}) \frakn_{{(g,p)}}\right) \nonumber  \\
\dot \frakn_z &=  \frakn_z (b(z,p) \frakn_{{(g,p)}}-d(z))
\end{align}
Note that the mutation term does not appear in the deterministic system and that the difference between birth-reducing competition and usual competition is not visible. The effects we are looking for are intrinsically stochastic and
happen on time-scales that diverge with $K$.

If the
competition is only of
birth-reducing type, then  the total mutation rate of the population of type  $(g,p)$ at time $t$ is
\begin{equation}
\mathfrak m(  \nu_t^K(g,p)):=\mu^K_g\left\lfloor b(p)-c_b(p,p) \nu_t^K(g,p)\right\rfloor_+  \nu_t^K(g,p)K.
\end{equation}

This is a positive and concave function of $ \nu_t^K(g,p)$ on the interval $[0,b(p)/c_b(p,p)]$, see Figure \ref{parabola}.
If the population is at equilibrium (without or before therapy),
meaning $ \nu_t^K(g,p)=\bar \frakn_{(g,p)}=(b(p)-d(p))/c_b(p,p)$, then the time until a mutation occurs is
approximately exponentially distributed with parameter equal to
\begin{equation}
\mu_g^K K \cdot \left(b(p)-{c_b(p,p)}\bar  \frakn_{{(g,p)}}\right)
\bar \frakn_{{(g,p)}} =
\mu^K_g K \cdot d(p) \bar \frakn_{{(g,p)}} .
\end{equation}
If $d(p)$ is smaller than $b(p)/2$, then $\bar \frakn_{{(g,p)}}$ is bigger than $b(p)/2c_b(p,p)$ and $\frakm(\bar{\frakn}_{(g,p)})$ is not maximal. Smaller 
populations, more precisely in between $d(p)/c_b(p,p)$ and $ \bar \frakn_{(g,p)}$,
have a higher total mutation rate. 

The interesting scaling of the mutation rate is \mbox{$K\mu_g^K\to\alpha>0$} as $K\to\infty$.
In this case, there is 
a number of mutations of order one while the population grows by $\mathcal O(K)$ individuals. For lower mutation rates, 
no mutant can be expected before the population reaches its equilibrium, while for higher rates, mutations occur unrealistically fast. Since for $\mu_g^K\to 0$ the mutation term does not appear in the deterministic 
system, the difference between BRC and usual competition is invisible. 

 During therapy, a tumor which is close to equilibrium (similar to Figure  \ref{IMT} (B)) can shrink to a small size (similar to Figure  \ref{IMT} (A)): the introduction of T-cells in the system
 lowers the population size of melanoma, and the total mutation
 rate in the tumor population of type ${(g,p)}$ can be larger during treatment, see Figure  \ref{IMT} (C). 
 This means that treatment could lead to earlier mutations and thereby accelerate the evolution towards more aggressive tumor variants.

\begin{figure}[h!]
	\centering
	\includegraphics[width=.45\textwidth]{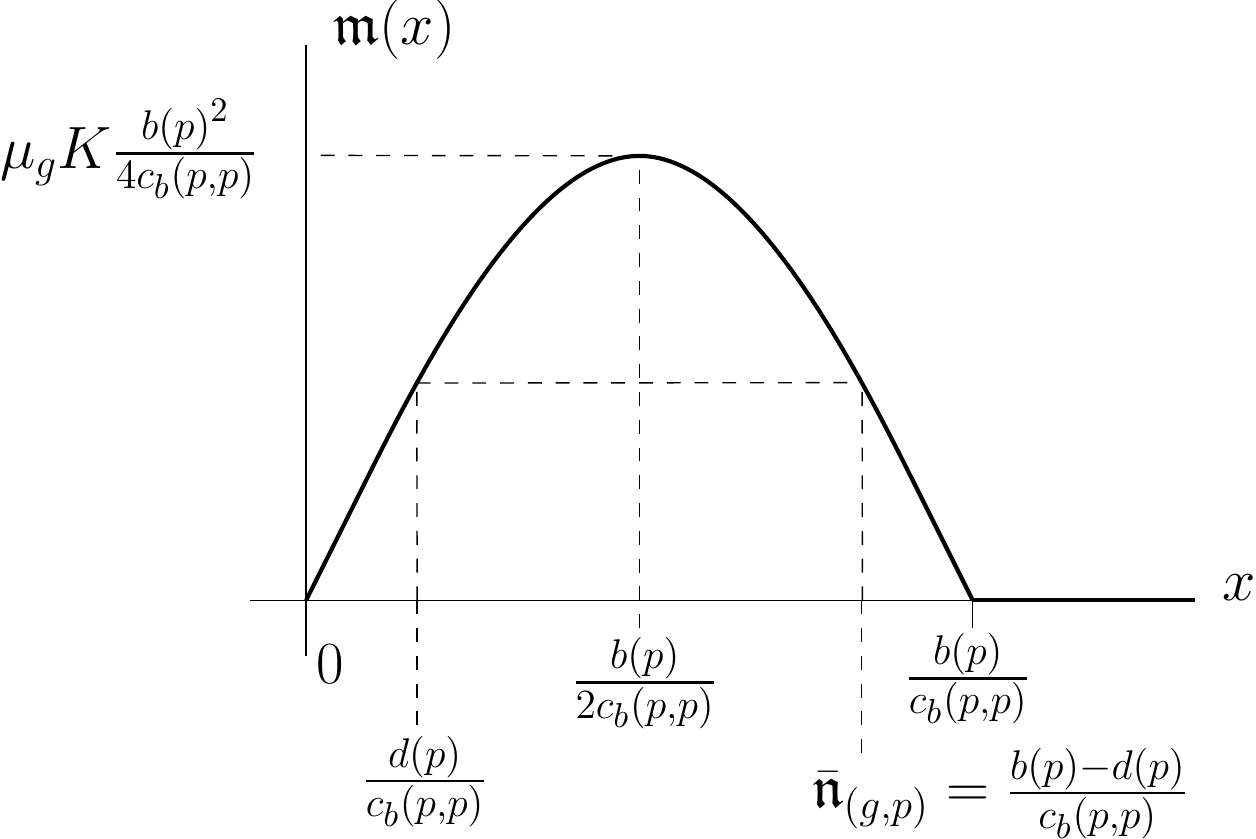}
	\caption
	{Initial total mutation rate of the population $(g,p)$.}
	\label{parabola}
\end{figure}

The simulations are obtained with the following parameters:
\begin{equation}\label{IMT-parameter}
\begin{array}{l@{\hspace{1cm}}l@{\hspace{1cm}}l@{\hspace{1cm}}l}
b(p)=4					&	b(p')=6					&	b(z,p)=20				&	m({(g,p)},{(g',p')})=1\\
d(p)=0.1				&	d(p')=1					&	t(z,p)=10				&	\mu_g=10^{-3}\\
c_b(p,p)=3				&	c_b(p',p)=0.8			&	d(z)=6					&	K=10^3\\
c_b(p,p')=0.8			&	c_b(p',p')=1			&							&\\
\frakn_{{(g,p)}}(0)=1.3	&\frakn_{{(g',p')}}(0)=0	&\frakn_z(0)=0\text{ or }0.1&
\end{array}
\end{equation}

Note that this example provides an interesting situation of interplay
between therapy and mutation. By lowering the melanoma population, the
T-cell therapy actually increases the probability for it to mutate to
a potentially fitter and pathogenic genotype, which is not affected by the T-cells.

\begin{figure}[h!]
	\centering
		\begin{minipage}{.49\textwidth} 
			A \includegraphics[width=\textwidth]{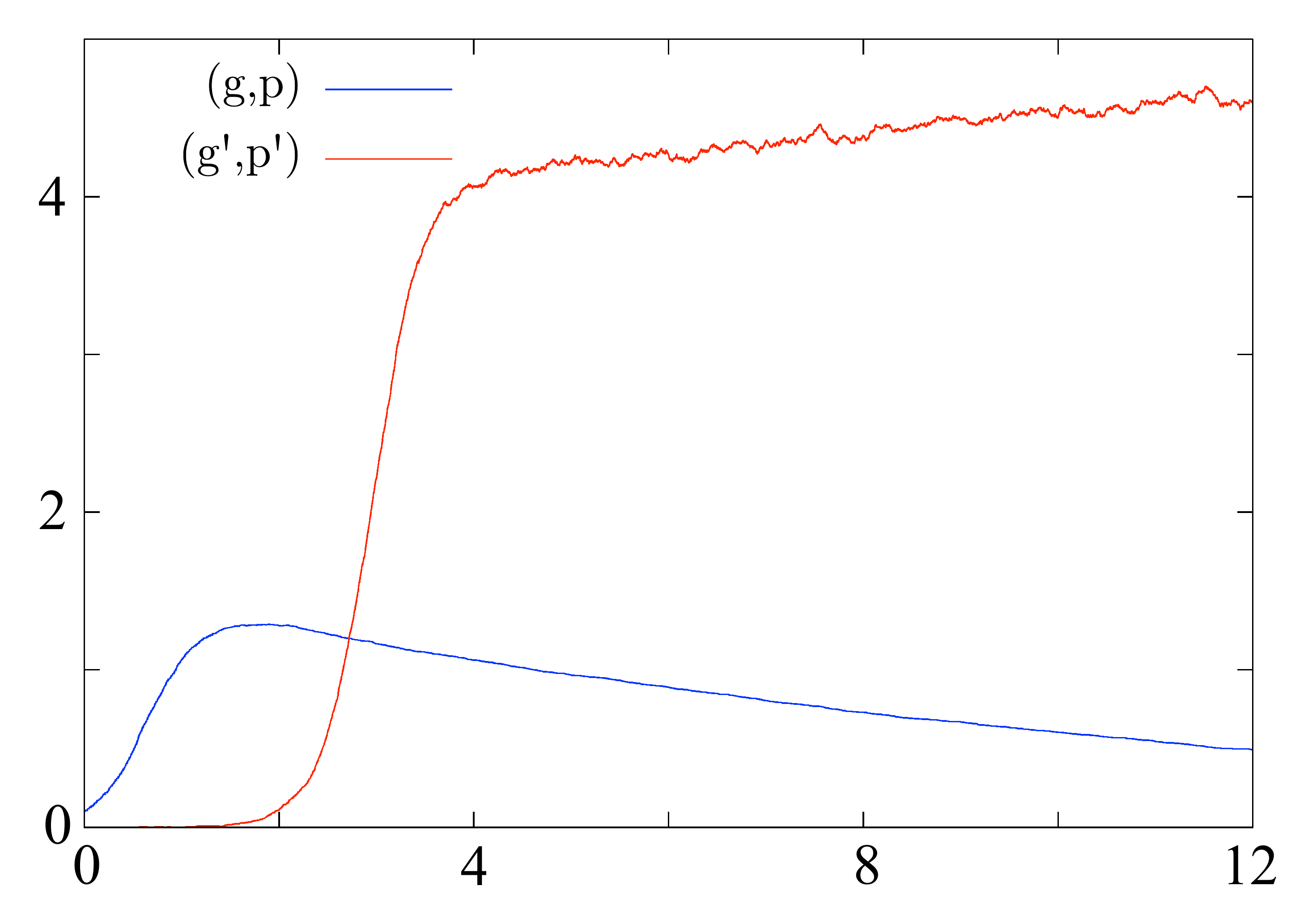}
		\end{minipage}\hfill
		\begin{minipage}{.49\textwidth} 
			B \includegraphics[width=\textwidth]{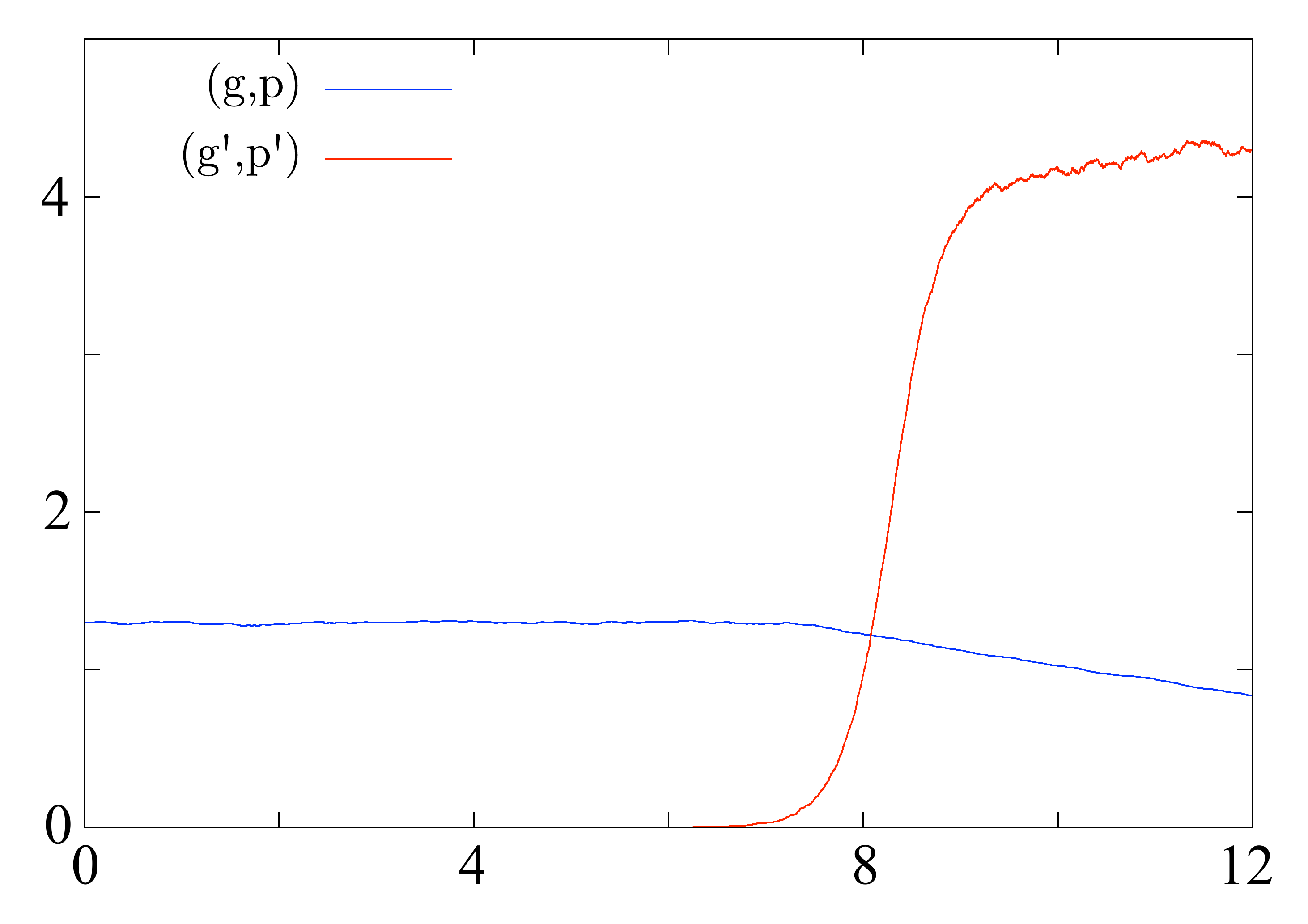}
		\end{minipage}\hfill
		\begin{minipage}{.49\textwidth} 
			C \includegraphics[width=\textwidth]{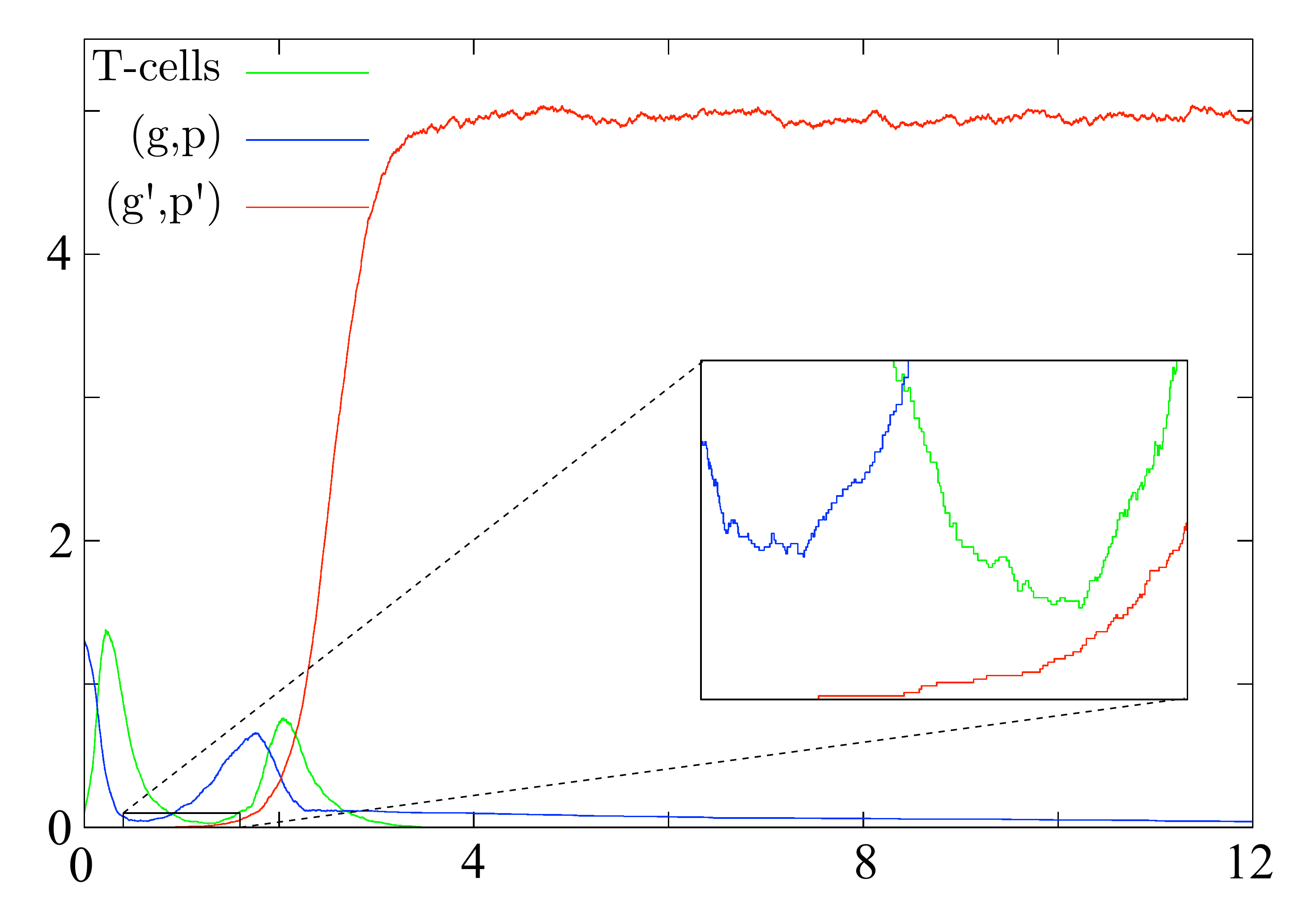}
		\end{minipage}\hfill
		\begin{minipage}{.49\textwidth} 
			D \includegraphics[width=\textwidth]{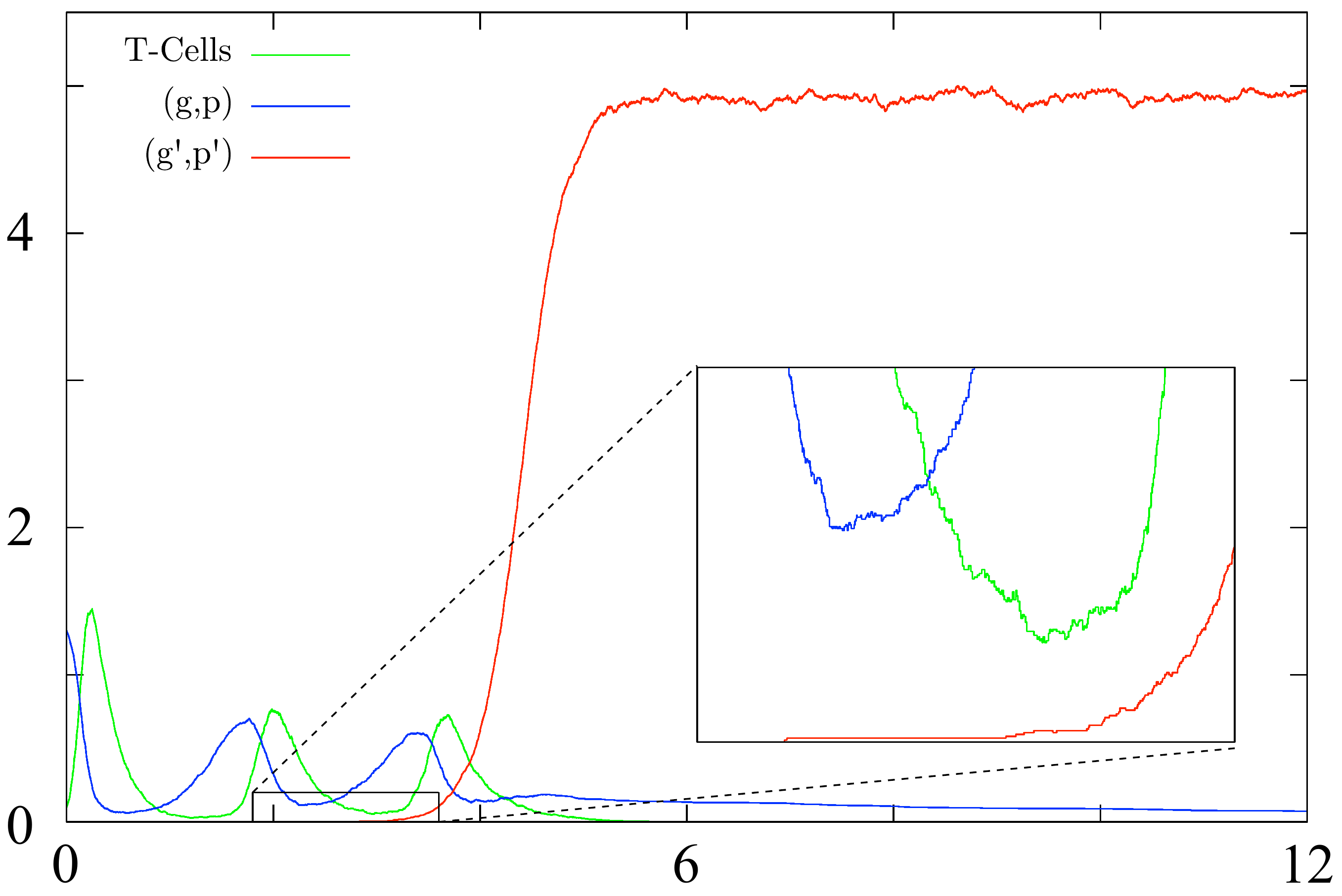}
		\end{minipage}
	\caption{
	{Simulations of mutation events in a population, where competition is acting via birth reduction (parameters are given by \eqref{IMT-parameter}). The number of individuals divided by $1000$ is plotted versus time: 
		Effect for an initial population which is small (A), 
		or at equilibrium (B) 
		or under therapy (C) and (D).
		\label{IMT}}
	}
\end{figure}

\section{Discussion}
Therapy resistance is a major issue in the treatment of advanced stages of cancer. We have proposed a stochastic 
mathematical model that allows to simulate treatment scenarios and applied it to the specific case of immunotherapy of 
melanomas. Comparison to experimental data is so far promising. The models pose challenging new mathematical 
questions, in particular due to the interplay of fast phenotypic switches and rare driver mutations. First numerical results 
point to a significant effect of stochastic fluctuations in the success of therapies. More precise experimental data will be 
needed in the future to fit crucial model parameters. 
While our models describe the actions of individual cells and cytokines, they do not by far resolve the full complexity of 
the biological system. In particular, they do not reflect the spatial structure of the tumour and its microenvironment.  Also, 
the distinction of only two phenotypes of the tumour cells is a simplification. The same is true  for the interaction with other immune cells and cytokines. This reflects on the one hand the limitation due to available experimental data, on the other hand the use of
a model of reduced complexity also makes numerical computations and theoretical understanding of the key phenomena 
feasible. The rates entering as model parameters therefore have to be understood as \emph{effective} parameters, e.g.\ the death rate of T-cells accounts for their natural death as well as the exhaustion phenomenon. In principle it is possible to 
increase the resolution of the model; this, however, increases the number of 
parameters that need to be determined experimentally which would pose a major challenge. Already at the present stage, 
the model parameters are not known well enough and are adjusted to reproduce the experimental findings. Some 
parameters that it would be very useful to see measured precisely are: 
\begin{itemize}
\item birth and death rates of tumour cells, both in
differentiated and dedifferentiated types. Currently these are estimated from the growth rate of the tumour, but this 
yields only the difference of these rates;
\item killing rates of T-cells, both of the differentiated and the dedifferentiated tumour cells;
\item rates of phenotypic switches, both in the absence and the presence of TNF-$\alpha$;
\item death rates of T-cells and their expansion rates when interacting with tumour cells.
\end{itemize}

\noindent
Nevertheless, we see the proposed model as a promising tool to assist the development of improved treatment protocols. Simulations may guide the choice of experiments such that  the number of necessary experiments can be reduced.
The obvious strength of our approach is to model reciprocal interactions and phenotypic state transitions of tumour and immune cells in a heterogeneous and dynamic microenvironment in the context of therapeutic perturbations.

The clinical importance of phenotypic coevolution in response to therapy has been recently documented in patients' samples from melanomas acquiring resistance to MAPK inhibitors \cite{Hugo}. Adaptive activation of bypass survival pathways in melanoma cells was accompanied by the induction of an exhausted phenotype of cytolytic T cells. This has important implications for the combinatorial use of cancer immunotherapy (checkpoint inhibitors like anti PD-1) with respect to scheduling. We envision that our mathematical approach will help to integrate such patient omics data with experimental findings to guide novel strategies. Of note and similar to our previous study, dedifferentiation of melanoma cells was identified as a major mechanism of escape from MAPK inhibitors \cite{Muller, Koni}. We recently dissected the molecular circuitries that control melanoma cell states and showed how melanoma dedifferentiation governs the composition of the immune cell compartment through a cytokine-based crosstalk in the microenvironment \cite{Ries, Holz}. Hence, malignant melanoma is a paradigm for a phenotypic heterogeneous tumour and a future goal is to incorporate this increasing knowledge of melanoma cell plasticity into our method to refine its capability to model complex interactions with immune cells.

Importantly, phenotypic plasticity in response to therapy is a widespread phenomenon and non-small cell lung cancer (NSCLC) represents a prominent example. A subset of NSCLCs harbour activating mutations in the epidermal-growth factor receptor EGFR and respective small molecule inhibitors (EGFRi) are potent first line cancer drugs for this NSCLC subtype. However, tumours invariably relapse and genetic selection of subclones with the second site resistance mutation EGFRT790M is the major event. A substantial number of relapse tumours show remarkable transitions from an NSCLC adenocarcinoma to a neuroendocrine-related small cell lung cancer (SCLC) phenotype \cite{Seq}. Given the recent success of immune checkpoint inhibitors in NSCLC \cite{Bra}, it will be of clinical interest to investigate the phenotypic coevolution of immune cells in the context of NSCLC-SCLC lineage transitions. Again, our mathematical approach could represent a valuable tool to support this research. Finally, our results suggest that stochastic events play an unanticipated critical role in the dynamic evolution of tumours and the emergence of therapy resistance that requires further experimental and clinical investigation.

\newpage

\appendix\section{Pseudo-code}\label{appendix}

The depth diagram of the algorithm we used to generate the simulations in this article is given in Figure \ref{depth-diagram}.
The pseudo-code  is given in Algorithm \eqref{algo}.\\

We use the following notations:
let $\calD$ be some discrete set and $X$ a $\calD$-valued random variable, then $X$ sampled according to the weights $\{w(i),i\in\calD\}$ means that $\bbP(X=i)=w(i)/\sum_{i\in\calD}w(i)$.

\begin{figure}[h!]
	\centering
	\includegraphics[width=\textwidth]{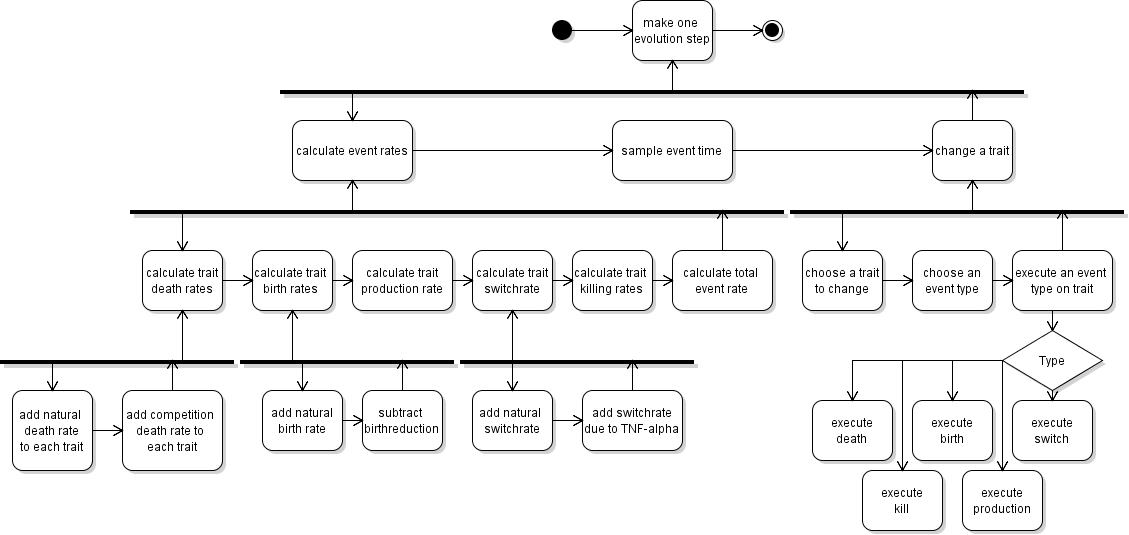}
	\caption
	{Depth diagram of the program.}
	\label{depth-diagram}
\end{figure}

\begin{algorithm}[H]
	\footnotesize
	\DontPrintSemicolon
	\KwData{initial conditions: $\nu^K_0\in\calM^K(\calX)$,  number of iterations: $N_{\rm    Iterations}$,
		parameters described in Section 1 }
	$T_0\leftarrow0$,
	$\nu^K_{T_0}\leftarrow \nu^K_0$, $k\leftarrow0$\;
	\While{$k\leq N_{\rm Iterations}$}{
		\For{$x\in \text{Supp}(\nu^K_{T_k})$}{
			\If{$x={(g,p)}\in\calG\times\calP$}{
				$B(x) \leftarrow K\nu^{K}_{T_k}(g,p)\:\bigg\lfloor b(p)-\sum_{(\tilde g, \tilde
					p)\in\text{Supp}(\nu^K_{T_k})}c_b(p,\tilde p)\nu^K_{T_k}(\tilde g,\tilde
				p)\bigg\rfloor_+$\\
				$D(x) \leftarrow K\nu^{K}_{T_k}(g,p)\:\bigg(
				d(p)+
				\bigg\lfloor b(p)-\sum_{(\tilde g,
					\tilde p)\in\text{Supp}(\nu^K_{T_k})}c_b(p,\tilde
				p)\nu^K_{T_k}(\tilde g,\tilde p)\bigg\rfloor_-$\\\hspace{6cm}$+
				\sum_{(\tilde g, \tilde p)\in\text{Supp}(\nu^K_{T_k})}
				c(p,\tilde p) \nu^K_{T_k} (\tilde g,\tilde p)
				\bigg),$\\[0.1em]
				$T(x) \leftarrow  K\nu^K_{T_k}(g,p)\:\sum_{z\in\text{Supp}(\nu^K_{T_k})}  t(z,p)\nu^{K}_{T_k}(z)$,\quad
				$S(x) \leftarrow K\nu^{K}_{T_k}(g,p)\:\sum_{\tilde p\in\calP} \left(s^g(p,\tilde p)+
				\sum_{w\in\text{Supp}(\nu^K_{T_k})}s_w^g(p,\tilde p) \nu^K_{T_k}(w)\right)$,\\
				$P(x)\leftarrow 0,$
			}
			\If{$x={z}\in\calZ$}{
				$B(x) \leftarrow K\nu^{K}_{T_k}(z)\:b(z),$\quad
				$D(x) \leftarrow K\nu^{K}_{T_k}(z)\:d(z),$\quad $T(x)\leftarrow 0$,\quad
				$S(x) \leftarrow 0$
				$P(x) \leftarrow K\nu^{K}_{T_k}(z) \:\sum_{(g,p)\in\text{Supp}(\nu^K_{T_k})} b(z,p)\nu^K_{T_k}(g,p)$,\quad 
			}
			\If{$x={w}\in\calW$}{ 
				$B(x) \leftarrow 0, $\quad$D(x) \leftarrow K \nu^{K}_{T_k}(w)\:d(w),$\quad 
				$T(x) \leftarrow 0,$\quad $S(x) \leftarrow 0$,\quad
				$P(x) \leftarrow 0,$ \quad 
				
			}
			$\text{TotalTraitRate}(x) \leftarrow B(x)+D(x)+T(x)+P(x)+S(x)$
		}
		TotalEventRate $\leftarrow\sum_{x\in \text{Supp}(\nu^K_{T_k})}$TotalTraitRate$(x)$\;
		Sample $t\sim\calE\rm{xp}(\rm TotalEventRate)$\;
		$T_{k+1}\leftarrow T_k+t$\;
		Sample $ x_{\text{chosen}}\in\calX$ according to the weights  $\left\{\text{TotalTraitRate}(x), x\in \text{Supp}(\nu^K_{T_k}))\right\}$.    \;
		Sample the event $E\in\{\text{Birth}, \text{Death},\text{Therapy},\text{Production},\text{Switch}\}$  according to the weights 
		$\{B(x_{\text{chosen}}),D(x_{\text{chosen}}),T(x_{\text{chosen}}),P(x_{\text{chosen}}), S(x_{\text{chosen}})\}$\;
		\Case{$E= \textbf{Birth}$}{
			\If{$x_{\text{chosen}}={(g,p)}\in\calG\times\calP$}{
				Sample $ {B}\in\{\text{No Mutation},\text{Mutation}\}$ according to  $\{1-\mu_g,\mu_g\}$\;
				\Case{${B}=\textbf{No Mutation}$}{
					$ \nu^K_{T_{k+1}}\leftarrow \nu^K_{T_k}+\frac 1 K \delta_{x_{\text{chosen}}} $ }
				\Case{${B}= \textbf{Mutation}$}{
					Sample $(\tilde g, \tilde p)$ according to $m((g,p),(\tilde g,\tilde p))$\;
					$ \nu^K_{T_{k+1}}\leftarrow \nu^K_{T_k}+\frac 1 K \delta_{(\tilde g, \tilde p)} $
				}}
				\Else{ $ \nu^K_{T_{k+1}}\leftarrow \nu^K_{T_k}+\frac 1 K \delta_{x_{\text{chosen}}} $ }}
			\Case{$E=$ \textbf{Death}}{
				$\nu^K_{T_{k+1}}\leftarrow \nu^K_{T_k}-\frac 1 K \delta_{x_{\text{chosen}}} $ }
			\Case{$E=$ \textbf{Therapy}} {Note that $x_{\text{chosen}}=(g,p)$ for some $(g,p)\in\calG\times\calP$ in this case.\;
				Sample $z\in\calZ$ according to the weights 
				$\left\{ t(z,p)\nu^K_{T_k}(z) ,z\in\text{Supp}(\nu^K_{T_k})\cap\calZ\right\}$\\[0.2em]
				
				$\nu^K_{T_{k+1}}\leftarrow \nu^K_{T_k}-\frac 1 K \delta_{(g,p)} +\sum_{w\in\calW}\ell^{\text{kill}}_w(z,p)\frac 1K  \delta_{w}  $ 
			}
			\Case{$E=$ \textbf{Production}}{          
				Sample $(g,p)\in\text{Supp}(\nu^K_{T_k})$ according to the weights $\left\{{ b(x_{\text{chosen}},p)\nu^K_{T_k}(g,p) },(g,p)\in\text{Supp}(\nu^K_{T_k})\right\}$\;
				$\nu^K_{T_{k+1}}\leftarrow \nu^K_{T_k}+\frac 1 K \delta_{x_{\text{chosen}}} +\sum_{w\in\calW}\ell^{\text{prod}}_w(x_{\text{chosen}},p)\frac 1K  \delta_{w}  $ 
			}
			\Case{$E=$ \textbf{Switch}} 
			{ Note that $x_{\text{chosen}}=(g,p)$ for some $(g,p)\in\calG\times\calP$ in this case.\;
				Sample $\tilde p\in\calP$ according to $\left\{ {s^g(p,\tilde p)+ \sum_{w\in\text{Supp}(\nu^K_{T_k})}
					s_w^g(p,\tilde p) \nu^K_{T_k}(w)},\tilde p\in\calP\right\}$ \;
				$\nu^K_{T_{k+1}}\leftarrow \nu^K_{T_k}-\frac 1 K \delta_{(g,p)} +\frac 1 K \delta_{(g,\tilde p)} $
			}   
			$ k=k+1$}
		\caption{\small
			Pseudo-code of the Gillespie  algorithm used for generating
			the figures. }
		\label{algo}
	\end{algorithm}

%	\newpage

\newpage

\bibliography{bibliography}\vspace{1em}

\begin{thebibliography}{10}

\bibitem{Nowell:sf}
Nowell, P.\ C.~
\newblock  The clonal evolution of tumor cell populations.
\newblock \emph{Science} \textbf{194,} 23--28 (1976).

\bibitem{Kuznetsov:1994fk}
Kuznetsov, V., Makalkin, I., Taylor, M. \& Perelson, A.
\newblock  Nonlinear dynamics of immunogenic tumors: parameter estimation
  and global bifurcation analysis.
\newblock \emph{Bull.\ Math.\ Biol.} \textbf{56,} 295--321 (1994).

\bibitem{Eftimie:2011uq}
Eftimie, R., Bramson, J. \& Earn D.
\newblock  Interactions between the immune system and cancer: a brief
  review of non-spatial mathematical models.
\newblock \emph{Bull.\ Math.\ Biol.} \textbf{73,} 2--32 (2011).

\bibitem{hanahan2}
Hanahan, D. \& Weinberg, R.\ A.
\newblock  Hallmarks of cancer: the next generation.
\newblock \emph{Cell} \textbf{144,} 646--674 (2011).

\bibitem{GilVerGat}
Gillies, R.\ J., Verduzco, D. \& Gatenby, R.\ A.
\newblock  Evolutionary dynamics of carcinogenesis and why targeted
  therapy does not work.
\newblock \emph{Nat.\ Rev.\ Cancer} \textbf{12,} 487--493 (2012).

\bibitem{Holzel:2013ys}
H\"olzel, M., Bovier, A. \& T\"uting, T.
\newblock Plasticity of tumour and immune cells: a source of
  heterogeneity and a cause for therapy resistance?
\newblock \emph{Nat.\ Rev.\ Cancer} \textbf{13,} 365--376 (2013).

\bibitem{MarAlmPol}
Marusyk, A., Almendro, V. \& Polyak, K.
\newblock  Intra-tumour heterogeneity: a looking glass for cancer?
\newblock \emph{Nat.\ Rev.\ Cancer} \textbf{12,} 323--334 (2012).

\bibitem{CorBis}
Correia, A.\ L. \& Bissell, M.\ J.
\newblock  The tumor microenvironment is a dominant force in multidrug
  resistance.
\newblock \emph{Drug.\ Resist.\ Updat.} \textbf{15}, 39--49 (2012).

\bibitem{Landsberg:2012vn}
Landsberg, J. {et~al.}
\newblock  Melanomas resist t-cell therapy through inflammation-induced
  reversible dedifferentiation.
\newblock \emph{Nature} \textbf{490,} 412--416 (2012).

\bibitem{metz-geritz-al-96}
Metz, J.\ A.\ J., Geritz, S.\ A.\ H., Mesz{\'e}na, G., Jacobs, F.\ J.\ A. \& van Heerwaarden, J.\ S.
\newblock Adaptive dynamics, a geometrical study of the consequences of nearly faithful reproduction in \emph{Stochastic And Spatial Structures Of Dynamical
  Systems}, (eds van Strien, S.\ J. \& Verduyn Lunel, S.\ M.), 183--231 %Koninklijke Nederlandse Akademie van
%  Wetenschappen Eerste Reeks, 45
\newblock (Konink.\ Nederl.\ Akad.\ Wetensch.\ Verh.\ Afd.\ Natuurk.\ Eerste Reeks, 1995).

\bibitem{BolPac1}
Bolker, B. \& Pacala, S.\ W.
\newblock  Using moment equations to understand stochastically driven
  spatial pattern formation in ecological systems.
\newblock \emph{Theor.\ Popul.\ Biol.} \textbf{52,} 179--197 (1997).

\bibitem{BolPac2}
Bolker, B.\ M. \& Pacala, S.\ W.
\newblock Spatial moment equations for plant competition: understanding
  spatial strategies and the advantages of short dispersal.
\newblock \emph{Am.\ Nat.} \textbf{153,} 575--602 (1999).

\bibitem{DieLaw}
Dieckmann, U. \& Law, R.
\newblock  Moment approximations of individual-based models in \emph{The Geometry Of Ecological Interactions: Simplifying
  Spatial Complexity}, (eds Dieckmann, U., Law, R. \& Metz, J.\ A.\ J.)
\newblock Ch.14,  252--270 (Cambridge University Press, 2000).

\bibitem{champagnat-ferriere-al-01}
Champagnat, N., Ferri\`ere, R. \& Ben~Arous, G.
\newblock  The canonical equation of adaptive dynamics: a mathematical
  view.
\newblock \emph{Selection} \textbf{2,} 73--83 (2001).

\bibitem{Cha2006}
Champagnat, N.
\newblock A microscopic interpretation for adaptive dynamics trait
  substitution sequence models.
\newblock \emph{Stoch.\ Process.\ Their Appl.} \textbf{116,} 1127--1160  (2006).

\bibitem{ChaFerMel}
Champagnat, N., Ferri{{\`e}}re, R. \& M{{\'e}}l{{\'e}}ard, S.
\newblock  From individual stochastic processes to macroscopic models in
  adaptive evolution.
\newblock \emph{Stochastic Models} \textbf{24,} 2--44 (2008).

\bibitem{BovWan2013}
Bovier, A. \& Wang, S.\ D.
\newblock Trait substitution trees on two time scales analysis.
\newblock \emph{Markov Process.\ and Related Fields} \textbf{19,} 607--642  (2013).

\bibitem{ChaJabMel}
Champagnat, N., Jabin, P.\ E. \& M{{\'e}}l{{\'e}}ard, S.
\newblock Adaptation in a stochastic multi-resources chemostat model.
\newblock \emph{J.\ Math.\ Pures Appl.} \textbf{101,} 755--788 (2014).

\bibitem{CHLM}
Costa, M., Hauzy, C., Loeuille, N. \& M{\'e}l{\'e}ard, S.
\newblock  Stochastic eco-evolutionary model of a prey-predator
  community.
\newblock \emph{J.\ Math.\ Biol.} \textbf{72,} 573--622 (2016).

\bibitem{Gupta}
Gupta, P.\ B. {et~al.}
\newblock  Stochastic state transitions give rise to phenotypic
  equilibrium in populations of cancer cells.
\newblock \emph{Cell} \textbf{146,} 633--644 (2011).

\bibitem{Durrett15}
Durrett, R.
\newblock  \emph{Branching Process Models Of Cancer} %, Stochastics in
%  Biological Systems Vol.{} 1.1
\newblock (Springer, 2015).

\bibitem{Boz}
Bozic, I. {et~al.}
\newblock  Accumulation of driver and passenger mutations during tumor
  progression.
\newblock \emph{Proc.\ Natl.\ Acad.\ Sci.\ U.~S.~A.} \textbf{107,} 18545--18550 (2010).

\bibitem{AntKrap}
Antal, T. \& Krapivsky, P.
\newblock Exact solution of a two-type branching process: models of
  tumor progression.
\newblock \emph{J.\ Stat.\ Mech.}, P08018 (2011).

\bibitem{Durrett13}
Durrett, R.
\newblock  Cancer modeling: a personal perspective.
\newblock \emph{Not.\ Am.\ Math.\ Soc.} \textbf{60,} 304--309 (2013).

\bibitem{EthKur}
Ethier, S.\ N. \& Kurtz, T.\ G.
\newblock  \emph{Markov processes}, % Wiley Series in Probability and
  %Mathematical Statistics: Probability and Mathematical Statistics
\newblock (John Wiley \& Sons, Inc., 1986).

\bibitem{BBC2015} Baar, M., Bovier, A.\ \& Champagnat, N.
\newblock From stochastic, individual-based models to the canonical equation of adaptive dynamics - In one step.
\newblock \emph{arXiv:1505.02421} (2015).

%@article{BBC2015,
%	Author = {Martina Baar and Anton Bovier and Nicolas Champagnat},
%	Date-Added = {2015-07-24 15:33:46 +0000},
%	Date-Modified = {2015-07-24 15:35:28 +0000},
%	Eprint = {arXiv:1505.02421 [math.PR]},
%	Month = {05},
%	Title = {From stochastic, individual-based models to the canonical equation of adaptive dynamics - In one step},
%	Url = {http://arxiv.org/abs/1505.02421},
%	Year = {2015},
%	Bdsk-Url-1 = {http://arxiv.org/abs/1505.02421}}

\bibitem{ChaMel2011}
Champagnat, N. \& M{\'e}l{\'e}ard, S.
\newblock  Polymorphic evolution sequence and evolutionary branching.
\newblock \emph{Probab.\ Theory Relat.\ Fields} \textbf{151,} 45--94 (2011).

\bibitem{FrankRosner12}
Frank, S.\ A. \& Rosner, M.\ R.
\newblock  Nonheritable cellular variability accelerates the evolutionary
  processes of cancer.
\newblock \emph{PLoS Biol.} \textbf{10,} e1001296 (2012).

\bibitem{Greaves2012}
Greaves, M. \& Maley, C.\ C.
\newblock Clonal evolution in cancer.
\newblock \emph{Nat.\ Rev.} \textbf{481,} 306--313  (2012).

\bibitem{FouMel2004}
Fournier, N. \& M{\'e}l{\'e}ard, S.
\newblock  A microscopic probabilistic description of a locally regulated
  population and macroscopic approximations.
\newblock \emph{Ann.\ Appl.\ Probab.} \textbf{14,} 1880--1919 (2004).

\bibitem{KesSte1}
Kesten, H. \& Stigum, B.\ P.
\newblock  A limit theorem for multidimensional {G}alton-{W}atson
  processes.
\newblock \emph{Ann.\ Math.\ Stat.}  \textbf{37,} 1211--1223 (1966).

\bibitem{KesSte2}
Kesten, H. \& Stigum, B.\ P.
\newblock  Additional limit theorems for indecomposable multidimensional
  {G}alton-{W}atson processes.
\newblock \emph{Ann.\ Math.\ Stat.} \textbf{37,} 1463--1481 (1966).

\bibitem{KesSte3}
Kesten, H. \& Stigum, B.\ P.
\newblock  Limit theorems for decomposable multi-dimensional
  {G}alton-{W}atson processes.
\newblock \emph{J.\ Math.\ Anal.\ Appl.} \textbf{17,} 309--338 (1967).

\bibitem{AthNey}
Athreya, K.\ B. \& Ney, P.\ E.
\newblock  \emph{Branching Processes}
\newblock (Springer, 1972).
%\newblock (Dover Publications, Inc., 2004).
%\newblock Reprint of the 1972 original [Springer, New York; MR0373040].


\bibitem{Hugo} Hugo, W.\ et al. 
\newblock Non-genomic and immune evolution of melanoma acquiring MAPKi resistance.
\newblock \emph{Cell} \textbf{162,} 1271--1285 (2015).

\bibitem{Muller} M\"uller, J.\ et al. 
\newblock Low MITF/AXL ratio predicts early resistance to multiple targeted drugs in melanoma.
\newblock \emph{Nat.\ Commun.} \textbf{5,} 5712 (2014).


\bibitem{Koni} Konieczkowski, D.\ J.\ et al. 
\newblock A melanoma cell state distinction influences
sensitivity to MAPK pathway inhibitors.
\newblock \emph{Cancer Discov.} \textbf{4,} 816--827 (2014).


\bibitem{Ries} Riesenberg, S.\ et al. 
\newblock MITF and c-Jun antagonism interconnects melanoma dedifferentiation with pro-inflammatory cytokine responsiveness and myeloid cell recruitment.
\newblock \emph{Nat.\ Commun.} \textbf{6,} 8755 (2015).

\bibitem{Holz} H\"olzel, M.\ et al. 
\newblock A preclinical model of malignant peripheral nerve sheath tumor-like melanoma is characterised by infiltrating mast cells.
\newblock \emph{Cancer Res.} \textbf{76,} 251--261 (2016).


\bibitem{Seq} Sequist, L.\ V.\ et al. 
\newblock Genotypic and histological evolution of lung cancers acquiring resistance to EGFR inhibitors.
\newblock \emph{Sci.\ Transl.\ Med.}  \textbf{3,} (2011).


\bibitem{Bra} Brahmer, J.\ et al.
\newblock Nivolumab versus docetaxel in advanced squamous-cell non-small-cell lung cancer.
\newblock \emph{N.\ Engl.\ J.\ Med.} \textbf{373,} 123--135 (2015).
\end{thebibliography}

%
%
%\bibliographystyle{habbrv}
%%\bibliographystyle{aps-nameyear}
%\bibliography{bibliography}
\end{document}